\tikzstyle{rstate}=[state,ellipse]
\tikzset{>={latex}}
\title{Splitting Spanner Atoms:\\ A Tool for Acyclic Core Spanners} 
\titlerunning{Splitting Spanner Atoms}
\author{Dominik D. Freydenberger}{Loughborough University, Loughborough, United Kingdom}{}{0000-0001-5088-0067}{Supported by EPSRC grant EP/T033762/1.}
\author{Sam M. Thompson}{Loughborough University, Loughborough, United Kingdom}{}{https://orcid.org/0000-0002-3476-6739}{}
\authorrunning{D.\,D.\ Freydenberger and S.\,M.\ Thompson}
\keywords{Document spanners, information extraction, conjunctive queries}
\mathchardef\mhyphen="2D
\newcommand{\atom}{\mathsf{atom}}
\newcommand{\decomp}{\Psi}
\newcommand{\formulaSize}[1]{| #1 |}
\newcommand{\cqhead}[1]{\mathsf{Ans}(#1) \leftarrow}
\newcommand{\query}{P}
\newcommand{\conclog}{2\mathsf{FC \mhyphen CQ}}
\newcommand{\kconclog}[1]{#1\mathsf{FC \mhyphen CQ}}
\newcommand{\labelFunction}{\tau}
\newcommand{\brac}{\mathsf{BPat}}
\newcommand{\noconstr}{}
\newcommand{\sercq}{\mathsf{SERCQ}}
\newcommand{\cpfc}{\mathsf{FC \mhyphen CQ}}
\newcommand{\cpfcreg}{\mathsf{FC[REG] \mhyphen CQ}}
\newcommand{\concreg}{2\cpfcreg}
\newcommand{\subs}{\sigma}
\newcommand{\dynfo}{\ifmmode{\mathsf{DynFO}}\else{\textsf{DynFO}}\xspace\fi}
\newcommand{\dynprop}{\ifmmode{\mathsf{DynPROP}}\else{\textsf{DynPROP}}\xspace\fi}
\newcommand{\dyncq}{\ifmmode{\mathsf{DynCQ}}\else{\textsf{DynCQ}}\xspace\fi}
\newcommand{\dynucq}{\ifmmode{\mathsf{DynUCQ}}\else{\textsf{DynUCQ}}\xspace\fi}
\newcommand{\bigO}{\mathcal{O}}
\newcommand{\np}{\mathsf{NP}}
\newcommand{\pspace}{\mathsf{PSPACE}}
\newcommand{\select}{\zeta}
\newcommand{\join}{\bowtie}
\newcommand{\core}{\mathsf{core}}
\newcommand{\synrgx}{\mathsf{RGX_{sync}}}
\newcommand{\rgx}{\mathsf{RGX}}
\newcommand{\spn}[1]{[#1\rangle}
\newcommand{\fun}[1]{\llbracket #1 \rrbracket}
\newcommand{\spanner}[1]{\llbracket #1 \rrbracket}
\newcommand{\SVars}[1]{{\mathsf{Vars}\left(#1\right)}} 
\newcommand{\strucbra}[1]{( #1 )}
\newcommand{\bind}[2]{#1\{#2\}}
\newcommand{\strucvar}{\mathfrak{u}}
\newcommand{\fvar}{\mathsf{free}}
\newcommand{\var}{\mathsf{var}}
\newcommand{\splog}{\ifmmode{\mathsf{SpLog}}\else{\textsf{SpLog}}\xspace\fi}
\newcommand{\splogneg}{\ifmmode{\mathsf{SpLog}^{\neg}}\else{\textsf{SpLog}$^{\neg}$}\xspace\fi}
\newcommand{\dpcsplog}{\ifmmode{\mathsf{DPC}}\else{\textsf{DPC}}\xspace\fi}
\newcommand{\pcsplog}{\ifmmode{\mathsf{PC}}\else{\textsf{PC}}\xspace\fi}
\newcommand{\splogrx}{\ifmmode{\mathsf{SpLog_{rx}}}\else{\textsf{SpLog\textsubscript{rx}}}\xspace\fi}
\newcommand{\splognegrx}{\ifmmode{\mathsf{SpLog_{rx}^{\neg}}}\else{\textsf{SpLog$_{\mathsf{rx}}^{\neg}$}}\xspace\fi}
\newcommand{\dpcsplogrx}{\ifmmode{\mathsf{DPC{rx}}}\else{\textsf{DPC\textsubscript{rx}}}\xspace\fi}
\newcommand{\pcsplogrx}{\ifmmode{\mathsf{PC_{rx}}}\else{\textsf{PC\textsubscript{rx}}}\xspace\fi}
\newcommand{\ECrtext}{\textsf{EC\textsuperscript{reg}}}
\newcommand{\EC}{\ifmmode{\mathsf{EC}}\else{\textsf{EC}}\xspace\fi}
\newcommand{\ECr}{\ifmmode{\mathsf{EC^{reg}}}\else{\ECrtext}\xspace\fi}
\newcommand{\fc}{\mathsf{FC}}
\newcommand{\fcreg}{\mathsf{FC}[\mathsf{REG}]}
\newcommand{\regconst}{\mathbin{\dot{\in}}}
\newcommand{\logeq}{\mathbin{\dot{=}}}
\newcommand{\fo}{\mathsf{FO}}
\newcommand{\domain}{\mathsf{Dom}}
\newcommand{\cq}{\ifmmode{\mathsf{CQ}}\else{\textsf{CQ}}\xspace\fi}
\newcommand{\ucq}{\ifmmode{\mathsf{UCQ}}\else{\textsf{UCQ}}\xspace\fi}
\newcommand{\lang}{\mathcal{L}}
\newcommand{\emptyword}{\varepsilon}
\newcommand{\true}{\ifmmode{\mathsf{True}}\else{\textsf{True}}\xspace\fi}
\newcommand{\false}{\ifmmode{\mathsf{False}}\else{\textsf{False}}\xspace\fi}
\newcommand{\union}{\mathrel{\cup}}
\newcommand{\intersect}{\mathrel{\cap}}
\newcommand{\biglor}{\bigvee}
\newcommand{\powerset}[1]{\mathcal{P}(#1)}
\newcommand{\df}{:=}
\newcommand{\subword}{\sqsubseteq}
\newcommand{\ie}{i.\,e.\xspace}
\newcommand{\eg}{e.\,g.\xspace}
\begin{document}
\maketitle
\begin{abstract}
This paper investigates regex $\cq$s with string equalities ($\sercq$s), a subclass of core spanners. 
As shown by Freydenberger, Kimelfeld, and Peterfreund (PODS 2018), these queries are intractable, even if restricted to acyclic queries. 
This previous result defines acyclicity by treating regex formulas as atoms. 
In contrast to this, we propose an alternative definition by converting $\sercq$s into $\cpfc$s -- conjunctive queries in $\fc$, a logic that is based on word equations. 
We introduce a way to decompose word equations of unbounded arity into a conjunction of binary word equations. 
If the result of the decomposition is acyclic, then evaluation and enumeration of results become tractable. 
The main result of this work is an algorithm that decides in polynomial time whether an $\cpfc$ can be decomposed into an acyclic $\cpfc$. 
We also give an efficient conversion from synchronized $\sercq$s to $\cpfc$s with regular constraints. 
As a consequence, tractability results for acyclic relational $\cq$s directly translate to a large class of $\sercq$s. 
\end{abstract}
\section{Introduction}\label{sec:intro}
Document spanners were introduced by Fagin, Kimelfeld, Reiss, and Vansummeren~\cite{fag:spa} as a formalization of AQL, an information extraction query language used in IBM's SystemT. 
Informally, they can be described in two steps. 
First, so-called \emph{extractors} convert an input document, a word over a finite alphabet, into relations of so-called \emph{spans}. 
We assume the extractors to be \emph{regex formulas} (as described in~\cite{fag:spa}), which are regular expressions with capture variables. 
Consider the following example of a regex formula
\[\gamma(x) \df  \Sigma^* \cdot \bind{x}{(\mathtt{EBDT})\lor(\mathtt{ICDT})} \cdot \Sigma^* . \]
Given some input word, $\gamma(x)$ can be used to extract a unary relation of spans such that each span represents a factor of the input word that is either ``$\mathtt{EBDT}$'' or ``$\mathtt{ICDT}$''.

The second step is that the extracted relations are combined using a relational algebra. Classes of spanners can be defined by the choice of relational operators. 
\emph{Regular spanners} allow for union~$\union$, projection~$\pi$, and natural join~$\join$. 
Depending on how they are represented, regular spanners have been shown to be efficient. 
For example, if a regular spanner is given as a so-called vset-automaton, results can be enumerated with constant delay after linear time preprocessing~\cite{florenzano2018constant, amarilli2020constant}. 
However, if a regular spanner is given as a join of regex formulas, evaluation is intractable --
as shown in~\cite{freydenberger2018joining}, evaluation for spanners of the form $\query \df \pi_\emptyset ( \gamma_1 \join \gamma_2 \cdots \join \gamma_n)$ is $\np$-complete, 
even if $\query$ is acyclic.

\emph{Core spanners} extend regular spanners by allowing equality selection~$\select^=$, which checks whether two (potentially different) spans represent the same factor of the input document.
Even when core spanners are restricted to queries of the form $\pi_\emptyset \select^=_{x_1,y_1} \cdots \select^=_{x_m, y_m} \gamma$
for a single regex formula $\gamma$, the evaluation problem is $\np$-complete~\cite{fre:doc}. Therefore, both joins and equalities introduce computational hardness. 

Regex $\cq$s can be understood as the spanner version of relational $\cq$s, which are a central topic in database theory.
In each case, a conjunctive query is a  projection over a join of atoms.
Apart from the setting, the key difference is that while the tables for relational $\cq$s are usually part of the input, 
the tables for regex $\cq$s are defined implicitly through the regex formulas. 
Hence, while one could extract these tables and then perform a standard \cq over the extractions, the number of tuples in the materialized relations may be exponential. 
As a consequence, tractable restrictions on relational queries (such as \emph{acyclic $\cq$s}) do not lead to tractable fragments of regex $\cq$s~\cite{freydenberger2018joining}.

So-called $\sercq$s extend regex $\cq$s by also allowing string equality, thus allowing us to examine both previously discussed sources of intractability.
Consider the following~$\sercq$
\[\query \df \pi_{x,y} \; \select^=_{x,x'} \left( \gamma_{\mathsf{sen}}(z) \join \gamma_{\mathsf{prod}}(x) \join \gamma_{\mathsf{pos}}(y) \join \gamma_{\mathsf{factors}}(x, x', z) \join \gamma_{\mathsf{factor}}(y,z)  \right), \]
where we assume $\gamma_{\mathsf{sen}}$ extracts sentences,
 $\gamma_{\mathsf{prod}}$ extracts product names, $\gamma_{\mathsf{pos}}$ extracts positive sentiments (such as ``enjoyed''), and $\gamma_{\mathsf{factors}}(x,x',z)$ and $\gamma_{\mathsf{factor}}(y,z)$ ensure that $x$ and $x'$ are successive (but not necessarily consecutive) factors of $z$, and $y$ is a factor of $z$ respectively. 
Therefore, $\query$ extracts spans representing products that are mentioned twice within a sentence, along with a positive sentiment that appears in the same sentence.

Syntactic restrictions on conjunctive queries have been incredibly fruitful for finding tractable fragments. A well known result of Yannakakis~\cite{YannakakisAlgorithm} is that for \emph{acyclic} conjunctive queries, evaluation can be solved in polynomial time. Further research on the complexity of acyclic conjunctive queries~\cite{gottlob2001complexity} and the enumeration of results for acyclic conjunctive queries~\cite{bagan2007acyclic} has shown the efficacy of this restriction. On the other hand, for document spanners, such syntactic restrictions are yet to unlock tractable fragments. 

To address this gap, we consider a different approach and represent $\sercq$s as a conjunctive query fragment of the logic $\fcreg$, introduced by Freydenberger and Peterfreund~\cite{frey2019finite}. 
This logic is based on word equations, regular constraints, and first-order logic connectives. 
Consider the following $\fcreg$ conjunctive query 	
\[\varphi \df \cqhead{x, y} (z \logeq z_2 \cdot x \cdot z_3 \cdot x \cdot z_4) \land (z \logeq z_5\cdot y \cdot z_6)  \land  (z \regconst \gamma_{\mathsf{sen}}) \land (x \regconst \gamma_{\mathsf{prod}}) \land (y \regconst \gamma_{\mathsf{pos}}).\] 
If $\gamma_{\mathsf{sen}}$ is a regular expression that accepts sentences, $\gamma_{\mathsf{prod}}$ accepts a product name, and $\gamma_{\mathsf{pos}}$ accepts a positive sentiment, then $\varphi$ is ``equivalent'' to the previously given $\sercq$. 
They are not equivalent in a strict sense -- a key difference being that $\sercq$s reason over spans, whereas $\cpfcreg$s reason over factors of the input words. 
Reasoning over words does bring some advantages: 
For example, $\varphi$ simply uses relations of words (for example, $\gamma_\mathsf{prod}$) encoded as a regular expression, and if we wanted to do something analogous for regex-formulas, we would first have to extract the corresponding relation of spans. 

When dealing with word equations, we run into an issue that we already encountered for regex formulas: Their relations may contain an exponential number of tuples.
This is due to the unbounded arity of word equations. 
However, an $\fc$ atom can be considered shorthand for a concatenation term. 
For example, the word equation $y \logeq x_1 x_2 x_3 x_4$ can be represented as $y \logeq f(f(x_1, x_2), f(x_3, x_4))$ where $f$ denotes binary concatenation. 
This then lends itself to the ``decomposition'' of the word equation into a $\cq$ consisting of smaller word equations.
We can express the above word equation as $(y \logeq z_1 \cdot z_2) \land (z_1 \logeq x_1 \cdot x_2) \land (z_2 \logeq x_3 \cdot x_4)$. 
For~such a \emph{decomposition}, the relations defined by each word equation can be stored in linear space and we can enumerate them with constant delay.
Thus, if the resulting query is acyclic, then the tractability properties of acyclic conjunctive queries directly translate to the $\cpfc$.
 
\subparagraph*{Contributions of this paper}
The goal of this work is to bridge the gap between acyclic relational $\cq$s and information extraction.
To this end, we define $\cpfcreg$s, a conjunctive query fragment of $\fcreg$, and show show that any so-called synchronized $\sercq$ can be converted into an equivalent $\cpfcreg$ in polynomial time~(\cref{Prop:RGXtoPatCQ}).

We define the decomposition of an $\cpfc$ into a $\conclog$, where $\conclog$ denotes the set of $\cpfc$s where the right-hand side of each word equation is of at most length two. 
Our first main result is a polynomial-time algorithm that decides whether a \emph{pattern}\footnote{For the purposes of this introduction, a pattern can be considered a single $\fc$ atom.} can be decomposed into an acyclic $\conclog$~(\cref{polytime}). 

Building on this, we give a polynomial-time algorithm that decomposes an $\cpfc$ into an acyclic $\conclog$, or determines that this is not possible~(\cref{theorem:LVJoinTree}). 
As soon as we have an acyclic $\conclog$, the upper bound results for model checking and enumeration of results follow from previous work on relational acyclic $\cq$s~\cite{gottlob2001complexity, bagan2007acyclic}.

We mainly focus on $\cpfc$s (\ie, no regular constraints) due to the fact that we can add regular constraints for ``free''.
This is because regular constraints are unary predicates, and therefore can be easily incorporated into a join tree.
Thus, our work defines a class of $\cpfcreg$s for which model checking can be solved in polynomial time, and results can be enumerated with polynomial-delay (both in terms of combined complexity). 

Our approach offers a new research direction for tractable document spanners. 
Most of the current literature approaches regular spanners by ``compiling'' the spanner representation (regex formulas that are combined with projection, union, and joins) into a single automaton, where the use of joins can lead to a number of states that is exponential in the size of the original representation.
Instead, we look at decomposing $\fc$ conjunctive queries into small and tractable components.
This allows us to use the wealth of research on relational algebra, while also allowing for the use of the string equality selection operator. 

\subparagraph*{Related Work}
Regarding data complexity, Florenzano, Riveros, Vgarte, Vansummeren, and Vrgoc~\cite{florenzano2018constant} gave a constant-delay algorithm for enumerating the results of deterministic vset-automata, after linear time preprocessing.
Amarilli, Bourhis, Mengal, and Niewerth~\cite{amarilli2020constant} extended this result to  non-deterministic vset-automata.
Regarding combined complexity,
Freydenberger, Kimelfeld, and Peterfreund~\cite{freydenberger2018joining} introduced regex $\cq$s  and proved that their evaluation is $\np$-complete (even for acyclic queries), and that fixing the number of atoms and the number of string equalities in $\sercq$s allows for polynomial-delay enumeration of results.
Freydenberger, Peterfreund, Kimelfeld, and Kr{\"{o}}ll~\cite{pet:com} showed that non-emptiness for a join of two sequential regex formulas is $\np$-hard, under schemaless semantics, even for a single character document. 
Connections between the theory of concatenation and spanners have been considered in~\cite{fre:doc, fre:splog, frey2019finite}, which give many of the lower bound complexity results for core spanners. Schmid and Schweikardt~\cite{schmid2020purely} examined a subclass of core spanners called refl-spanners, which incorporate string equality directly into a regular spanner. 
Peterfreund~\cite{peterfreund2021grammars} considered extraction grammars, and gave an algorithm for unambiguous extraction grammars that enumerates results with constant-delay after quintic preprocessing.
\section{Preliminaries}\label{sec:prelim}
Let $\emptyset$ denote the \emph{empty set}, and for $n \geq 1$ let $[n] \df \{ 1,2,\dots,n \}$. 
Given a set $S$, we use $|S|$ for the \emph{cardinality} of $S$. 
If $S$ is a subset of $T$ then we write $S \subseteq T$ and if $S \neq T$ also holds, then $S \subset T$. 
We write $\powerset{S}$ for the powerset of $S$. 
The difference of two sets $S$ and $T$ is denoted as $S \setminus T$. 
If $\vec{x}$ is a tuple, we write $x \in \vec{x}$ to indicate that $x$ is a component of $\vec{x}$. 
Let $A$ be an alphabet. 
We use $|w|$ to denote the length of some word $w \in A^*$ and $\emptyword$ to denote the \emph{empty word}. 
The number of occurrences of $a \in A$ within $w$ is $|w|_a$. 
We write $u \cdot v$ or just $uv$ for the concatenation of words $u,v \in A^*$. 
If $u = p \cdot v \cdot s$ for $p,s \in A^*$ then $v$ is a \emph{factor} of $u$, denoted $v \sqsubseteq u$. 
If $u \neq v$ also holds, then $v \sqsubset u$. 
Let $\Sigma$ be an alphabet of \emph{terminal symbols} and let $\Xi$ be an infinite alphabet of \emph{variables}. 
We assume that $\Sigma \intersect \Xi = \emptyset$ and $|\Sigma| \geq 2$.

If $T \df (V,E)$ is a tree, then a path between $x_1 \in V$ and $x_n \in V$ is the shortest sequence of edges from $x_1$ to $x_n$. 
If $(\{x_1,x_2\}, \{x_2,x_3\}, \dots, \{x_{n-1}, x_n\})$ is a path, then we say a node $y$ \emph{lies} on this path if $y = x_j$ for some $j \in [n]$. 
We call the number of edges on a path from $x_1$ to $x_n$ the \emph{distance} between $x_1$ and $x_n$.

\subparagraph*{Document Spanners}
Given $w \df w_1 \cdot w_2 \cdots w_n$ where $w_i \in \Sigma$ for all $i \in [n]$, a so-called \emph{span} of $w$ is an interval $\spn{i,j}$ where $1 \leq i \leq j \leq n+1$. A span $\spn{i,j}$ defines a factor $w_{\spn{i,j}} \df w_i \cdot w_{i+1} \cdots w_{j-1}$ of $w$. 
Let $V \subset \Xi$, where $V$ is finite, and let $w \in \Sigma^*$. 
A \emph{$(V, w)$-tuple} is a function $\mu$ that maps each $x \in V$ to a span $\mu(x)$ of $w$. 
A \emph{spanner} $P$, with variables $V$, is a function that maps every $w \in \Sigma^*$ to a set $P(w)$ of $(V,w)$-tuples. 
By $\SVars{P}$, we denote the set of variables of~$P$. 

Like~\cite{fag:spa}, we use \emph{regex formulas} as the primary extractors. 
Regex formulas are an extension of regular expressions with so-called \emph{capture variables}. 
More formally: $\emptyset$, $\emptyword$, and $\mathtt{a}$ where $\mathtt{a} \in \Sigma$ are all regex formulas, and if $\gamma_1$ and $\gamma_2$ are regex formulas then so are $(\gamma_1 \cdot \gamma_2)$, $(\gamma_1 \lor \gamma_2)$, $(\gamma_1)^*$, and $\bind{x}{\gamma_1}$ where $x \in \Xi$. 
We use $\Sigma$ as a shorthand for $\biglor_{a \in \Sigma} a$. We can omit the parentheses when the meaning is clear. 
A variable binding $\bind{x}{\gamma}$ matches the same words as $\gamma$ and assigns the corresponding span of the input word to $x$. 
A regex formula is \emph{functional} if on every match, each variable is assigned exactly one span. We denote the set of functional regex formulas by $\rgx$. 
For $\gamma \in \rgx$, we use $\fun{\gamma}$ to define the corresponding spanner as follows. 
Every match of $\gamma$ on $w$ defines $\mu$, a $(\SVars{\gamma}, w)$-tuple, where for each $x \in \SVars{\gamma}$, we have that $\mu(x)$ is the span assigned to $x$. 
We use $\fun{\gamma}(w)$ to denote the set of all such $(\SVars{\gamma},w)$-tuples.
See~\cite{fag:spa} for more details.

We now define \emph{synchronized $\rgx$-formulas} (this follows the definition by Freydenberger, Kimelfeld, Kr{\"{o}}ll, and Peterfreund in~\cite{pet:com}). 
An expression $\gamma \in \rgx$ is \emph{synchronized} if for all sub-expressions of the form $(\gamma_1 \lor \gamma_2)$, no variable bindings occur in $\gamma_1$ or $\gamma_2$. 
We denote the class of synchronized $\rgx$-formulas by $\synrgx$. 

The motivation for synchronized $\rgx$-formulas is that non-synchronized formulas allow for ``hidden'' disjunctions within the atoms. This  goes (arguably) against the spirit of $\cq$s and (as shown in~\cite{pet:com}) leads to ``un-$\cq$-like'' behavior.

\begin{example}
Consider the regex formula $\gamma \df \Sigma^* \cdot \bind{x}{ \mathtt{a} \lor (\mathtt{b})^*} \cdot \bind{y}{ \Sigma^* } \cdot \Sigma^*$. 
We have that $\fun{\gamma}(w)$ contains those $\mu$ such that $\mu(x)$ is a factor of $w$ which is either an $\mathtt{a}$ or a sequence of~$\mathtt{b}$ symbols, and the span $\mu(y)$ occurs directly after $\mu(x)$. 
Since $\gamma$ is functional, and for every sub-expression of the form $(\gamma_1 \lor \gamma_2)$, we have that $\SVars{\gamma_1} = \SVars{\gamma_2} = \emptyset$, it follows that $\gamma$ is a \emph{synchronized regex formula}.
\end{example}

Essentially, a synchronized regex formula is functional if no variable is redeclared, and no variable is used inside of a Kleene star.

This is extended into a \emph{relational algebra} comprised of $\union$ (union), $\pi$ (projection), $\join$ (natural join), and $\select^=$ (string equality). Let $w \in \Sigma^*$, and let $P_1$ and $P_2$ be spanners. We say $P_1$ and $P_2$ are \emph{compatible} if $\SVars{P_1} = \SVars{P_2}$.  If two spanners $P_1$ an $P_2$ are compatible, then $(P_1 \union P_2)(w) \df P_1(w) \union P_2(w)$. For $Y \subseteq \SVars{P_1}$, the \emph{projection} $\pi_Y P_1(w)$ is defined as the restriction of all $\mu \in P_1(w)$ to the set of variables $Y$, and hence $\SVars{\pi_Y P_1} \df Y$. 

The \emph{natural join}, $P_1 \join P_2$, is obtained by defining $\SVars{P_1 \join P_2} \df \SVars{P_1} \union \SVars{P_2}$, and $(P_1 \join P_2)(w)$ as the set of all $(\SVars{P_1} \union \SVars{P_2},w)$-tuples for which there exists $\mu_1 \in P_1(w)$ and $\mu_2 \in P_2(w)$ such that $\mu_1(x) = \mu_2(x)$ for all $x \in \SVars{P_1} \intersect \SVars{P_2}$. The \emph{string equality operator} $\select^=_{x_1, x_2} P_1$ is defined by $\select^=_{x_1, x_2} P_1(w) \df \{ \mu \in P_1(w) \mid w_{\mu(x_1)} = w_{\mu(x_2)}\}$, where $\SVars{\select^=_{x_1, x_2} P_1} \df \SVars{P_1}$.

Given a class of regex-formulas $C$
and a spanner algebra $\mathsf{O}$, 
we use $C^\mathsf{O}$ to denote the set of spanner representations which can be constructed by repeated combinations of operators from $\mathsf{O}$ with a regex-formula from $C$. We write $\spanner{C^\mathsf{O}}$ to denote the closure of $\spanner{C}$ under $\mathsf{O}$.

The class of \emph{core spanners} (introduced by Fagin, Kimelfeld, Reiss, and Vansummeren~\cite{fag:spa}) is defined as $\spanner{\rgx^\core}$ where $\core \df \{ \pi, \select^=, \union, \join \}$. The class of \emph{regex $\cq$s with string equality} ($\sercq$s) is defined as expressions of the form:
\[ \query \df \pi_Y \left( \select^=_{x_1, y_1} \cdots \select^=_{x_l, y_l} (\gamma_1 \join \cdots \join \gamma_k) \right), \]

where $\gamma_i \in \rgx$ for all $i \in [k]$. We call an $\sercq$ a \emph{synchronized} $\sercq$ if every regex formula is a synchronized $\rgx$-formula.

\begin{example}
\label{example:regexCQ}
Consider $\query \df \select^=_{x_1,x_2} \left( \gamma_1 \join \gamma_2 \right)$ where  $\gamma_1 \df \Sigma^* \cdot \bind{x_1}{\Sigma^+} \cdot \mathtt{a} \cdot \Sigma^*$ and $\gamma_2 \df \Sigma^* \cdot \bind{x_2}{\Sigma^+} \cdot \mathtt{b} \cdot \Sigma^*$. Given $w \in \Sigma^*$, we have that $\fun{P}(w)$ contains those $\mu$ such that the factor $w_{\mu(x_1)}$ is non-empty, and is immediately followed by the symbol $\mathtt{a}$, the factor $w_{\mu(x_2)}$ is immediately followed by the symbol $\mathtt{b}$, and $w_{\mu(x_1)} = w_{\mu(x_2)}$. Since both $\gamma_1$ and $\gamma_2$ are synchronized, $\query$ is a synchronized $\sercq$.
\end{example}

\subparagraph*{Computational Model and Complexity Measures}\label{compModel}
We use the \emph{random access machine} model with uniform cost measures,
where the size of each machine word is logarithmic in the size of the input.
We represent factors of a word $w\in\Sigma^*$ as spans of $w$. This allows us to check whether $u = v$ for $u,v \sqsubseteq w$ in constant time after preprocessing that takes linear time and space~\cite{gus:alg,karkkainen2006linear} (see \cref{lemma:datastructure} for more details). 
The complexity results we state are in terms of \emph{combined complexity}. 
That is, both the query and the word are considered part of the input. 
When considering the enumeration of results for a query executed on a word, we say that we can enumerate results with \emph{polynomial-delay} if there exists an algorithm which returns the first result in polynomial time, the time between two consecutive results is polynomial, and the time between the last result and terminating is polynomial.

\section{Conjunctive Queries for FC}\label{sec:fccqs}
This section introduces $\cpfcreg$s, a conjunctive query fragment of $\fc$ with regular constraints. We give some complexity results regarding $\sercq$s and show an efficient conversion from synchronized $\sercq$s to $\cpfcreg$s.

A pattern is a word $\alpha \in (\Sigma \cup \Xi)^*$, and a \emph{word equation} is a pair $\eta \df (\alpha_L, \alpha_R)$ where $\alpha_L, \alpha_R \in (\Sigma \union \Xi)^*$ are patterns known as the \emph{left} and \emph{right} side respectively. 
We usually write such $\eta$ as $(\alpha_L \logeq \alpha_R)$. 
The length of a word equation, denoted $|(\alpha_L \logeq \alpha_R)|$, is $|\alpha_L| + |\alpha_R|$. 
A~\emph{pattern substitution} is a morphism $\subs \colon (\Sigma \cup \Xi)^* \rightarrow \Sigma^*$ such that $\subs(\mathtt{a}) = \mathtt{a}$  holds for all $\mathtt{a} \in \Sigma$. 
Since $\subs$ is a morphism, we have  $\subs(\alpha_1 \cdot \alpha_2) = \subs(\alpha_1) \cdot \subs(\alpha_2)$ for all $\alpha_1, \alpha_2 \in (\Sigma \cup \Xi)^*$. 

A~pattern substitution $\subs$ is a \emph{solution} to a word equation $(\alpha_L \logeq \alpha_R)$ if and only if $\subs(\alpha_L) = \subs(\alpha_R)$. 
When applying a pattern substitution $\subs$ to a pattern $\alpha$, we assume that its domain $\mathsf{dom}(\subs)$ satisfies $\mathsf{var}(\alpha) \subseteq \mathsf{dom}(\subs)$. 
Freydenberger and Peterfreund~\cite{frey2019finite} introduced $\fc$ as a first-order logic that is based on word equations. In the present paper, we do not consider the full logic $\fc$. Instead, we introduce its conjunctive queries. 

\begin{definition}
An $\cpfc$ is an $\fc$-formula of the form $\varphi(\vec{x}) \df \exists \vec{y} \colon \bigwedge_{i=1}^n \eta_i$, where $\eta_i \df (x_i \logeq \alpha_i)$, $x_i \in \Xi$, and $\alpha_i \in (\Sigma \union \Xi)^*$ for all $i \in [n]$. We use the shorthand $\varphi \df \cqhead{\vec{x}} \bigwedge_{i=1}^n \eta_i$ where $\vec{x}$ is the tuple of free variables. We call $\mathsf{Ans}(\vec{x})$ the \emph{head} of $\varphi$, and $\bigwedge_{i=1}^n \eta_i$ the \emph{body} of $\varphi$.
\end{definition}
We write $\varphi(\vec{x})$ to denote that $\vec{x}$ is the set of free variables of $\varphi$.
The set of all variables used in $\varphi$ is denoted by $\var(\varphi)$. 
We distinguish a variable $\strucvar \in \Xi$, called the \emph{universe variable},
that shall represent the input document~$w$.
The universe variable is not considered a free variable, and  we adopt the convention that $\strucvar\notin \var(\varphi)$ for all $\varphi$ (even if $\strucvar$ occurs in $\varphi$). 
Next, we define the semantics for $\cpfc$s.
\begin{definition}
For $\varphi \in \cpfc$ and a pattern substitution $\subs$ with $\var(\varphi) \union \{\strucvar\} \subseteq \mathsf{dom}(\subs)$, we define $\subs \models \varphi$ as follows: $\subs \models (\alpha_l \logeq \alpha_R)$ if $\subs(\eta_L) = \subs(\eta_R)$ and $\subs(x) \sqsubseteq \subs(\strucvar)$ for all $x \in \var(\alpha_L \logeq \alpha_R)$. 
For $\subs \models \exists x \colon \varphi$ we have that $\subs_{x \mapsto u} \models \varphi$ holds for some $u \sqsubseteq \subs(\strucvar)$, where $\subs_{x \mapsto u}$ is defined as $\subs_{x \mapsto u}(x) \df u$ and $\subs_{x \mapsto u}(y) \df \subs(y)$ for all $y \in (\Sigma \cup \Xi)$ where $y \neq x$. We use the canonical definition for conjunction. 
\end{definition}
Hence, for all $\subs\models\varphi(\vec{x})$, the universe for variables in $\var(\varphi)$ is the set of factors of $\subs(\strucvar)$.
If~$\varphi(\vec{x}) \in \cpfc$ and $w \in \Sigma^{*}$, then $\fun{\varphi}\strucbra{w}$ denotes the set of all $\subs(\vec{x})$ such that $\subs \models \varphi$ and~$\subs(\strucvar) = w$. 
When determining $\fun{\varphi}\strucbra{w}$ for a given $w$, we know that $\strucvar$ represents $w$, and  hence $\strucvar$ can be treated as a constant (see~\cite{frey2019finite} for more information on the role of the universe variable).
If $\varphi \in \fc$ is \emph{Boolean} (that is, it has no free variables), $\fun{\varphi}\strucbra{w}$ is either the empty set, or the set containing the empty tuple, which we interpret as $\false$ and $\true$, respectively. 

In~\cite{frey2019finite}, $\fc$ was extended to  $\fcreg$ by adding  \emph{regular constraints}. 
This allows for atoms of the form $(x \regconst \gamma)$, where $\gamma$ is a \emph{regular expression}; 
and $\subs \models (x \regconst \gamma)$ if and only if $\subs(x) \in \lang(\gamma)$
 and $\subs(x) \sqsubseteq \subs(\strucvar)$. 
We extend $\cpfc$ to $\cpfcreg$ in the same way.

\subparagraph{Complexity} 
We now define various decision problems for $\cpfc$ and $\cpfcreg$:
The \emph{non-emptiness problem} is, given $w \in \Sigma^*$ and $\varphi$, decide whether $\fun{\varphi} \strucbra{w} \neq \emptyset$. The \emph{evaluation problem} is, given $\subs$ and $\varphi$, decide whether $\subs \models \varphi$. The \emph{model checking problem} is the special case of  non-emptiness and evaluation that only considers Boolean queries, note that for Boolean queries $\domain(\subs) = \{ \strucvar \}$. 
Given $w \in \Sigma^*$ and~$\varphi$, the \emph{enumeration problem} is outputting all $\fun{\varphi} \strucbra{w}$. 
The \emph{containment problem} is, given $\varphi$ and~$\psi$, decide whether $\fun{\varphi}\strucbra{w} \subseteq \fun{\psi}\strucbra{w}$ for all $w \in \Sigma^*$. 
Previous results on patterns and $\fc$ (see~\cite{bre:inc,ehrenfreucht1979finding,frey2019finite}) directly imply the following.
\begin{restatable}[]{proposition}{CQlowerBounds}
	\label{CQlowerBounds}
For each of $\cpfc$ and $\cpfcreg$, the evaluation problem is $\np$-complete, and the containment problem is undecidable.
\end{restatable}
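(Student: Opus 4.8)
Both claims rest on a single observation: for a pattern $\alpha\in(\Sigma\cup\Xi)^*$ with $\strucvar\notin\var(\alpha)$, the Boolean query
\[ \varphi_\alpha \df \cqhead{} \exists \var(\alpha) \colon (\strucvar \logeq \alpha) \]
is an $\cpfc$, and $\fun{\varphi_\alpha}\strucbra{w} = \true$ holds exactly when $w$ lies in the erasing pattern language of $\alpha$. Indeed, $\subs\models(\strucvar\logeq\alpha)$ with $\subs(\strucvar)=w$ means $\subs(\alpha)=w$; writing $\alpha=t_1\cdots t_m$ with $t_j\in\Sigma\cup\Xi$, this forces every $\subs(t_j)$ to be a factor of $w$, so the factor-containment side conditions in the semantics of $\cpfc$ are automatically satisfied, and since $\emptyword\sqsubseteq w$ the variables of $\alpha$ may be erased. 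Since $\cpfc$ is a syntactic fragment of $\cpfcreg$, every hardness or undecidability result for $\cpfc$ immediately carries over to $\cpfcreg$; only the $\np$ upper bound has to be argued for $\cpfcreg$ as well. (For the full logic $\fc$, corresponding statements are already in~\cite{frey2019finite}; we only use its conjunctive-query fragment here.)

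\textbf{Evaluation.} Membership in $\np$ is guess-and-check. Given $\varphi(\vec x)=\exists\vec y\colon\bigwedge_{i=1}^n(x_i\logeq\alpha_i)$ --- together with regular-constraint atoms $(x\regconst\gamma)$ in the $\cpfcreg$ case --- and a substitution $\subs$ with $w\df\subs(\strucvar)$, guess one span of $w$ for each variable in $\vec y$; this is a certificate of size polynomial in $|w|$ because $w$ has only $\bigO(|w|^2)$ factors. Then verify each equation $\subs(x_i)=\subs(\alpha_i)$ and each factor-containment condition in polynomial time, using the constant-time factor-equality data structure from \cref{lemma:datastructure} after linear preprocessing, and test each constraint $\subs(x)\in\lang(\gamma)$ by a polynomial-time membership check. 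For $\np$-hardness it is enough to treat model checking: reduce from the $\np$-complete membership problem for erasing pattern languages~\cite{ehrenfreucht1979finding}, sending $(\alpha,w)$ to the instance $(\varphi_\alpha,\subs)$ with $\subs(\strucvar)\df w$; correctness is the observation above. This already gives $\np$-completeness for $\cpfc$, hence for $\cpfcreg$.

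\textbf{Containment.} Reduce from the undecidable inclusion problem for erasing pattern languages~\cite{bre:inc}. Given patterns $\alpha,\beta$, output $\varphi_\alpha$ and $\varphi_\beta$. For any Boolean query $\psi$ we have $\fun{\psi}\strucbra{w}\in\{\emptyset,\{()\}\}$, with the nonempty value occurring exactly when $\psi$ holds on $w$ (that is, $\subs\models\psi$ for $\subs$ with $\subs(\strucvar)=w$); hence $\fun{\varphi_\alpha}\strucbra{w}\subseteq\fun{\varphi_\beta}\strucbra{w}$ for all $w\in\Sigma^*$ if and only if the erasing pattern language of $\alpha$ is contained in that of $\beta$. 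Undecidability of the latter yields undecidability of containment for $\cpfc$, and thus for $\cpfcreg$.

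\textbf{Main obstacle.} No step is technically deep; the care needed lies entirely in the conventions. One must justify that the factor-containment conditions built into the semantics of $\cpfc$ do not change the language of $\varphi_\alpha$ (the ``$\subs(\alpha)=w$ forces $\subs$ to assign factors of $w$'' remark), and --- because $\emptyword$ is a factor of every word --- one must consistently invoke the \emph{erasing} versions of the pattern-language results (both the $\np$-completeness of membership and the undecidability of inclusion), rather than their non-erasing counterparts, for which the status of the inclusion problem is notoriously different.
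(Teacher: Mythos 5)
Your proposal is correct and follows essentially the same route as the paper: the paper likewise derives both claims directly from known pattern-language results, encoding a pattern $\alpha$ as the Boolean query $\cqhead{}(\strucvar \logeq \alpha)$ so that $\np$-completeness of evaluation follows from the membership problem for pattern languages~\cite{ehrenfreucht1979finding} (with a guess-and-check upper bound that extends to regular constraints), and undecidability of containment follows from the undecidability of pattern-language inclusion~\cite{bre:inc}. Your attention to the factor-containment side conditions and to using the erasing variants matches the conventions the paper relies on.
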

As discussed in~\cite{frey2019finite}, $\fc$ and $\fcreg$ can be evaluated analogously to relational first-order logic ($\fo$), by materializing the tables that are defined by the atoms and then proceeding ``as usual''. 
Hence, bounding the width of a formula (the maximum number of free variables in a subformula) bounds the size of the intermediate tables, and thereby the complexity of evaluation.
As the complexity of evaluating $\fc$ and $\fo$ are the same ($\pspace$-complete in general, $\np$-complete for the existential-positive fragment), it is no surprise that this correspondence also translates to conjunctive queries.
From \cref{sec:acycCQFC} on, we further develop this connection by finding tractable subclasses of $\cpfcreg$.

As containment for $\cq$s is decidable (although $\np$-complete), it can be used for query minimization (see Chapter~6 of \cite{abiteboul1995foundations}). But by~\cref{CQlowerBounds}, this does not apply to $\cpfc$.

\subparagraph{Document Spanners and FC-CQs} Our next goal is to establish a connection between $\sercq$s and $\cpfcreg$s. However, first we must overcome the fact that $\fc$ reasons over strings, whereas spanners reason over intervals of positions. We deal with this by defining the notion of an $\fc$-formula \emph{realizing} a spanner, as described in~\cite{fre:doc, fre:splog, frey2019finite}.

\begin{definition}
\label{defn:realizing}
A pattern substitution $\subs$ \emph{expresses} a $(V,w)$-tuple $\mu$, if for all $x \in V$, we have that $\domain(\subs) = \{ x^P, x^C \mid x \in V \}$, and $\subs(x^P) = w_{\spn{1,i}}$ and $\subs(x^C) = w_{\spn{i,j}}$ for the span $\mu(x) = \spn{i,j}$. An $\cpfcreg$ $\varphi$ \emph{realizes} a spanner $P$ if $\fvar(\varphi) =  \{ x^P, x^C \mid x \in \SVars{P} \}$ and $\subs\models\varphi$ for all $w \in \Sigma^*$ where $\subs(\strucvar) = w$, if and only if $\subs$ expresses some $\mu \in P(w)$.
\end{definition}

Less formally, for each $\mu \in P(w)$, we have that $\mu(x) = \spn{i,j}$ is uniquely represented by the prefix, $\subs(x^P) = w_{\spn{1,i}}$, and the content, $\subs(x^C) = w_{\spn{i,j}}$.

\begin{example}
Consider the following $\cpfcreg$.
\begin{multline*} 
\varphi \df \cqhead{x_1^P, x_1^C, x_2^P, x_2^C} (\strucvar \logeq x_1^P \cdot x_1^C \cdot \mathtt{a} \cdot s_1) \land (\strucvar \logeq x_2^P \cdot x_2^C \cdot \mathtt{b} \cdot s_2) \\ \land (x_1^C \logeq x_2^C) \land (x_1^C \regconst \Sigma^+) \land (x_2^C \regconst \Sigma^+).
\end{multline*}
We can see that $\varphi$ realizes the $\sercq$ given in~\cref{example:regexCQ}.
\end{example}

Recall  that synchronized $\sercq$s consist of $\rgx$-formulas that do not have variables within sub-expressions of the form $(\gamma_1 \lor \gamma_2)$. 
As we observe in the following result, a synchronized $\sercq$ can be efficiently translated into an equivalent $\cpfcreg$. 

\begin{restatable}[]{lemma}{RGXtoPatCQ}
\label{Prop:RGXtoPatCQ}
Given a synchronized $\sercq$ $\query$, we can construct in polynomial time an $\cpfcreg$ that realizes $\query$.
\end{restatable}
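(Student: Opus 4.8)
The plan is to give a compositional (inductive) translation from synchronized $\sercq$s to $\cpfcreg$s, mirroring the structure of the $\sercq$: first handle a single synchronized $\rgx$-formula, then the join, then the string-equality selections, then the projection. The key observation that makes this tractable is synchronization: since no variable binding occurs inside any sub-expression $(\gamma_1 \lor \gamma_2)$, a variable's ``path'' through the expression tree is essentially a single concatenation term, so each captured span corresponds to one position-marker pair $(x^P, x^C)$ rather than requiring a disjunction over structurally different parses.

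First I would show how to translate a single synchronized $\rgx$-formula $\gamma$ with $\SVars{\gamma} = \{x_1,\dots,x_k\}$ into an $\cpfcreg$ realizing $\fun{\gamma}$. Walking the expression tree of $\gamma$, for each variable binding $\bind{x_i}{\gamma_i}$ I introduce fresh variables $p_i$ (the prefix consumed before the binding) and $c_i$ (the content of the binding) together with a fresh suffix variable $s_i$, and I assert a word equation $\strucvar \logeq p_i \cdot c_i \cdot s_i$. To tie these together I need the ``skeleton'' of $\gamma$: replace each binding $\bind{x_i}{\gamma_i}$ by its body $\gamma_i$ and each maximal binding-free sub-expression by a regular constraint. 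Because $\gamma$ is synchronized, every $\lor$ is binding-free, and a simple induction on the structure of $\gamma$ yields, for each $i$, a regular expression $\lambda_i$ describing the language of legal prefixes ending exactly at the start of the $x_i$-binding (and similarly for the content and ordering constraints between successive bindings), which I attach as constraints $(p_i \regconst \lambda_i)$, $(c_i \regconst \mu_i)$. One should also assert a global constraint $(\strucvar \regconst \gamma')$ where $\gamma'$ is $\gamma$ with all bindings erased, to guarantee the whole word is matched; the ``only if'' direction of \cref{defn:realizing} then follows by reading off the parse, and the ``if'' direction by assembling a match from a satisfying $\subs$. Setting $x_i^P \df p_i$ and $x_i^C \df c_i$ gives an $\cpfcreg$ realizing $\fun{\gamma}$, of size polynomial in $|\gamma|$.

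With the single-atom case in hand, the remaining operators are routine. For $\gamma_1 \join \cdots \join \gamma_k$, take the conjunction of the realizing formulas $\varphi_1,\dots,\varphi_k$ (renaming existentially quantified auxiliary variables apart, but keeping the shared $\strucvar$ and the shared $x^P, x^C$ for any $x$ appearing in several $\gamma_j$); since the $\join$ in $\core$ is the natural join on spans and $(x^P,x^C)$ uniquely codes a span, identifying these variables implements exactly the join condition. For $\select^=_{x,y}$, add the atom $(x^C \logeq y^C)$, which by \cref{defn:realizing} holds iff $w_{\mu(x)} = w_{\mu(y)}$. For $\pi_Y$, rewrite the head $\mathsf{Ans}(\cdots)$ to list only $\{x^P, x^C \mid x \in Y\}$ and existentially quantify the rest; this matches the definition of $\pi_Y$ on spanners. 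Composing, and noting that each step increases the formula size only polynomially, gives the claimed polynomial-time construction of an $\cpfcreg$ realizing $\query$.

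The main obstacle is the single-atom step, specifically the bookkeeping needed to compute the prefix/content/ordering regular constraints $\lambda_i,\mu_i$ by induction through the Kleene stars and concatenations of $\gamma$ while respecting functionality (each variable bound exactly once, no variable under a star). Synchronization is what keeps this manageable, since it confines all the genuine nondeterminism to binding-free sub-expressions that can simply be copied verbatim into a regular constraint; the remaining work is a careful but standard structural induction on $\rgx$-formulas, and I would be explicit about how a binding nested inside a star is forbidden by functionality so that the construction never needs to ``loop'' a marker variable.
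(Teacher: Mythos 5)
Your outer composition steps coincide with the paper's: realize each regex atom separately, take the conjunction (shared $x^P,x^C$ implements the natural join on spans), add $(x^C \logeq y^C)$ for each $\select^=_{x,y}$, and restrict the head for $\pi_Y$. The gap is in the single-atom step. There you split $\strucvar \logeq p_i \cdot c_i \cdot s_i$ \emph{independently} for each captured variable and constrain $p_i$, $c_i$ (and $\strucvar$) only by unary regular constraints. This does not force the different captures to arise from one common match of $\gamma$. Concretely, take $\gamma \df \bind{x}{\Sigma^*}\cdot\mathtt{a}\cdot\bind{y}{\Sigma^*}$ and $w \df \mathtt{aba}$. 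Your constraints amount to $p_x \regconst \emptyword$, $c_x \regconst \Sigma^*$, $p_y \regconst \Sigma^*\mathtt{a}$, $c_y \regconst \Sigma^*$, $\strucvar \regconst \Sigma^*\mathtt{a}\Sigma^*$ (plus, if you add them, suffix constraints $s_x \regconst \mathtt{a}\Sigma^*$ and $s_y \regconst \emptyword$), together with the equations $\strucvar \logeq p_x\cdot c_x\cdot s_x$ and $\strucvar \logeq p_y \cdot c_y \cdot s_y$. The substitution $p_x \mapsto \emptyword$, $c_x \mapsto \mathtt{ab}$, $s_x\mapsto \mathtt{a}$, $p_y \mapsto \mathtt{a}$, $c_y \mapsto \mathtt{ba}$, $s_y \mapsto \emptyword$ satisfies all of these, yet it expresses the tuple $\mu(x)=\spn{1,3}$, $\mu(y)=\spn{2,4}$, which is not in $\fun{\gamma}(w)$: the only matches give $(\spn{1,1},\spn{2,4})$ or $(\spn{1,3},\spn{4,4})$, since $y$ must start immediately after the $\mathtt{a}$ that follows $x$'s content. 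The needed coupling ``$p_y$ equals $p_x\cdot c_x\cdot \mathtt{a}$'' is a relation between variables of different captures; no collection of unary regular constraints (including your parenthetical ``ordering constraints between successive bindings'') can express it, so the ``if'' direction of realization fails.

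The repair is essentially the paper's construction: encode the concatenation skeleton of $\gamma$ by \emph{word equations} rather than unary constraints. Build a parse tree of $\gamma$ in which binding-free subexpressions are leaves, assign each node $n$ a variable $v_n$ (the root gets $\strucvar$, a binding $\bind{x}{\cdot}$ gets $x^C$, other nodes fresh variables), add $(v_n \logeq v_l\cdot v_r)$ for every concatenation node, identify a binding's variable with that of its child, and attach regular constraints only at the binding-free leaves; then define each prefix by a word equation whose right-hand side is the concatenation of the variables representing everything to the left of that binding in the tree. These equations are precisely the cross-variable glue your version lacks, and synchronization plus functionality is what guarantees they exist (no binding sits under $\lor$ or under a star, so the part of $\gamma$ above any binding really is a concatenation term). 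Your high-level plan is fine, but the atom translation must be rebuilt along these lines for the lemma to hold.
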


The proof of~\cref{Prop:RGXtoPatCQ} follows from~\cite{frey2019finite, fre:doc, fre:splog}. 
The converse of~\cref{Prop:RGXtoPatCQ} follows directly from~\cite{frey2019finite}. However, one would need to define how $\cpfcreg$-formulas can be realized by regex formulas closed under spanner algebra (details on this can be found in~\cite{fre:splog, frey2019finite}). We omit such a result as it is not the focus on this work. 

In this section, we have introduced $\cpfcreg$s, and shown an efficient conversion from synchronized $\sercq$s to $\cpfcreg$s.
Therefore, while the present paper mainly considers a tractable fragment of $\cpfcreg$, this tractability carries over to a subclass of $\sercq$s. 
\section{Acyclic Pattern Decomposition}\label{sec:decomp}
This section examines decomposing terminal-free patterns (\ie, patterns $\alpha\in\Xi^+$) into acyclic $\conclog$s, where $\conclog$ denotes the set of $\cpfc$s where each word equation has a right-hand side of at most length two. 
Patterns are the basis for $\cpfc$ atoms, and hence, this section gives us a foundation on which to investigate the decomposition of $\cpfc$s. 
We do not consider regular constraints, or patterns with terminals.
This is because regular constraints are unary predicates, and therefore can be easily added to a join tree; and terminals can be expressed through regular constraints.
We use $\conclog$s for two reasons. 
Firstly, binary concatenation is the most elementary form of concatenation, as it cannot be decomposed into further (non-trivial) concatenations. 
Secondly, this ensures that each word equation has very low width, and therefore we can store the tables in linear space and enumerate them with constant delay -- as shown in the following.

\begin{restatable}[]{proposition}{Datastructure}\label{lemma:datastructure}
Given $w \in \Sigma^*$, we can construct a data structure in linear time that, for $x,y,z \in \Xi$, allow us to enumerate $\fun{x \logeq y \cdot z}\strucbra{w}$ with constant-delay, and to decide in constant time if $\sigma\in\fun{x \logeq y \cdot z}\strucbra{w}$ holds.
\end{restatable}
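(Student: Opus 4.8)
The plan is to reduce the claim to two classical stringology data structures and then add a small amount of bookkeeping to turn a membership/range test into a constant-delay enumeration.

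First I would set up the preprocessing. Given $w\in\Sigma^*$ with $|w|=n$, build in linear time and space a data structure that, for any two spans $\spn{i,j}$ and $\spn{i',j'}$ of $w$, decides in constant time whether $w_{\spn{i,j}}=w_{\spn{i',j'}}$; this is exactly the longest-common-extension (LCE) primitive, obtained e.g.\ from the suffix array together with an $O(1)$-query range-minimum structure over the LCP array, both constructible in linear time~\cite{gus:alg,karkkainen2006linear}. Equivalently one can precompute Karp--Rabin fingerprints of all prefixes. Either way, after this preprocessing, equality of two factors of $w$, given by their spans, is a constant-time test. Since $\strucvar$ is fixed to $w$, for the membership question ``$\sigma\in\fun{x\logeq y\cdot z}\strucbra{w}$'' we are given spans $\sigma(x),\sigma(y),\sigma(z)$ that are factors of $w$, and we must check $w_{\sigma(x)}=w_{\sigma(y)}\cdot w_{\sigma(z)}$; writing $\sigma(x)=\spn{i,k}$, $\sigma(y)=\spn{i',j'}$, $\sigma(z)=\spn{i'',k''}$, this holds iff $k-i=(j'-i')+(k''-i'')$ and the two LCE comparisons $w_{\spn{i,\,i+(j'-i')}}=w_{\sigma(y)}$ and $w_{\spn{i+(j'-i'),k}}=w_{\sigma(z)}$ both succeed. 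Hence the membership test is constant time.

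Next I would handle enumeration of $\fun{x\logeq y\cdot z}\strucbra{w}$. A tuple in this set is a triple of spans $(\spn{i,k},\spn{i,j},\spn{j,k})$ with $1\le i\le j\le k\le n+1$ such that $w_{\spn{i,j}}\cdot w_{\spn{j,k}}=w_{\spn{i,k}}$ --- but because concatenation of adjacent factors is automatic, \emph{every} choice of $1\le i\le j\le k\le n+1$ yields a valid tuple (the content of $x$ is $w_{\spn{i,k}}$, split at position $j$). So, when $x,y,z$ are pairwise distinct, the relation is simply the set of all such triples, and we can enumerate them with constant delay by three nested loops over $i$, then $k\ge i$, then $j$ with $i\le j\le k$, emitting $(\spn{i,k},\spn{i,j},\spn{j,k})$ each time; the delay between consecutive outputs is $O(1)$, and getting the first output and terminating after the last are also $O(1)$. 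The remaining work is the degenerate cases where some of $x,y,z$ coincide (e.g.\ $x=y$ forces $\sigma(z)=\spn{k,k}$, i.e.\ $z$ the empty span at the right end of $y$; $y=z$ forces $\sigma(y)=\sigma(z)$ to be a ``square root'' factor and $\sigma(x)$ its doubling; $x=z$ is analogous); in each such case the constraint collapses the dimensionality and one enumerates over the one or two remaining free position indices, using the LCE test for the $y=z$ case to check the squaring condition in $O(1)$ per candidate. Each sub-case still admits a constant-delay loop.

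The main obstacle is not conceptual but a matter of being careful: one must verify that the nested-loop enumeration genuinely achieves \emph{constant} delay (not amortized) in the RAM model with logarithmic word size --- in particular that advancing the loop indices, constructing the output spans, and in the $y=z$ case performing the LCE query, are all worst-case $O(1)$ --- and that the case distinction on which of $x,y,z$ are equal is exhaustive and each branch is handled so that the output contains no duplicates and exactly matches $\fun{x\logeq y\cdot z}\strucbra{w}$. I would also note explicitly that linear \emph{space} suffices, since the enumeration procedure itself only stores a constant number of indices on top of the linear-size preprocessed structure. I expect this to be a short proof once the LCE data structure of~\cite{gus:alg,karkkainen2006linear} is invoked as a black box.
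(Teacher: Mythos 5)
Your membership test is essentially the paper's (length arithmetic plus two constant-time factor comparisons via an LCP/LCE structure), and that part is fine. The genuine gap is in the enumeration: the elements of $\fun{x \logeq y \cdot z}\strucbra{w}$ are tuples of \emph{factors} of $w$ (FC substitutions assign words, not spans, to variables), whereas your three nested loops enumerate position triples $i\le j\le k$. Distinct position triples very often denote the same word triple -- for $w=\mathtt{a}^n$ the tuple $\sigma(x)=\mathtt{aa}$, $\sigma(y)=\sigma(z)=\mathtt{a}$ arises from $\Theta(n)$ values of $i$, and the whole relation has only $O(n^2)$ distinct elements while there are $\Theta(n^3)$ position triples -- so your procedure outputs massive numbers of duplicates and is not an enumeration of the set. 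You do list ``no duplicates'' as something to verify, but it is not a detail one can patch with a constant-time filter: skipping already-emitted tuples inside the nested loops creates super-constant gaps between genuine outputs. This duplicate-freeness is exactly the crux of the paper's proof: it first enumerates the \emph{distinct} factors $u\subword w$ with constant delay using a suffix tree (processing the leaves in order and using $\mathsf{LCP}$ values to skip the prefixes of the current suffix that were already produced by earlier leaves), and only then, for each distinct $u$, enumerates the splits $u=v_1\cdot v_2$ by length, which is duplicate-free by construction.

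Two further case-analysis issues. In the $y=z$ (with $x\neq y$) case you propose iterating over candidates and checking the square condition with an $O(1)$ LCE test; a test-and-discard loop gives only polynomial, not constant, delay, since between two square factors there can be $\Theta(|w|^2)$ failing candidates -- the paper instead uses the constant-delay enumeration of square factors of Gusfield and Stoye after additional preprocessing of the suffix tree. Similarly, in the $x=y\neq z$ case the relation is ``$\sigma(z)=\emptyword$ and $\sigma(x)$ an arbitrary factor'', which again needs the duplicate-free factor enumeration rather than a loop over spans. Finally, since $\strucvar\in\Xi$, the cases where one of $x,y,z$ is $\strucvar$ (e.g.\ $y=\strucvar$ forces $\sigma(z)=\emptyword$; $x=\strucvar$, $y=z$ reduces to checking whether $w$ is a square) must be handled separately; they are easy but absent from your proposal.
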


Although the cardinality of $\fun{x \logeq y \cdot z}\strucbra{w}$ is cubic in $|w|$, \cref{lemma:datastructure} allows us to represent this relation in linear space. 
As we can query such relations in constant time, they behave ``nicer'' than relations in relational algebra.
Furthermore, after materializing the relations defined by each atom of an $\conclog$,~\cref{lemma:datastructure} allows us to treat the $\conclog$ as a relational conjunctive query.
We now introduce a way to \emph{decompose} a pattern into a conjunction of word equations where the right hand side of each atom is at most length two. 
We start by looking at a canonical way to decompose terminal-free patterns.

Let $\alpha \in \Xi^+$ be a terminal-free pattern. 
To decompose $\alpha$, first we factorize $\alpha$ so that it can be written using only binary concatenation
We define $\brac$, the set of all \emph{well-bracketed patterns}, recursively as follows:
\begin{definition}
$x \in \brac$ for all $x \in \Xi$, and if $\tilde\alpha, \tilde\beta \in \brac$, then $(\tilde\alpha \cdot \tilde\beta) \in \brac$.\footnote{For convenience, we tend use $\tilde\alpha$ to denote a bracketing of the pattern $\alpha \in \Xi^+$.}
\end{definition}
We extend the notion of a factor to a \emph{sub-bracketing}. 
We write $\tilde{\alpha} \sqsubseteq \tilde{\beta}$ if $\tilde{\alpha}$ is a factor of $\tilde{\beta}$ and $\tilde\alpha, \tilde\beta \in \brac$. 
Let $\alpha \in \Xi^+$, by $\brac(\alpha)$ we denote the set of all bracketings which correspond to the pattern $\alpha$ (\ie, if we remove the brackets, then the resulting pattern is $\alpha$). 
Every $\tilde{\alpha} \in \brac(\alpha)$ can be converted into an equivalent formula $\decomp_{\tilde\alpha} \in \conclog\noconstr$ using the following. 

\begin{definition}
\label{defn:conclogConversion}
While there exists $\tilde\beta \sqsubseteq \tilde\alpha$ where $\tilde\beta = (x \cdot y)$ for some $x,y \in \Xi$, we replace every occurrence of $\tilde\beta$ in $\tilde\alpha$ with a new, unique variable $z \in \Xi \setminus \var(\alpha)$ and add the word equation $(z \logeq x \cdot y)$ to $\decomp_{\tilde\alpha}$. 
When $\tilde\alpha = \tilde\beta$, we have that $z = \strucvar$.
\end{definition}

Therefore, up to renaming of variables, every $\tilde\alpha \in \brac$ has a corresponding formula $\decomp_{\tilde\alpha} \in \conclog\noconstr$. 
We call $\decomp_{\tilde\alpha}$ the \emph{decomposition} of $\tilde\alpha$. 
The decomposition can be thought of as a logic formula expressing a \emph{straight-line program} of the pattern (see~\cite{lohrey2012algorithmics} for a survey on algorithms for SLPs). 
We now give an example of \emph{decomposing} a bracketing.

\begin{example}
\label{example:decomp}
Let $\alpha \df x_1 x_2 x_1 x_1 x_2$ and let $\tilde{\alpha} \in \brac(\alpha)$ be defined as follows: 
\[ \tilde{\alpha} \df  (((x_1 \cdot x_2)\cdot x_1) \cdot (x_1 \cdot x_2)).\] 
We now list $\tilde\alpha$ after every sub-bracketing is replaced with a variable. We also give the corresponding word equation that is added to $\decomp_{\tilde\alpha}$.
\begin{align*}
& (( \underline{(x_1 \cdot x_2)} \cdot x_1) \cdot \underline{(x_1 \cdot x_2)}) && z_1 \logeq x_1 \cdot x_2 \\
& ( \underline{(z_1 \cdot x_1)} \cdot   z_1) && z_2 \logeq z_1 \cdot x_1 \\
& \underline{(z_3 \cdot z_1)} && \strucvar \logeq z_3 \cdot z_1 
\end{align*}
Therefore, we get the decomposition $\decomp_{\tilde\alpha} \in \conclog\noconstr$, which is defined as 
\[ \decomp_{\tilde\alpha} \df \cqhead{} (z_1 \logeq x_1 \cdot  x_2) \land (z_2 \logeq z_1 \cdot x_1) \land ( \strucvar \logeq z_2 \cdot z_1). \]
Notice that every sub-bracketing of $\tilde\alpha$ has a corresponding word equation in $\decomp_{\tilde\alpha}$. 
\end{example}

The decomposition of $\tilde\alpha$ is somewhat similar to the \emph{Tseytin transformations}, see \cite{prestwich2009cnf}, which transforms a propositional logic formula into a formula in \emph{Tseytin normal form}. 

Our next focus is to study which patterns can be decomposed into an \emph{acyclic} $\conclog\noconstr$. 

\begin{definition}[Join Tree]
\label{defn:join-tree}
A \emph{join tree} for $\decomp \in \conclog$ with body $\bigwedge_{i=1}^n \chi_i$ is an undirected tree $T \df (V, E)$, where $V \df \{  \chi_i \mid i \in [n] \}$, and for all $\chi_i, \chi_j \in V$, if $x \in \var(\chi_i)$ and $x \in \var(\chi_j)$, then $x$ appears in all nodes that lie on the path between $\chi_i$ and $\chi_j$ in $T$. 
\end{definition}

Note that we use $\chi$ (with indices) to denote atoms of a $\conclog$ to distinguish them from word equations with arbitrarily large right-hand sides -- which we denote by $\eta$ (with indices). 
We call $\decomp\in\conclog$ \emph{acyclic} if there exists a join tree for $\decomp$. Otherwise, we call $\decomp$ \emph{cyclic}.

\begin{definition}[Acyclic Patterns]
If $\decomp_{\tilde\alpha} \in \conclog\noconstr$ is a decomposition of $\tilde{\alpha} \in \brac$ and $\decomp_{\tilde\alpha}$ is acyclic, then we call $\tilde{\alpha}$ \emph{acyclic}.  If $\decomp_{\tilde\alpha}$ is cyclic, then we call $\tilde{\alpha}$ \emph{cyclic}. If there exists~$\tilde{\alpha} \in \brac(\alpha)$ which is acyclic, then we say that $\alpha$ is \emph{acyclic}. Otherwise, $\alpha$ is \emph{cyclic}.
\end{definition}

When determining whether a decomposition $\decomp_{\tilde\alpha} \in \conclog\noconstr$ is acyclic, we treat each word equation (atom) of $\decomp_{\tilde\alpha}$ as a single relational symbol. 
We also consider $\strucvar$ to be a constant symbol, since $\subs(\strucvar) = w$ always holds. 
This raises the question as to whether every pattern has an acyclic decomposition. The answers is no, as the following result shows.

\begin{restatable}[]{proposition}{cycPat}
\label{cycPat}
$x_1 x_2 x_1 x_3 x_1$ is a cyclic pattern, and $x_1 x_2 x_3 x_1$ is an acyclic pattern that has a cyclic bracketing.
\end{restatable}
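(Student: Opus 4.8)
The plan is to treat the three claims separately, in each case building the decomposition's hypergraph (with $\strucvar$ as a constant, so it contributes no vertex) and then applying the GYO ear-removal procedure / checking for a join tree.

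For the first claim, that $\alpha \df x_1 x_2 x_1 x_3 x_1$ is cyclic, I would argue that \emph{every} bracketing $\tilde\alpha\in\brac(\alpha)$ yields a cyclic $\decomp_{\tilde\alpha}$. The key observation is that $x_1$ occurs three times, and any bracketing must, somewhere near the root, combine pieces that each still contain a ``raw'' $x_1$. More precisely, in any $\tilde\alpha$ there is an innermost sub-bracketing $\tilde\gamma$ containing at least two of the three $x_1$-occurrences; the atom $(\strucvar\logeq\cdots)$ at the top together with the atom introducing $\tilde\gamma$ and the atoms on the path between them will share $x_1$ in a way that is not reconcilable into a join tree. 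Concretely, I would show that the three occurrences of $x_1$, together with the separating occurrences of $x_2$ and $x_3$, force three distinct atoms pairwise sharing $x_1$ but with no single atom containing all the relevant ``neighbour'' variables, producing a triangle in the reduced hypergraph that GYO cannot eliminate. I expect this to reduce to a short, finite case analysis over the (essentially two) structurally distinct ways of bracketing a length-five pattern with that occurrence pattern, so it is routine once set up correctly.

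For the second claim I would simply exhibit a witness. Take $\alpha\df x_1 x_2 x_3 x_1$ and the bracketing
\[
\tilde\alpha \df \bigl(x_1 \cdot \bigl((x_2\cdot x_3)\cdot x_1\bigr)\bigr),
\]
which decomposes (following \cref{defn:conclogConversion}) to
\[
\decomp_{\tilde\alpha} \df \cqhead{} (z_1 \logeq x_2\cdot x_3)\land(z_2\logeq z_1\cdot x_1)\land(\strucvar\logeq x_1\cdot z_2).
\]
Here $z_1$ appears only in the first two atoms (which are adjacent), and $x_1$ appears only in the last two atoms (which are adjacent), so the path $\chi_1 - \chi_2 - \chi_3$ is a valid join tree; hence $\tilde\alpha$ is acyclic, so $\alpha$ is acyclic. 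Then, to see that $\alpha$ also has a \emph{cyclic} bracketing, I would take the ``balanced-left'' bracketing
\[
\tilde\beta \df \bigl((x_1\cdot x_2)\cdot(x_3\cdot x_1)\bigr),
\]
giving $\decomp_{\tilde\beta} \df (z_1\logeq x_1\cdot x_2)\land(z_2\logeq x_3\cdot x_1)\land(\strucvar\logeq z_1\cdot z_2)$, and check that no join tree exists: the three atoms pairwise share a variable ($z_1$ between $\chi_1,\chi_3$; $z_2$ between $\chi_2,\chi_3$; $x_1$ between $\chi_1,\chi_2$), so any tree on three nodes is a path, and whichever atom is placed in the middle fails the connectedness condition for the variable shared by the two endpoints (e.g.\ with $\chi_3$ in the middle, $x_1\in\var(\chi_1)\cap\var(\chi_2)$ but $x_1\notin\var(\chi_3)$). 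Thus $\tilde\beta$ is cyclic while $\alpha$ is acyclic, establishing the second statement.

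The main obstacle is the universal quantification in the first claim: unlike the second claim, which only needs example bracketings, showing $x_1x_2x_1x_3x_1$ is cyclic requires ruling out \emph{all} bracketings. The risk is that the case analysis, if done naively over all Catalan-many bracketings, is messy; the fix is to phrase it structurally in terms of where the three $x_1$-occurrences get merged, reducing to a handful of cases, and to invoke the standard fact that a $\conclog$ is acyclic iff GYO succeeds. I would also double-check the boundary convention that the root atom uses $\strucvar$ (per \cref{defn:conclogConversion}) and that $\strucvar$ is a constant, so it never helps satisfy the join-tree path condition — this is exactly what makes the triangle among the $x_1$-sharing atoms fatal.
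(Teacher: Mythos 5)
Your treatment of the second claim is correct and essentially the paper's: you exhibit an acyclic bracketing of $x_1x_2x_3x_1$ (the paper uses $((x_1\cdot(x_2\cdot x_3))\cdot x_1)$, you use $(x_1\cdot((x_2\cdot x_3)\cdot x_1))$, which works just as well) and the same cyclic bracketing $((x_1\cdot x_2)\cdot(x_3\cdot x_1))$, and your three-node path argument for why no join tree exists is sound, including the observation that $\strucvar$ is a constant and cannot help.

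The first claim, however, has a genuine gap. What you offer is a plan, not a proof, and the structural shortcut you gesture at does not work as stated. Three distinct atoms pairwise sharing the variable $x_1$ is \emph{not} by itself an obstruction to a join tree: since all three atoms contain the same variable $x_1$, they can always be arranged on a path that keeps the $x_1$-atoms connected, so GYO has no problem with that configuration in isolation. The actual source of cyclicity is the interaction between the $x_1$-atoms and the chain of introduced variables $z_i$ that links consecutive levels of the bracketing (this is precisely what the $x$-localization criterion of \cref{lemma:cycledistance} captures), and your sketch never establishes that interaction for an arbitrary bracketing. Moreover, your claim that the analysis reduces to ``essentially two structurally distinct ways of bracketing'' is unsubstantiated and almost certainly false: $|\brac(x_1x_2x_1x_3x_1)| = C_4 = 14$, and even exploiting the reversal symmetry (which swaps $x_2$ and $x_3$) leaves about seven structurally distinct cases. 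The paper resolves the universal quantification by brute force: it lists all fourteen bracketings, writes down each decomposition $\decomp_{\tilde\alpha}\in\conclog$, runs GYO on each, and records that more than one unmarked node remains every time. To repair your proof you would either have to carry out that enumeration (possibly halved by symmetry), or develop and then invoke a genuine structural criterion such as \cref{lemma:cycledistance}, verifying that no bracketing of $x_1x_2x_1x_3x_1$ is $x_1$-localized; as written, neither is done.
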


This leads to the following question: \emph{Can we decide whether a pattern is acyclic in polynomial time?}
Given a pattern $\alpha \in \Xi^+$, we have that $|\brac(\alpha)| = C_{|\alpha|-1}$, where $C_i$ is the $i^{th}$ \emph{Catalan number}, see~\cite{olive1985catalan}. 
As the Catalan numbers grow exponentially, a straightforward enumeration of bracketings to finding an acyclic bracketing is not enough.

If $\decomp_{\tilde\alpha} \in \conclog\noconstr$ is a decomposition of $\tilde{\alpha} \in \brac(\alpha)$, then we call the variable $x \in \Xi$ which represents the whole pattern the \emph{root variable}. If $x$ is the root variable, then the atom $(x \logeq y \cdot z)$ for some $y,z \in \Xi$, is called the \emph{root atom}. So far, the root variable has always been~$\strucvar$. 
In~\cref{sec:acycCQFC}, different root variables will be considered.

Let $\decomp_{\tilde\alpha} \in \conclog\noconstr$ be the decomposition of $\tilde{\alpha} \in \brac(\alpha)$, where $\alpha \in \Xi^+$. We define the \emph{concatenation tree} of $\decomp_{\tilde\alpha}$ as a rooted, undirected, binary tree $\mathcal{T} \df (\mathcal{V}, \mathcal{E}, <, \Gamma, \labelFunction, v_r)$, where $\mathcal{V}$ is a set of nodes and $\mathcal{E}$ is a set of undirected edges. 
If $v$ and $v'$ have a shared parent node, then we use $v<v'$ to denote that $v$ is the left child and $v'$ is the right child of their shared parent. 
We also have $\Gamma \df \var(\decomp_{\tilde\alpha})$ and the function $\labelFunction \colon \mathcal{V} \rightarrow \Gamma$ that
\emph{labels} nodes from the concatenation tree with variables from $\var(\decomp_{\tilde\alpha})$. 
We use $v_r$ to denote the root of $\mathcal{T}$. 
The \emph{concatenation tree} of $\decomp_{\tilde\alpha}$ is defined as follows.

\begin{definition}
\label{defn:concatenationTree}
Let $\decomp_{\tilde\alpha} \df \cqhead{\vec{x}} \bigwedge_{i=1}^{n} (z_i \logeq x_i \cdot x_i')$ be a decomposition of $\tilde\alpha \in \brac(\alpha)$. 
We carry out the construction of a concatenation tree in two steps. 
First, we build a tree recursively. 
If $v \in \mathcal{V}$ is labeled with $z_i$ for $i \in [n]$, then there exists a left and right child of $v$ that are labeled with $x_i$ and $x_i'$ respectively.

In the second step, we prune the result of the above construction to remove redundancies. 
For each set of \emph{non-leaf nodes} that share a common label, we define an ordering $\ll$. 
If $\tau(v_i) = \tau(v_j)$ and the distance from the root of $\mathcal{T}$ to $v_j$ is strictly less than the distance from the root to $v_i$, then $v_j \ll v_i$. If $\tau(v_i) = \tau(v_j)$ and the distance from $v_r$ to $v_i$ and $v_j$ is equal, then $v_j \ll v_i$ if and only if $v_j$ appears to the \emph{right} of $v_i$. 
For each set of non-leaf nodes that share a common label, all nodes other than the $\ll$-maximum node are called \emph{redundant}. 
All descendants of redundant nodes are removed. 
\end{definition}

Concatenation trees for $\conclog\noconstr$s can be understood as a variation of \emph{derivation trees} for straight-line programs~\cite{lohrey2012algorithmics}.
While the pruning may seem somewhat unnatural, the concatenation tree of a decomposition is a useful tool that we shall use in~\cref{lemma:cycledistance} to characterize acyclic bracketings. 

Due to the pruning procedure, every non-leaf node represents a unique sub-bracketing. 
For every node $v$ with left child $v_l$ and right child $v_r$, we define $\atom(v) \df (\labelFunction(v) \logeq \labelFunction(v_l) \cdot \labelFunction(v_r))$.
Note that for any two non-leaf nodes $v,v' \in \mathcal{V}$ where $v \neq v'$, we have that $\atom(v) \neq \atom(v')$. We call $v \in \mathcal{V}$ an $x$-parent if one of the child nodes of $v$ is labeled $x$. If $v$ is an $x$-parent, then~$\atom(v)$ must contain the variable $x$.

\begin{definition}
Let $\decomp_{\tilde\alpha} \in \conclog\noconstr$ be the decomposition of $\tilde{\alpha} \in \brac$ and let $\mathcal{T}$ be the concatenation tree for $\decomp_{\tilde\alpha}$. 
For some $x \in \var(\decomp_{\tilde\alpha})$, we say that $\decomp_{\tilde\alpha}$ is \emph{$x$-localized} if all nodes that 
exist on the path between any two $x$-parents in $\mathcal{T}$ are also $x$-parents.
\end{definition}

Since there is exactly one concatenation tree for a decomposition $\decomp_{\tilde\alpha} \in \conclog$ of $\tilde\alpha \in \brac$, we can say $\decomp_{\tilde\alpha}$ is $x$-localized without referring to the concatenation tree of $\decomp_{\tilde\alpha}$.

\begin{example}
\label{example:concatTree}
Consider the pattern $\alpha \df x_1 x_2 x_1 x_2$ and the following two bracketings: \[\tilde\alpha_1 \df ( (x_1 \cdot x_2) \cdot (x_1 \cdot x_2) ) \text{ and } \tilde\alpha_2 \df (((x_1 \cdot x_2) \cdot x_1) \cdot x_2).\] The bracketing $\tilde\alpha_1$ is decomposed into $\decomp_1 \df \cqhead{} (z_1 \logeq x_1 \cdot x_2) \land (\strucvar \logeq z_1 \cdot z_1)$ and $\tilde\alpha_2$ is decomposed into $\decomp_2 \df \cqhead{} (z_1 \logeq x_1 \cdot x_2) \land (z_2 \logeq z_1 \cdot x_1) \land (\strucvar \logeq z_2 \cdot x_2)$. The concatenation trees for $\decomp_1$ and $\decomp_2$ are given in~\cref{fig:concatTree}. The label for each node is given in parentheses next to the corresponding node. We can see that $\atom(v_2) = (z_1 \logeq x_1 \cdot x_2)$. It follows that~$\decomp_2$ is $x_1$-localized, but $\decomp_2$ is not $x_2$-localized. Observe that $v_3 \ll v_2$, since $v_2$ appears to the left of $v_3$. Therefore, $v_3$ does not have any descendants, since it is a \emph{redundant node}.

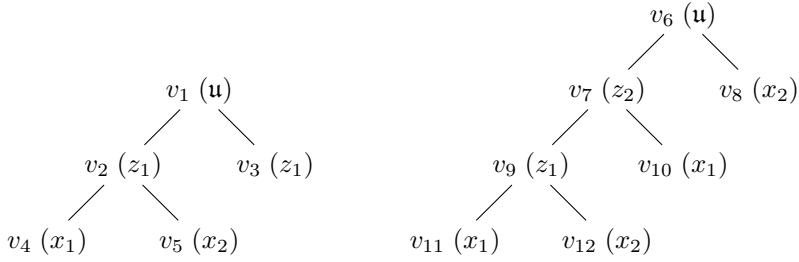
\begin{figure}
\begin{tikzpicture}[shorten >=1pt,->]
\tikzstyle{vertex}=[rectangle,fill=white!25,minimum size=12pt,inner sep=2pt]
\node[vertex] (1) at (0,0) {$v_1 \; (\strucvar)$};
\node[vertex] (2) at (1,-1)   {$v_3 \; (z_1)$};
\node[vertex] (3) at (-1,-1)  {$v_2 \; (z_1)$};
\node[vertex] (4) at (-2,-2) {$v_4 \; (x_1)$};
\node[vertex] (5) at (0,-2) {$v_5 \; (x_2)$};

\path [-](1) edge node[left] {} (2);
\path [-](1) edge node[left] {} (3);
\path [-](3) edge node[left] {} (4);
\path [-](3) edge node[left] {} (5);
\end{tikzpicture}\hspace{1cm}
\begin{tikzpicture}[shorten >=1pt,->]
\tikzstyle{vertex}=[rectangle,fill=white!25,minimum size=12pt,inner sep=2pt]
\node[vertex] (1) at (0,0) {$v_6 \; (\strucvar)$};
\node[vertex] (2) at (1,-1)   {$v_8 \; (x_2)$};
\node[vertex] (3) at (-1,-1)  {$v_7 \; (z_2)$};
\node[vertex] (4) at (-2,-2) {$v_9 \; (z_1)$};
\node[vertex] (5) at (0,-2) {$v_{10} \; (x_1)$};
\node[vertex] (6) at (-3,-3) {$v_{11} \; (x_1)$};
\node[vertex] (7) at (-1,-3) {$v_{12} \; (x_2)$};

\path [-](1) edge node[left] {} (2);
\path [-](1) edge node[left] {} (3);
\path [-](3) edge node[left] {} (4);
\path [-](3) edge node[left] {} (5);
\path [-](4) edge node[left] {} (6);
\path [-](4) edge node[left] {} (7);
\end{tikzpicture}
\caption{\label{fig:concatTree}Concatenation trees for the decompositions of $((x_1 \cdot x_2) \cdot (x_1 \cdot x_2))$ and $(((x_1 \cdot x_2) \cdot x_1) \cdot x_2)$. This figure is used to illustrate~\cref{example:concatTree}.}
\end{figure}
\end{example}

Utilizing concatenation trees for the decomposition $\decomp_{\tilde\alpha}$ of $\tilde\alpha\in\brac(\alpha)$, and the notion of $\decomp_{\tilde\alpha}$ being $x$-localized for $x \in \var(\decomp_{\tilde\alpha})$, we are now able to state sufficient and necessary conditions for $\alpha \in \Xi^+$ to be acyclic. 

\begin{restatable}[]{lemma}{cycledistance}
\label{lemma:cycledistance}
The decomposition $\decomp_{\tilde\alpha} \in \conclog\noconstr$ of $\tilde{\alpha} \in \brac(\alpha)$ is acyclic if and only if $\decomp_{\tilde\alpha}$ is $x$-localized for every $x \in \var(\decomp_{\tilde\alpha})$.
\end{restatable}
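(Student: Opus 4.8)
The plan is to exploit the one-to-one correspondence between the atoms of $\decomp_{\tilde\alpha}$ and the non-leaf nodes of its concatenation tree $\mathcal{T}$. Two facts set everything up. First, the non-leaf nodes of $\mathcal{T}$ are closed under taking parents, so they span a subtree $\mathcal{T}'$ of $\mathcal{T}$ containing the root, and $v\mapsto\atom(v)$ is a bijection from the non-leaf nodes onto the atoms; since every introduced variable is the left-hand side of exactly one word equation, distinct non-leaf nodes carry distinct labels. Second, for a variable $x$ the atoms that contain $x$ are precisely the $x$-parents together with the unique node labelled $x$ whenever such a node exists (for $x=\strucvar$ there are no $\strucvar$-parents and only the root atom contains $\strucvar$); and because every introduced variable is fresh, no atom has the form $(x\logeq x\cdot y)$ or $(x\logeq y\cdot x)$, so the node labelled $x$ is itself never an $x$-parent, whereas its parent in $\mathcal{T}$ is an $x$-parent and lies in $\mathcal{T}'$.

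For the direction from right to left I show that $\mathcal{T}'$, read as a tree on the atoms, is a join tree. By \cref{defn:join-tree} the only thing to verify is the running-intersection condition: for each variable $x$ the atoms containing $x$ induce a connected subtree of $\mathcal{T}'$. By the second fact this set consists of the $x$-parents and possibly the node $w_x$ labelled $x$. If $\decomp_{\tilde\alpha}$ is $x$-localized then the $x$-parents form a connected subtree of $\mathcal{T}$; they are all non-leaf, hence in $\mathcal{T}'$, and a path in $\mathcal{T}$ between two nodes of the subtree $\mathcal{T}'$ stays inside it, so they are connected in $\mathcal{T}'$. The node $w_x$, if present, is adjacent in $\mathcal{T}'$ to its parent, which is an $x$-parent, so connectivity is preserved. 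Hence $\mathcal{T}'$ is a join tree and $\decomp_{\tilde\alpha}$ is acyclic.

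For the converse I prove the contrapositive via the characterisation of $\alpha$-acyclicity as chordality plus conformality of the Gaifman graph. Assume $\decomp_{\tilde\alpha}$ is not $x$-localized and pick two $x$-parents at minimum distance whose connecting path $p_0=v_1,\dots,p_k=v_2$ in $\mathcal{T}$ contains a non-$x$-parent. Minimality forces $k\ge 2$ and forces every interior $p_i$ to be a non-$x$-parent; moreover the interior nodes of a path between non-leaf nodes are themselves non-leaf, so all $p_i$ carry pairwise distinct labels. A short argument on strict containment of sub-bracketings shows that the node labelled $x$, if it lies on the path at all, must be the topmost node $p_m$; in particular $x$ is not the label of any $p_i$ other than possibly $p_m$. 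Now consecutive atoms $\atom(p_i),\atom(p_{i+1})$ share the label of whichever of $p_i,p_{i+1}$ is the child of the other, and these shared variables turn out to be exactly the labels of the $p_i$ with $i\ne m$; they are therefore pairwise distinct and, by the previous sentence, distinct from $x$. Since $x\in\var(\atom(v_1))\cap\var(\atom(v_2))$, chaining these co-occurrences produces a genuine cycle through $x$ in the Gaifman graph. One then argues that this cycle either contains a chordless cycle of length at least four, or contracts to a triangle whose three variables lie in no single atom -- for such an atom would make the node labelled $x$ an $x$-parent, or would force two of the $p_i$ to coincide -- so the Gaifman graph is not chordal-and-conformal and $\decomp_{\tilde\alpha}$ is cyclic.

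The main obstacle is this last step of the converse: showing that the cycle just constructed genuinely obstructs $\alpha$-acyclicity. This demands careful bookkeeping about how the path $p_0,\dots,p_k$ bends at $p_m$, about leaf copies of left-hand-side variables that might introduce chords, and about which triangles of the Gaifman graph can fail to be covered by an atom. The mechanism that makes everything go through is the pruning in \cref{defn:concatenationTree}, which guarantees that distinct non-leaf nodes carry distinct labels and that each atom corresponds to a unique sub-bracketing.
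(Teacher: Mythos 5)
Your right-to-left direction is correct and is essentially the paper's argument (turn the pruned concatenation tree into a join tree by replacing non-leaf nodes with their atoms and discarding leaves). The problem is the converse. Your setup there is fine -- a minimum-distance pair of $x$-parents $p_0,p_k$ with non-$x$-parent interior nodes, pairwise distinct labels, the shared variables being the labels $\tau(p_i)$ with $i\neq m$, and the non-leaf node labelled $x$ not lying on the path -- but the step you defer, ``this cycle either contains a chordless cycle of length at least four or contracts to a triangle covered by no atom,'' is where the entire difficulty of the lemma lives, and the justification you sketch for the triangle case is not correct. Take $k=2$ with $p_1$ the common parent of the two $x$-parents, so $s_1=\tau(p_0)$, $s_2=\tau(p_2)$, $\atom(p_0)=(s_1\logeq x\cdot u)$ and $\atom(p_2)=(s_2\logeq x\cdot v)$. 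An atom covering the triangle $\{x,s_1,s_2\}$ would for instance be $(s_1\logeq x\cdot s_2)$, i.e.\ $u=s_2$; this neither ``makes the node labelled $x$ an $x$-parent'' (there need be no non-leaf node labelled $x$ at all when $x$ is an original pattern variable) nor ``forces two of the $p_i$ to coincide.'' The actual reason it cannot occur is the orientation of the order $\ll$ in \cref{defn:concatenationTree}: the child of $p_0$ labelled $s_2$ would lie strictly deeper than $p_2$, the pruning keeps the $\ll$-maximum (deepest, leftmost) copy, so $p_2$ would be redundant and hence not an $x$-parent in the pruned tree, contradicting the choice of $p_2$.

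This also refutes your closing claim that the ``mechanism that makes everything go through'' is merely that distinct non-leaf nodes carry distinct labels and each atom corresponds to a unique sub-bracketing. Both of those facts would hold equally if the pruning kept the shallowest copy instead of the deepest, yet under that convention the only-if direction is simply false: for $\tilde\alpha=((x\cdot(x\cdot y))\cdot(x\cdot y))$ the decomposition $(z_1\logeq x\cdot y)\land(z_2\logeq x\cdot z_1)\land(\strucvar\logeq z_2\cdot z_1)$ is acyclic (these three atoms in this order form a join tree), but expanding the shallow copy of $z_1$ would leave the two $x$-parents separated by the root, which is not an $x$-parent. So any completion of your converse must use the direction of $\ll$ in an essential, quantitative way -- exactly the depth-to-root comparisons carried out in the paper's Cases 3.3 and 3.4 -- and must also handle chords created by off-path atoms that contain leaf copies of $x$ or of the $s_i$ before a chordless cycle or uncovered clique can be extracted. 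None of that bookkeeping is present, and the one concrete reason you do offer for non-conformality fails in the central case above; as it stands, the only-if direction is a plan rather than a proof.
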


The proof of the if-direction is rather straightforward: Take the concatenation tree of $\decomp_{\tilde\alpha}$, replace each non-leaf node $v \in \mathcal{V}$ with $\atom(v)$, then remove all leaf nodes from the concatenation tree of $\decomp_{\tilde\alpha}$. 
This gives us a join tree for $\decomp_{\tilde\alpha}$. 
The only-if direction for~\cref{lemma:cycledistance} is somewhat more technical. 
This is because we need to prove this direction for the most general join tree of~$\decomp_{\tilde\alpha}$.
We prove this by contradiction, showing that there does not exist a valid label for certain non-leaf nodes of the concatenation tree if $\decomp_{\tilde\alpha}$ is not $x$-localized for some variable $x \in \var(\decomp_{\tilde\alpha})$.

Refering back to~\cref{example:concatTree}, we see that $\decomp_2$ is not $x_2$-localized and therefore $\decomp_2$ is cyclic, whereas we have that $\decomp_1$ is $x$-localized for all $x \in \var(\decomp_1)$ and hence $\decomp_1$ is acyclic. 

\begin{restatable}[]{theorem}{polytime}
\label{polytime}
Whether $\alpha \in \Xi^+$ is acyclic can be decided in time $\bigO(|\alpha|^7)$.
\end{restatable}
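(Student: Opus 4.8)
The plan is to reduce, via \cref{lemma:cycledistance}, to a search for a suitable bracketing: $\alpha$ is acyclic if and only if there is $\tilde\alpha\in\brac(\alpha)$ whose decomposition $\decomp_{\tilde\alpha}$ is $x$-localized for every $x\in\var(\decomp_{\tilde\alpha})$. Since $|\brac(\alpha)|=C_{|\alpha|-1}$ is exponential, the search cannot be by enumeration; instead I would build a dynamic program over the $\bigO(|\alpha|^2)$ factors of $\alpha$, thinking of each factor $\alpha[i..j]$ as a candidate sub-bracketing, i.e.\ a subtree of a concatenation tree, and of the whole pattern as the factor $\alpha[1..|\alpha|]$ with root variable $\strucvar$.

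The crucial point is that, although being $x$-localized is a global condition on the concatenation tree, only polynomially much information about a sub-bracketing of $\alpha[i..j]$ matters for whether it can be completed to a globally localized bracketing. If a variable $x$ (or an induced variable standing for a repeated sub-bracketing) occurs both inside $\alpha[i..j]$ and outside it, then the path between an inner and an outer $x$-parent must pass through the root $v$ of the subtree of $\alpha[i..j]$; for that path to consist only of $x$-parents, $v$ itself must be an $x$-parent, which forces one child of $v$ to be exactly $x$, and iterating this shows that $x$ must be \emph{peelable} from the side of $[i..j]$ that faces the occurrence (the bracketing must repeatedly split off $x$, and only $x$, from that end). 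This is exactly the phenomenon behind \cref{cycPat}: in $x_1x_2x_1x_3x_1$ the three copies of $x_1$ cannot be peeled consecutively from either end, so no bracketing is localized for $x_1$. Consequently I would summarize a sub-bracketing by an \emph{interface} consisting of (i) a flag recording whether it is \emph{internally consistent}, meaning every variable (and induced sub-bracketing variable) occurring only inside $[i..j]$ is localized within the subtree, and (ii) a bounded description — a constant number of indices into $\alpha$ — of the units that are peeled off at the top from the left end and from the right end. There are only polynomially many such interfaces.

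The dynamic program then fills a table indexed by a factor $\alpha[i..j]$ together with a left- and a right-interface, recording whether some internally consistent bracketing of $\alpha[i..j]$ realizes that pair of interfaces. The base case is $|[i..j]|=1$. For longer factors one ranges over the split point $k$ with $i\le k<j$ and over interfaces of $\alpha[i..k]$ and $\alpha[k+1..j]$, uses the peeling analysis to decide whether joining them at a new concatenation node keeps every variable localized, and propagates the resulting interface. The pattern $\alpha$ is acyclic iff $\alpha[1..|\alpha|]$ admits an internally consistent bracketing with empty interface (there is no ``outside'' with which a variable could be shared, and $\strucvar$ is trivially localized). With $\bigO(|\alpha|^2)$ factors and polynomially many — roughly $\bigO(|\alpha|^4)$ — pairs of interfaces, hence $\bigO(|\alpha|^6)$ table cells, each computable in $\bigO(|\alpha|)$ time by ranging over split points, we obtain the claimed $\bigO(|\alpha|^7)$ bound.

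The main obstacle is the structural ``peeling lemma'' underlying step two: one must prove that a sub-bracketing of $\alpha[i..j]$ extends to a globally $x$-localized bracketing of $\alpha$ exactly when its interface is compatible with the surrounding factor, handling the original variables of $\alpha$ and the induced variables from \cref{defn:conclogConversion} simultaneously, and verifying that the pruning in \cref{defn:concatenationTree} neither destroys nor creates $x$-parents in a way that invalidates the bookkeeping. This also requires checking that ``peelability'' composes correctly under concatenation — in particular that stacking several copies of the same unit is consistent with localization of that unit's induced variable, as in the acyclic bracketing of $x_1x_2x_1x_2$. Once this lemma is established, the dynamic program and its running-time analysis are routine.
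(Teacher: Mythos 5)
Your high-level plan is the same as the paper's (a bottom-up computation over the $\bigO(|\alpha|^2)$ factors of $\alpha$, grounded in the $x$-localization characterization of \cref{lemma:cycledistance}), but the place where you defer the work -- the ``peeling lemma'' and the claim that a constant-size interface of peeled end-units captures everything that matters about a sub-bracketing -- is exactly where the entire difficulty of \cref{polytime} lives, and as stated it is both unproven and partly wrong. First, localization of a shared variable $x$ does not force $x$ to be peelable ``from the side of $[i..j]$ that faces the occurrence'': the spine of $x$-parents from an inner $x$-parent up to the subtree root may split off copies of $x$ alternately from \emph{both} ends (this is precisely the $y^i\cdot\beta\cdot y^j$ shape that appears in the paper's treatment of constrained decompositions), so a one-sided bookkeeping is incomplete. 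Second, joinability of two acyclic pieces is not only a matter of peeled end-units: two operands with \emph{identical} factors join (as in $x_1x_2\mid x_1x_2$), as do variable-disjoint operands, and the remaining case is that one operand coincides, as a factor, with a \emph{root-child} of the other -- i.e.\ with its top-level split -- not with some multi-level peel sequence. Your sketch oscillates between needing the full iterated peel (which is not describable by a constant number of indices) and claiming a constant-size interface, without resolving which, and without proving that whatever summary you fix is simultaneously sound and complete in the presence of the introduced variables and of the pruning of \cref{defn:concatenationTree}, which can delete $x$-parents inside the very subtree you are summarizing.

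The paper closes this gap differently and more concretely: for each interval it records \emph{all} acyclic root-splits in an edge relation (so the per-interval information is a set of $\bigO(|\alpha|)$ splits, each one index, rather than a constant-size interface), and it proves a six-case criterion -- via the root atoms $(z_1\logeq x_1\cdot y_1)$ and $(z_2\logeq x_2\cdot y_2)$ and \cref{lemma:cycledistance} -- that the concatenation of two acyclic bracketings is acyclic iff the factors are equal, variable-disjoint, or one equals a recorded root-child of the other; completeness is shown by exhibiting a non-localized variable whenever all cases fail. Without an analogue of this criterion your dynamic program has no proven transition rule, so the proposal does not yet constitute a proof. (Your complexity accounting also undercounts the transition cost: computing a cell indexed by a factor and a pair of interfaces requires ranging over split points \emph{and} over the interfaces of the two sub-factors, not just over split points.)
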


We prove~\cref{polytime} by giving a bottom-up algorithm that continuously adds larger acyclic subpatterns of $\alpha$ to a set. To determine whether concatenating two acyclic subpatterns results in a larger acyclic subpattern, we also keep an edge relation and check whether $x$ is localized, see~\cref{lemma:cycledistance}. We terminate the algorithm when the edge relation has reached a fixed-point. In the proof of~\cref{polytime}, we also show that if $\alpha$ is acyclic, then we can construct a concatenation tree for a decomposition for $\tilde\alpha \in \brac(\alpha)$ in $\bigO(|\alpha|^7)$ time.
\section{Acyclic FC-CQs}\label{sec:acycCQFC}
In this section, we generalize from decomposing patterns to decomposing $\cpfc$s. 
The main result of this section is a polynomial-time algorithm to determine whether an $\cpfc$ can be decomposed into an acyclic $\conclog$.
We do this to find a notion of acyclicity for $\cpfc$s such that the resulting fragment is tractable.

Decomposing a word equation $(x \logeq \alpha)$ where $x \in \Xi$ and $\alpha \in (\Xi \setminus \{x\})^+$ is analogous to decomposing $\alpha$, but whereas $\strucvar$ is the root variable when decomposing a pattern, we use $x$ as the root variable when decomposing $(x \logeq \alpha)$. 

If every atom of $\varphi \in \cpfc$ is acyclic, then $\varphi$ does not necessarily have tractable model checking.
If this were the case, then any decomposition $\decomp_{\tilde\alpha} \in \conclog$ of some $\tilde\alpha \in \brac$ would have tractable model checking (because every word equation of the form $z \logeq x \cdot y$ is acyclic).
This would imply that the membership problem for patterns can be solved in polynomial time, which contradicts~\cite{ehrenfreucht1979finding}, unless $\mathsf{P} = \mathsf{NP}$.
Furthermore, if we define $\varphi \in \cpfc$ to be acyclic if there exists a join tree for $\varphi$ where every word equation is an atom, then model checking for $\varphi$ is not tractable.
To show this, consider $\varphi \df \cqhead{} (\strucvar \logeq \alpha)$. 
Model checking for $\varphi$ is equivalent to the membership problem for $\alpha$, which is $\mathsf{NP}$-complete~\cite{ehrenfreucht1979finding}.
Therefore, we require a more refined notion of acyclicity for $\cpfc$s.

In~\cref{sec:decomp}, we studied the decomposition of terminal-free patterns. 
If $\varphi$ is an $\cpfc$ with the body $\cqhead{\vec{x}} \bigwedge_{i=1}^n \eta_i$, then the right-hand side of some $\eta_i$ may not be terminal-free. 
Therefore, before defining the decomposition of $\cpfcreg$s, we define a way to \emph{normalize} $\cpfcreg$s in order to better utilize the techniques of~\cref{sec:decomp}. 

\begin{definition}
\label{defn:normalization}
We call an $\cpfc$ with body $\bigwedge_{i=1}^n (x_i \logeq \alpha_i)$ \emph{normalized} if for all $i,j \in [n]$, we have $\alpha_i \in \Xi^+$, $x_i \notin \var(\alpha_i)$, $\strucvar \notin \var(\alpha_i)$, and $\alpha_i = \alpha_j$ if and only if $i = j$.

An $\cpfcreg$ with body $\bigwedge_{i=1}^n (x_i \logeq \alpha_i) \land \bigwedge_{j=1}^m (y_j \regconst \gamma)$ is \emph{normalized} if the subformula $\bigwedge_{i=1}^n (x_i \logeq \alpha_i)$ is normalized.
\end{definition}

Since we are interested in polynomial time algorithms, the following lemma allows us to assume that all $\cpfc\noconstr$s are normalized without affecting any claims about complexity.

\begin{restatable}[]{lemma}{normalization}
\label{lemma:normalization}
Given $\varphi \in \cpfcreg$, we can construct an equivalent, normalized $\cpfcreg$ in time $\bigO(|\varphi|^2)$.
\end{restatable}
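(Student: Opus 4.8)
The plan is to show that any $\varphi \in \cpfcreg$ can be rewritten into an equivalent normalized $\cpfcreg$ by applying three transformations, each of which is easily seen to run in quadratic time and preserve the set $\fun{\varphi}\strucbra{w}$ for every $w \in \Sigma^*$. Recall that a normalized formula must satisfy, for its word-equation atoms $(x_i \logeq \alpha_i)$: (i) each right-hand side $\alpha_i$ is terminal-free, \ie $\alpha_i \in \Xi^+$; (ii) $x_i \notin \var(\alpha_i)$; (iii) $\strucvar \notin \var(\alpha_i)$; and (iv) the right-hand sides are pairwise distinct. Since, by \cref{defn:normalization}, the regular-constraint conjuncts play no role in normalization (they only need the word-equation part to be normalized), it suffices to handle the subformula $\bigwedge_{i=1}^n (x_i \logeq \alpha_i)$ and leave the $(y_j \regconst \gamma)$ atoms untouched.

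First I would eliminate terminals to achieve (i). For each terminal symbol $\mathtt{a}\in\Sigma$ occurring on the right-hand side of some $\eta_i$, introduce a single fresh existentially quantified variable $t_{\mathtt{a}}$, add the regular constraint $(t_{\mathtt{a}} \regconst \mathtt{a})$ (which forces $\subs(t_{\mathtt{a}})=\mathtt{a}$, noting that $\mathtt{a}\sqsubseteq w$ is needed but can be guaranteed by the construction of \cref{Prop:RGXtoPatCQ}-style formulas — or, if one prefers to stay within the pure word-equation framework when no constraints were present, replace $\mathtt{a}$ by a fresh variable constrained via an equation in terms of factors of $\strucvar$; either way the blow-up is constant per occurrence), and replace every occurrence of $\mathtt{a}$ on a right-hand side by $t_{\mathtt{a}}$. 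Next, to achieve (iii), observe that if $\strucvar \in \var(\alpha_i)$ for some $i$, the conjunct $(x_i \logeq \alpha_i)$ with $\strucvar$ on the right can be handled by noting $\subs(\strucvar)=w$ is the whole input and $\subs(x_i)\sqsubseteq w$, so an equation with $\strucvar$ strictly inside the right-hand side together with other variables is only satisfiable in degenerate cases; more robustly, one replaces $\strucvar$ on right-hand sides by a fresh variable $u'$ together with the conjunct $(\strucvar \logeq u')$ — but since $\strucvar$ itself is never on the left of an $\cpfc$ atom by our conventions, the cleanest move is: whenever $\alpha_i$ contains $\strucvar$, we split $\alpha_i$ at the occurrences of $\strucvar$ and introduce fresh prefix/suffix variables with appropriate equations to $\strucvar$, again at constant cost per occurrence.

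Then, to achieve (ii), for each $i$ with $x_i \in \var(\alpha_i)$, introduce a fresh variable $x_i'$, add the conjunct $(x_i \logeq x_i')$, and rewrite the atom as $(x_i' \logeq \alpha_i)$ after replacing the offending occurrences of $x_i$ in $\alpha_i$ by $x_i'$ as well — iterate if necessary, but one pass suffices since each new variable is fresh. Finally, to achieve (iv), scan the list of atoms and, whenever two atoms $(x_i \logeq \alpha)$ and $(x_j \logeq \alpha)$ share the same right-hand side $\alpha$, keep one of them, say the $i$-th, drop the $j$-th, and add the conjunct $(x_i \logeq x_j)$; this is clearly equivalence-preserving and reduces the number of duplicate right-hand sides, so it terminates after at most $n$ steps. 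For the running time: the formula has size $O(|\varphi|)$; step one touches at most $O(|\varphi|)$ terminal occurrences; step two at most $O(|\varphi|)$ occurrences of $\strucvar$; step three at most $n \le |\varphi|$ atoms; and the de-duplication step, implemented naively by comparing each right-hand side against all others, costs $O(|\varphi|^2)$, which dominates. Correctness in each step is a routine check that the set of satisfying substitutions (restricted to the original free variables, with the fresh variables existentially quantified) is unchanged — the fresh variables are each forced to a unique value determined by the constraint or equation we added, so they do not alter the projection onto $\vec{x}$.

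The main obstacle, and the step requiring the most care, is handling the universe variable $\strucvar$ on right-hand sides together with the interaction of the "factor of $\strucvar$" semantics: every variable must be substituted by a factor of $w$, so when we introduce fresh variables we must make sure they too are forced to be factors of $w$, which is automatic for the fresh terminal-variables and fresh copies but needs a one-line argument. I expect the bookkeeping — rather than any deep idea — to be where a careful write-up spends its effort, and the quadratic bound comes solely from the duplicate-elimination scan.
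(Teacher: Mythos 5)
Your step for condition (ii) of \cref{defn:normalization} ($x_i \notin \var(\alpha_i)$) has a genuine gap. Writing $\alpha_i = \beta_1 \cdot x_i \cdot \beta_2$, your rewrite yields $(x_i \logeq x_i') \land (x_i' \logeq \beta_1' \cdot x_i' \cdot \beta_2')$, because you replace the occurrences of $x_i$ \emph{inside} $\alpha_i$ by $x_i'$ as well: the fresh head variable now occurs on the right-hand side of its own atom, so the violation of condition (ii) is renamed rather than removed, and ``iterating'' the step never terminates — each pass recreates the same shape with a fresher variable. (Renaming only the head and leaving $x_i$ inside $\alpha_i$ would in fact already meet the letter of \cref{defn:normalization}, but that is not what you wrote.) The idea your proposal is missing, and on which the paper's proof rests, is a length argument: any $\sigma$ satisfying $(x \logeq \beta_1 \cdot x \cdot \beta_2)$ has $|\sigma(x)| = |\sigma(\beta_1)| + |\sigma(x)| + |\sigma(\beta_2)|$, hence $\sigma(\beta_1) = \sigma(\beta_2) = \emptyword$, so the atom can be replaced outright by $(x \logeq z) \land \bigwedge_{y \in \var(\beta_1 \cdot \beta_2)} (y \regconst \emptyword)$ with $z$ fresh. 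The same counting argument handles $\strucvar$ on a right-hand side: there it forces $\sigma(x_i) = \sigma(\strucvar)$ (since $|\sigma(x_i)| \leq |\sigma(\strucvar)|$) and all other variables empty, so the atom becomes $(\strucvar \logeq x_i) \land \bigwedge_{y}(y \regconst \emptyword)$. Your alternative for condition (iii) — aliasing $\strucvar$ by a fresh $u'$ with $(\strucvar \logeq u')$ — does happen to be sound, precisely because $\strucvar$ is not the head of the offending atom; that is exactly why the same aliasing trick cannot be transplanted to condition (ii).

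Two smaller points. Your claim that $\strucvar$ never occurs on the left of an atom is false (it does throughout the paper, \eg in root atoms, and normalized formulas explicitly allow it; only right-hand occurrences are forbidden), though nothing in your construction depends on it. And the de-duplication step needs a little more care than stated: for two \emph{identical} atoms you must simply drop one (adding $(x_i \logeq x_i)$ would violate condition (ii)), the added equations $(x_j \logeq x_i)$ can themselves create equal right-hand sides, and a naive ``repeat until no duplicates'' loop of quadratic passes threatens your claimed $\bigO(|\varphi|^2)$ bound; the paper handles this by deleting exact duplicates and replacing $(x_j \logeq \alpha)$ by $(x_j \logeq x_i)$ in a single pairwise scan. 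Your terminal-elimination step is fine and matches the paper's (which replaces maximal terminal blocks by fresh variables carrying regular constraints).
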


To prove~\cref{lemma:normalization} we use a simple re-writing procedure. 
We replace every terminal factor in our formula with a new variable, and use a regular constraint to determine which terminal word that variable represents.
If $\sigma$ is a morphism that satisfies $(x \logeq \alpha)$ for some $\alpha \in \Xi$, then $|\sigma(x)| = |\sigma(\alpha)|$. 
Therefore, if $x \in \alpha$, then $|\sigma(x)| = |\sigma(\alpha_1)| + |\sigma(x)| + |\sigma(\alpha_2)|$ where $\alpha = \alpha_1 \cdot x \cdot \alpha_2$.
We can then determine that $\sigma(\alpha_1) \cdot \sigma(\alpha_2) = \emptyword$.
Hence, $x \logeq \alpha$ can be replaced with $(x \logeq y) \land \bigwedge_{z \in \var(\alpha_1 \cdot \alpha_2)}(z \regconst \emptyword)$ where $y$ is a new and unique variable.
An analogous method is used if $\strucvar \in \var(\alpha)$.

\begin{example}
We define an $\cpfcreg$ along with an equivalent normalized $\cpfcreg$:
\begin{align*}
\varphi \df& \cqhead{\vec{x}} (x_1 \logeq x_2 \cdot \strucvar \cdot x_2) \land (x_4 \logeq x_4) \land  (x_3 \logeq \mathtt{aab}), \\
\varphi' \df& \cqhead{\vec{x}} (\strucvar \logeq x_1) \land (x_2 \regconst \emptyword) \land (x_4 \logeq z_2) \land  (x_3 \logeq z_1) \land (z_1 \regconst \mathtt{aab}).
\end{align*}
\end{example}

We now generalize the process of decomposing patterns to decomposing $\cpfc\noconstr$s. 
For every $\cpfc\noconstr$ $\varphi \df \cqhead{\vec{x}} \bigwedge_{i=1}^{n} \eta_i$, we say that a   $\conclog$ $\decomp_\varphi \df \cqhead{\vec{x}} \bigwedge_{i=1}^{n}  \decomp_i$ is a \emph{decomposition} of $\varphi$ if every $ \decomp_i$ is a decomposition of $\eta_i$ and, for all $i,j\in [n]$ with $i\neq j$, the sets of introduced variables for $\decomp_i$ and $\decomp_j$ are disjoint.

\begin{example}
\label{example:LVDecomp}
Let $\varphi \in \cpfc$ be defined as follows:
\[\varphi \df \cqhead{\vec{x}} (x_1 \logeq y_1 \cdot y_2 \cdot y_3) \land (x_2 \logeq y_2 \cdot y_3 \cdot y_3 \cdot y_4).\]

We now consider the following decompositions for each word equation of $\varphi$:
\[ \decomp_1 \df (x_1 \logeq y_1 \cdot z_1) \land (z_1 \logeq y_2 \cdot y_3), \text{ and } \decomp_2 \df (x_2 \logeq z_2 \cdot y_4) \land (z_2 \logeq z_3 \cdot y_3) \land (z_3 \logeq y_2 \cdot y_3). \]

Therefore, $\decomp_\varphi \df \cqhead{\vec{x}} \decomp_1 \land \decomp_2$ is a decomposition of $\varphi$. 
\end{example}

\begin{definition}[Acyclic $\cpfc$s]
If $\decomp_\varphi \in \conclog\noconstr$ is a decomposition of $\varphi \in \cpfc$, we say that $\decomp_\varphi$ is \emph{acyclic} if there exists a join tree for $\decomp_\varphi$. Otherwise, $\decomp_\varphi$ is \emph{cyclic}. If there exists an acyclic decomposition of $\varphi$, then we say that $\varphi$ is \emph{acyclic}. Otherwise, $\varphi$ is \emph{cyclic}. 
\end{definition}

Recall that, since $\strucvar$ is always mapped to $w$, we can consider $\strucvar$  a constant symbol.
Therefore, if $T \df (V,E)$ is a join tree for some decomposition of $\varphi$, then there can exist two nodes that both contain $\strucvar$, yet it is not necessary for all nodes on the path between these two nodes to also contain $\strucvar$. 
Referring back to~\cref{example:LVDecomp}, we can see that $\varphi$ is acyclic by executing the GYO algorithm on the decomposition (see Chapter 6 of~\cite{abiteboul1995foundations} for more information on acyclic joins). 
Our next focus is to study which $\cpfc$s are acyclic, and which are not.

\begin{restatable}[]{lemma}{subtree}
\label{lemma:subtree}
If $\decomp_\varphi \in \conclog\noconstr$ is a decomposition of $\varphi \df \cqhead{\vec{x}} \bigwedge_{i=1}^n \eta_i$, and we have a join tree $T \df (V,E)$ for $\decomp_\varphi$, then we can partition $T$ into $T^1, T^2, \dots T^n$ such that for each $i \in [n]$, we have that $T^i$ is a join tree for a decomposition of $\eta_i$.
\end{restatable}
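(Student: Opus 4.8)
The plan is to show that the variables introduced by each $\decomp_i$ behave like ``private'' variables of $\eta_i$, and then use this to chop $T$ into connected components. First I would recall that a decomposition $\decomp_\varphi \df \cqhead{\vec{x}} \bigwedge_{i=1}^n \decomp_i$ has, by definition, the property that the sets of introduced variables of distinct $\decomp_i$ and $\decomp_j$ are pairwise disjoint. The only variables that can be shared between an atom of $\decomp_i$ and an atom of $\decomp_j$ for $i \neq j$ are therefore variables occurring in $\var(\eta_i) \cap \var(\eta_j)$, i.e.\ variables of the original (undecomposed) atoms; all ``fresh'' variables $z$ created inside $\decomp_i$ appear only in atoms of $\decomp_i$, and moreover such a $z$ appears in at least two atoms of $\decomp_i$ (the atom $(z \logeq x \cdot y)$ that defines it, and the atom where $z$ is used), unless $z$ is the root variable $x_i$ of $\eta_i$, which may itself be shared with other atoms.

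Next I would define, for each $i \in [n]$, the set $V^i \subseteq V$ of nodes of $T$ that are atoms of $\decomp_i$. These sets partition $V$ since the decomposition is a disjoint union of the $\decomp_i$. Let $T^i$ be the subgraph of $T$ induced by $V^i$. The key claim is that each $T^i$ is connected, hence a tree. To see this, take two atoms $\chi, \chi' \in V^i$ and consider the path between them in $T$. I want to argue this path stays inside $V^i$. The atom $(z \logeq x \cdot y)$ defining any fresh variable $z$ of $\decomp_i$ lies in $V^i$, and the (at least one) other atom using $z$ also lies in $V^i$; since $T$ is a join tree, $z$ must label every node on the path between them. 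I would then use the standard fact that the ``blocks'' of a join tree with respect to a single variable form connected subtrees, together with an induction on the structure of the decomposition (peeling off leaves of the concatenation tree / straight-line program order of $\decomp_i$): each fresh variable $z_k$ of $\decomp_i$, introduced in the $k$-th step, glues the defining atom to the already-connected portion of $T^i$ built so far, because the path realizing the ``$z_k$ appears on all nodes between its two occurrences'' condition cannot leave $V^i$ without that intermediate node also being labelled $z_k$ — but $z_k$ occurs in no atom outside $V^i$. This forces the path to be internal to $V^i$, establishing connectedness.

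Once $T^i$ is known to be a tree, it is automatically a join tree: the running-intersection (connectedness) property for every variable $x$ is inherited from $T$, because any node on the $T^i$-path between two atoms of $V^i$ sharing $x$ is also on the corresponding $T$-path (the $T$-path between two nodes of $V^i$ coincides with the $T^i$-path, by connectedness of $T^i$ and uniqueness of paths in a tree), and $T$ is a join tree. Finally, $T^i$ is a join tree \emph{for} $\decomp_i$, and since $\decomp_i$ is by hypothesis a decomposition of $\eta_i$, $T^i$ is a join tree for a decomposition of $\eta_i$, as required.

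I expect the main obstacle to be the connectedness argument: one has to rule out that a path in $T$ between two atoms of $\decomp_i$ detours through atoms of some $\decomp_j$, $j \neq i$. The clean way to do this is to exploit that the fresh variables of $\decomp_i$ are confined to $V^i$ \emph{and} that $\decomp_i$'s atoms are chained together through these fresh variables in the straight-line-program order (each newly introduced $z_k$ links its defining atom to an atom that uses it, and ultimately everything links up to the root atom of $\eta_i$); the join-tree running-intersection property then propagates connectivity along this chain without ever forcing a foreign atom onto the path. Care is needed when $\eta_i$ already shares original variables with other atoms — but those shared variables are exactly the non-introduced ones, so they do not interfere with the internal linkage of $\decomp_i$.
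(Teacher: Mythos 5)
Your proposal is correct and follows essentially the same route as the paper: take the subgraph of $T$ induced by the atoms of each $\decomp_i$, prove connectedness by exploiting that the introduced variables of $\decomp_i$ occur in no other $\decomp_j$ together with the running-intersection property of $T$ (the paper phrases this chaining via the concatenation tree, you via the straight-line-program order, which is the same structure), and conclude the partition from the disjointness of the introduced-variable sets.
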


To prove~\cref{lemma:subtree}, we consider a join tree $T \df (V,E)$ for the acyclic decomposition $\decomp_\varphi \in \conclog\noconstr$ of $\varphi \in \cpfc\noconstr$, along with the induced subgraph of $T$ on the set of atoms for a decomposition of a single atom of $\varphi$. 
We show that this subgraph is connected, and since the introduced variables are disjoint for separate atoms of $\varphi$, this forms a partition on $T$. 

Let $\varphi \df \cqhead{\vec{x}} \bigwedge_{i=1}^n \eta_i$ be a normalized $\cpfc\noconstr$. A join tree $T \df (V,E)$ for $\varphi$ where $V = \{\eta_i \mid i \in [n]\}$ is called a \emph{weak join tree}. If there exists a weak join tree for $\varphi$, then we say that $\varphi$ is \emph{weakly acyclic}. Otherwise, $\varphi$ is \emph{weakly cyclic}. 
Clearly weak acyclicity is not sufficient for tractability, as discussed at the start of the current section.

\begin{example}
Consider the following normalized $\cpfc\noconstr$:
\[ \varphi \df \cqhead{\vec{x}} (\strucvar \logeq x_1 \cdot x_2 \cdot x_1 \cdot x_3 \cdot x_1) \land (x_1 \logeq x_4 \cdot x_5 \cdot x_5) \land (x_6 \logeq x_7 \cdot x_7 \cdot x_7).\]

Using the GYO algorithm, we can see that $\varphi$ is weakly acyclic. 
\end{example}

Let $\varphi \df \cqhead{\vec{x}} \bigwedge_{i=1}^n \eta_i$ be an $\cpfc\noconstr$, and let $\decomp_\varphi$ be an acyclic decomposition of $\varphi$. 
If $T \df (V,E)$ is a join tree of $\decomp_\varphi$, then for each $i \in [n]$, we use  $T^i \df (V^i, E^i)$ to denote the subtree of $T$ that is a join tree for the decomposition of $\eta_i$. 
We know that $T^i$ and $T^j$ are disjoint for all $i,j \in [n]$ where $i \neq j$, see~\cref{lemma:subtree}.

\begin{restatable}[]{lemma}{CyclicConditions}
\label{lemma:CyclicConditions}
Let $\varphi \df \cqhead{\vec{x}} \bigwedge_{i=1}^{n} \eta_i$ be a normalized $\cpfc\noconstr$. If any of the following conditions holds, then $\varphi$ is cyclic: 
\begin{enumerate}
\item $\varphi$ is weakly cyclic,
\item $\eta_i$ is cyclic for any $i \in [n]$,
\item $|\var(\eta_i) \intersect \var(\eta_j)| > 3$ for any $i, j \in [n]$ where $i \neq j$, or
\item $|\var(\eta_i) \intersect \var(\eta_j)| = 3$, and $|\eta_i| > 3$ or $|\eta_j| > 3$ for any $i, j \in [n]$ where $i \neq j$. 
\end{enumerate}\leavevmode
\end{restatable}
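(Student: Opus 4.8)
The plan is to establish each of the four sufficient conditions for cyclicity separately, using \cref{lemma:subtree} and \cref{lemma:cycledistance} as the main tools. For condition~(1), suppose towards a contradiction that $\varphi$ is weakly cyclic but has an acyclic decomposition $\decomp_\varphi$ with join tree $T$. By \cref{lemma:subtree}, we can partition $T$ into $T^1,\dots,T^n$ where each $T^i$ is a join tree for a decomposition of $\eta_i$. The idea is to contract each $T^i$ to a single node: since each $T^i$ is connected and the introduced variables of distinct $\decomp_i$ are disjoint, contracting each $T^i$ yields a tree on the nodes $\{\eta_i \mid i \in [n]\}$, and the connectedness (running intersection) property is preserved for the original (non-introduced) variables, which are exactly the variables of the $\eta_i$. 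This gives a weak join tree for $\varphi$, contradicting weak cyclicity. For condition~(2), if some $\eta_i$ is cyclic, then by definition every decomposition of $\eta_i$ is cyclic, so by \cref{lemma:subtree} no join tree $T$ for $\decomp_\varphi$ can exist (its restriction $T^i$ would be a join tree for a decomposition of $\eta_i$), hence $\varphi$ is cyclic.

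For conditions~(3) and~(4), the plan is to analyze how many variables two atoms of a decomposition can share. First I would observe that in any decomposition $\decomp_i$ of $\eta_i = (x_i \logeq \alpha_i)$, the only variables of $\decomp_i$ that also appear in $\var(\eta_j)$ for $j \neq i$ are the original variables $\var(\eta_i) \cap \var(\eta_j)$, since introduced variables are fresh and disjoint across atoms. In a join tree $T$ for $\decomp_\varphi$, consider a shortest path connecting $T^i$ and $T^j$ (which are disjoint subtrees by \cref{lemma:subtree}); every variable $x \in \var(\eta_i) \cap \var(\eta_j)$ must appear on every node of this path, in particular in the unique node $\chi$ of $T^i$ adjacent to the path and the unique node $\chi'$ of $T^j$ adjacent to the path. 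But $\chi$ is an atom of a $\conclog$, so $|\var(\chi)| \le 3$ (a word equation $z \logeq x \cdot y$ has at most three variables). Hence $|\var(\eta_i) \cap \var(\eta_j)| \le 3$, proving~(3). For condition~(4), suppose $|\var(\eta_i) \cap \var(\eta_j)| = 3$ and, say, $|\eta_i| > 3$. Then all three shared variables sit in the single node $\chi \in V^i$ adjacent to the $i$--$j$ path, so $\atom(\chi)$ is exactly $z \logeq a \cdot b$ where $\{z,a,b\} = \var(\eta_i) \cap \var(\eta_j)$, meaning $\chi$ "uses up" all three variables at once. The plan is then to show that in a decomposition of $\eta_i$ with $|\alpha_i| \ge 3$ there is structural tension: the three shared variables $a,b$ (children of $\chi$) and $z$ (the label of $\chi$) must all occur in $\alpha_i$, but then removing $\chi$ and reattaching, or tracing where $z$ came from, forces a localization failure in the sense of \cref{lemma:cycledistance}, or equivalently a cycle among the atoms of $\decomp_i$ once $\chi$ is pinned down. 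Concretely, $z$ is either $x_i$ (the root) or an introduced variable representing a proper sub-bracketing; in either case $|\alpha_i| > 3$ forces at least one more atom of $\decomp_i$, and that atom together with $\chi$ and the connection to $T^j$ creates a cycle because the running-intersection requirement on the three-element set $\{z,a,b\}$ cannot be met by any single additional node.

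The main obstacle I anticipate is condition~(4): unlike~(1)--(3), which follow fairly directly from \cref{lemma:subtree} and the bound $|\var(\chi)|\le 3$, condition~(4) requires a genuine combinatorial argument about the interaction between the internal structure of a decomposition of one atom and its attachment point to the rest of the join tree. The delicate point is that the three shared variables need not all be distinct in their roles — one could be the left-hand side $x_i$ of $\eta_i$ and the other two could be among $\var(\alpha_i)$, or all three could lie in $\var(\alpha_i)$ — so a careful case distinction is needed, and in each case one must exhibit the forbidden cycle or invoke the localization characterization. I would handle this by first reducing to the case $\decomp_i$ is the decomposition of a single bracketing $\tilde\alpha_i$ attached at node $\chi$, then arguing that fixing $\atom(\chi) = (z \logeq a \cdot b)$ with $\{a,b,z\}$ all distinct and all forced to appear together leaves no room to bracket the remaining symbols of $\alpha_i$ while keeping every variable localized, which by \cref{lemma:cycledistance} and \cref{lemma:subtree} contradicts acyclicity of $\decomp_\varphi$. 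The other three conditions I expect to dispatch with the short arguments sketched above.
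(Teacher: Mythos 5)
Your arguments for Conditions 1--3 are correct and essentially identical to the paper's proof: contract the subtrees $T^i$ from \cref{lemma:subtree} to obtain a weak join tree (Condition 1), invoke \cref{lemma:subtree} directly (Condition 2), and observe that all shared variables of $\eta_i$ and $\eta_j$ must occur in the unique ``portal'' atom of $T^i$ on the path towards $T^j$, which has at most three variables (Condition 3).

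Condition 4 is where you have a genuine gap. You correctly reduce to the situation where the portal atom $\chi$ of $T^i$ satisfies $\var(\atom(\chi)) = \{z,a,b\} = \var(\eta_i)\cap\var(\eta_j)$, but the argument you then sketch -- tracing where $z$ came from, ``removing $\chi$ and reattaching'', and deriving a ``localization failure'' via \cref{lemma:cycledistance} because ``the running-intersection requirement on the three-element set $\{z,a,b\}$ cannot be met by any single additional node'' -- is not carried out and is not needed; as stated it does not constitute a proof. The observation that closes the case is much simpler, and it is the one the paper uses: when $|\eta_i| > 3$ (so $|\alpha_i|\geq 3$), \emph{every} atom of \emph{any} decomposition of $\eta_i = (x_i\logeq\alpha_i)$ contains at least one introduced variable. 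Indeed, by \cref{defn:conclogConversion} the left-hand side of every non-root atom is a fresh introduced variable, and the root atom $(x_i\logeq u\cdot v)$ must have at least one of $u,v$ introduced, since if both were variables of $\alpha_i$ the bracketing would be $(u\cdot v)$ and $|\alpha_i|=2$. Introduced variables of $\decomp_i$ are fresh (and, by \cref{defn:normalization}, $x_i\notin\var(\alpha_i)$), so they never lie in $\var(\eta_i)\cap\var(\eta_j)$; hence no atom of $\decomp_i$ can contain three shared variables, and the Condition-3 portal argument already yields the contradiction. Your own case split actually collapses the same way if pushed: if $z$ is introduced it cannot be a shared variable at all, and if $z=x_i$ then $a,b\in\var(\alpha_i)$ forces $|\alpha_i|=2$, contradicting $|\eta_i|>3$. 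So the statement is provable along the lines you set up, but the step you flagged as the ``main obstacle'' is missing its justification, and the machinery of \cref{lemma:cycledistance} you propose to deploy there is the wrong (and unnecessary) tool.
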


Condition 1 can be proven by simply replacing $T^i$ with a single node $\eta_i$ for all $i \in [n]$. 
Condition 2 follows directly from~\cref{lemma:subtree}. 
Conditions 3 and 4 can be proven by a contradiction: Consider the shortest path from any atom of the decomposition of $\eta_i$ to any atom of the decomposition of $\eta_j$. 
Since the end points of these paths cannot contain all the variables that $\eta_i$ and $\eta_j$ share, it follows that $T \df (V,E)$ is not a join tree.

While Conditions 3 and 4 might seem strict, we can pre-factor common subpatterns. 
For example, the conjunction $(x_1 \logeq \alpha_1 \cdot \alpha_2 \cdot \alpha_3) \land (x_2 \logeq \alpha_4 \cdot \alpha_2 \cdot \alpha_5)$, where $\alpha_i \in \Xi^+$ for $i \in [5]$, can be written as $(x_1 \logeq \alpha_1 \cdot z  \cdot \alpha_3) \land (x_2 \logeq \alpha_4 \cdot z \cdot \alpha_5) \land (z \logeq \alpha_2)$ where $z \in \Xi$ is a new variable. 
We illustrate this further in the following example.

\begin{example}
Consider the following $\cpfc$:
\[ \varphi \df \cqhead{} (x_1 \logeq y_1 \cdot y_2 \cdot y_3 \cdot y_4 \cdot y_5) \land (x_2 \logeq y_6 \cdot y_2 \cdot y_3 \cdot y_4 \cdot y_5). \]
Using~\cref{lemma:CyclicConditions}, we can see that $\varphi$ is cyclic. However, since the right-hand side of the two word equations share a common subpattern, we can rewrite $\varphi$ as
\[ \varphi' \df \cqhead{} (x_1 \logeq y_1 \cdot z) \land (x_2 \logeq y_6 \cdot z) \land (z \logeq y_2 \cdot y_3 \cdot y_4 \cdot y_5). \]
\end{example}

One could alter our definition of $\cpfc$ decomposition so that if two atoms share a bracketing, then the bracketing is replaced with the same variable (analogously to how decompositions are defined on patterns). 
The authors believe it is likely that such a definition of $\cpfc$ decomposition is equivalent to our definition of $\cpfc$ decomposition after ``factoring out'' common subpatterns between atoms.

Our next consideration is how the structure of a join tree for a decomposition of an acyclic query $\varphi \in \cpfcreg$ relates to the structure of a weak join tree for $\varphi$.

\begin{definition}[Skeleton Tree]
Let $\decomp_\varphi \in \conclog\noconstr$ be an acyclic decomposition of the query $\varphi \df \cqhead{\vec{x}} \bigwedge_{i=1}^n \eta_i$, and let $T \df (V,E)$ be a join tree for $\decomp_\varphi$. 
We say that a weak join tree $T_w \df (V_w, E_w)$ is the 
\emph{skeleton tree} of $T$ if there exists an edge in $E$ from a node in $V^i$ to a node in $V^j$ if and only if $\{ \eta_i, \eta_j \} \in E_w$.
\end{definition}

In the proof of~\cref{lemma:CyclicConditions} (Condition 1), we show that every join tree for a decomposition has a corresponding skeleton tree. 
We shall leverage the fact that every join tree of a decomposition of an acyclic $\cpfcreg$ has a skeleton tree in the algorithm given in the proof of~\cref{theorem:LVJoinTree}.

\begin{example}
\label{example:SkeletonTree}
We define $\varphi\in \cpfc\noconstr$ and a decomposition $\decomp_\varphi$ as follows:
\begin{align*}
\varphi &\df \cqhead{\vec{x}} (x_1 \logeq x_2 \cdot x_3 \cdot x_2) \land (x_2 \logeq x_4 \cdot x_4 \cdot x_5),\\
\decomp_\varphi &\df  \cqhead{\vec{x}} (x_1 \logeq x_2 \cdot z_1) \land (z_1 \logeq x_3 \cdot x_2) \land (x_2 \logeq z_2 \cdot x_5) \land (z_2 \logeq x_4 \cdot x_4).
\end{align*}	
The skeleton tree along with the join tree of $\decomp_\varphi$ are given in~\cref{fig:skeletonTree}.
\begin{figure}
\begin{tikzpicture}[shorten >=1pt,->]
\tikzstyle{vertex}=[rectangle,fill=red!35,minimum size=12pt,inner sep=4pt]
\tikzstyle{vertex2}=[rectangle,fill=blue!35,minimum size=12pt,inner sep=4pt]
\node[vertex] (1) at (0,0) {$x_1 \logeq x_2 \cdot z_1$};
\node[vertex] (2) at (4,0) {$z_1 \logeq x_3 \cdot x_2$};
\node[vertex2] (3) at (0, 1.2) {$x_2 \logeq z_2 \cdot x_5$};
\node[vertex2] (4) at (4, 1.2) {$z_2 \logeq x_4 \cdot x_4$};

\path [-](1) edge node[left] {} (3);
\path [-](1) edge node[left] {} (2);
\path [-](3) edge node[left] {} (4);
\end{tikzpicture} \hspace{1.5cm}
\begin{tikzpicture}[shorten >=1pt,->]
\tikzstyle{vertex}=[rectangle,fill=red!35,minimum size=12pt,inner sep=4pt]
\tikzstyle{vertex2}=[rectangle,fill=blue!35,minimum size=12pt,inner sep=4pt]
\node[vertex] (1) at (0,0) {$x_1 \logeq x_2 \cdot x_3 \cdot x_2$};
\node[vertex2] (2) at (0,1.2) {$x_2 \logeq x_4 \cdot x_4 \cdot x_5$};

\path [-](1) edge node[left] {} (2);
\end{tikzpicture}
\caption{\label{fig:skeletonTree} The join tree (left) and the skeleton tree of the join tree (right) for~\cref{example:SkeletonTree}.}
\end{figure}
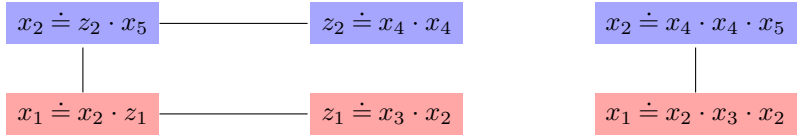
\end{example}

One might assume that some skeleton trees are more ``desirable'' than others in terms of using it for finding an acyclic decomposition of an $\cpfcreg$. 
However, as we observe next, any skeleton tree is sufficient.
\begin{restatable}[]{lemma}{skeletonTree}
\label{lemma:skeletonTree}
Let $\decomp_\varphi \in \conclog\noconstr$ be a decomposition of $\varphi \in \cpfc\noconstr$. If $\decomp_\varphi$ is acyclic, then any weak join tree can be used as the skeleton tree.
\end{restatable}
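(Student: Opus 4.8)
The plan is to show that from any given acyclic decomposition $\decomp_\varphi$ with join tree $T$ and skeleton tree $T_w$, and any other weak join tree $T_w'$ for $\varphi$, we can build an acyclic decomposition $\decomp_\varphi'$ of $\varphi$ whose join tree $T'$ has $T_w'$ as its skeleton. The key observation is that by \cref{lemma:subtree} the join tree $T$ partitions into subtrees $T^1,\dots,T^n$, where each $T^i$ is a join tree for a decomposition $\decomp_i$ of the single atom $\eta_i$; moreover, by \cref{lemma:cycledistance} each such $\decomp_i$ being acyclic is an \emph{intrinsic} property of (the chosen bracketing of) $\eta_i$ and does not depend on how the $T^i$ are glued together. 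So the decompositions $\decomp_1,\dots,\decomp_n$ of the individual atoms can be reused verbatim; only the inter-atom edges need to be rewired according to $T_w'$.

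First I would fix, for each $i\in[n]$, the acyclic decomposition $\decomp_i$ of $\eta_i$ coming from $T^i$ (via \cref{lemma:subtree}), keeping the introduced variables pairwise disjoint. Then I would define $\decomp_\varphi' := \decomp_\varphi$ (same atoms, same introduced variables) and try to assemble a join tree $T'$ for it as follows: keep each $T^i$ intact, and for every edge $\{\eta_i,\eta_j\}\in E_w'$ add a single edge between an appropriately chosen node of $V^i$ and an appropriately chosen node of $V^j$. The natural choice is to connect via a node in $V^i$ that contains all shared variables $\var(\eta_i)\cap\var(\eta_j)$ — such a node exists: since $\decomp_\varphi$ is acyclic, \cref{lemma:CyclicConditions} (Conditions 3 and 4) forces $|\var(\eta_i)\cap\var(\eta_j)|\le 3$, with equality only when both atoms have right-hand side length at most $3$, and in either case one checks that in the decomposition $\decomp_i$ there is an atom (node of $T^i$) containing all of these at most three variables (for length $\le 3$ this is the single non-root or root atom; for larger atoms sharing $\le 2$ variables, the atom where those two variables are concatenated, or any node on the path between their occurrences). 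Because the shared-variable set sits inside one node on each side, the connectivity condition of \cref{defn:join-tree} is satisfied across the new edge.

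The remaining work is to verify that the resulting $T'$ is genuinely a join tree, i.e.\ that it is a tree and that the running-intersection (connectedness) property holds for \emph{every} variable $x$, not just the shared ones. That $T'$ is a tree is immediate: each $T^i$ is a tree, $T_w'$ is a tree, and we add exactly one edge per edge of $T_w'$ between distinct components, so the total is connected with $|V|-1$ edges. For the running-intersection property: if $x$ is an introduced variable of $\decomp_i$ it occurs only within $V^i$ (disjointness), where connectedness holds since $T^i$ is a join tree; if $x\in\var(\varphi)$ then the set of atoms containing $x$ within each $T^i$ is connected (again $T^i$ is a join tree), and the ``skeleton-level'' atoms $\eta_i$ containing $x$ form a connected subtree of $T_w'$ since $T_w'$ is a weak join tree — and the way we chose the connecting nodes (containing all shared variables, in particular $x$ whenever $x$ is shared along that skeleton edge) means these per-atom connected pieces link up into one connected subtree of $T'$. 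The main obstacle I anticipate is precisely this last gluing argument: I need to be careful that the connecting node chosen on side $i$ for the edge to $\eta_j$ actually contains $x$ whenever $x$ is shared between $\eta_i$ and $\eta_j$ \emph{and} $x$ occurs elsewhere along the $T_w'$-path through $\eta_i$; handling the case $|\var(\eta_i)\cap\var(\eta_j)|=3$ requires invoking \cref{lemma:CyclicConditions}(4) to know $|\eta_i|,|\eta_j|\le 3$ so that a node containing all three shared variables is forced to exist. Once that bookkeeping is done, both directions of the skeleton-tree definition — an edge of $E'$ between $V^i$ and $V^j$ iff $\{\eta_i,\eta_j\}\in E_w'$ — hold by construction, so $T_w'$ is the skeleton of $T'$ and $\decomp_\varphi'$ is the desired acyclic decomposition.
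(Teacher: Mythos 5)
Your overall strategy is the same as the paper's: keep the subtrees $T^i$ provided by \cref{lemma:subtree} intact, add exactly one bridge edge per edge of the target weak join tree, and verify the running-intersection property by combining connectivity inside each $T^i$ with the fact that the bridge endpoints contain all variables shared by the two atoms. The one step that would fail as you justify it is the existence of these connecting nodes. You argue that, since $|\var(\eta_i)\intersect\var(\eta_j)|\leq 3$ by \cref{lemma:CyclicConditions}, the decomposition $\decomp_i$ must contain an atom with all shared variables, and for the two-variable case you point to ``the atom where those two variables are concatenated, or any node on the path between their occurrences''. Neither holds in general: for $\eta_i=(x_1\logeq y_1\, y_2\, y_3\, y_4)$ with shared set $\{y_1,y_4\}$, the acyclic decomposition $(x_1\logeq y_1\cdot z_1)\land(z_1\logeq z_2\cdot y_4)\land(z_2\logeq y_2\cdot y_3)$ has no atom containing both $y_1$ and $y_4$ (the two variables are never concatenated in this bracketing, and no node on the path between their occurrences contains both). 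So the existence of such an atom is not an intrinsic consequence of $\decomp_i$ being acyclic plus the cardinality bounds; it must be extracted from the hypothesis that the \emph{whole} $\decomp_\varphi$ has a join tree $T$ -- under that hypothesis a $\decomp_i$ as above simply cannot occur together with such an $\eta_j$, but your local argument does not show this.

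The repair, which is exactly the paper's choice, is to define the connecting nodes from the given join tree $T$ itself: for each edge $\{\eta_i,\eta_j\}$ of the new weak join tree, let $\chi_{i,j}\in V^i$ and $\chi_{j,i}\in V^j$ be the endpoints of the (unique) shortest path in $T$ between $T^i$ and $T^j$. Every $x\in\var(\eta_i)\intersect\var(\eta_j)$ occurs in some node of $T^i$ and some node of $T^j$, and the path in $T$ between those two nodes necessarily passes through $\chi_{i,j}$ and $\chi_{j,i}$; since $T$ is a join tree, both endpoints contain $x$, hence $\var(\chi_{i,j})\intersect\var(\chi_{j,i})=\var(\eta_i)\intersect\var(\eta_j)$ (the reverse inclusion follows from the disjointness of introduced variables). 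With the connecting nodes chosen this way, your gluing and running-intersection argument goes through verbatim and coincides with the paper's proof.
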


Given a weak join tree of an acyclic query $\varphi$, 
the proof of~\cref{lemma:skeletonTree} transforms the join tree of $\decomp_\varphi$ so that the resulting join tree has the given weak join tree as its skeleton tree.
Thus, we can use any weak join tree as a ``template'' for the eventual join tree of the decomposition (under the assumption that the query is acyclic).

While~\cref{lemma:CyclicConditions} and~\cref{lemma:skeletonTree} give some insights and necessary conditions for deciding whether $\varphi \in \cpfc\noconstr$ is acyclic, these conditions are not sufficient. 
We therefore give the following lemma which is needed in the proof of~\cref{theorem:LVJoinTree} to find an acyclic decomposition of $\varphi$.

\begin{restatable}[]{lemma}{atomDecomp}
\label{lemma:atomDecomp}
Given a normalized $\cpfc\noconstr$ of the form $\varphi \df \cqhead{\vec{x}} (z \logeq \alpha)$ and a set $C \subseteq \{ \{ x, y \} \mid x,y \in \var(z \logeq \alpha) \text{ and } x \neq y \}$,  we can decide whether there is an acyclic decomposition $\decomp \in \conclog\noconstr$ of $\varphi$ such that for every $\{ x,y \} \in C$, there is an atom of $\decomp$ that contains both $x$ and $y$ in time $\bigO(|\alpha|^7)$.
\end{restatable}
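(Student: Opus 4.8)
The plan is to reduce this to the pattern-level machinery of \cref{sec:decomp}, augmented with a bounded amount of bookkeeping to handle the extra constraint set $C$. Recall that by \cref{polytime} we can already decide in $\bigO(|\alpha|^7)$ whether $\alpha$ (equivalently, whether $(z \logeq \alpha)$ viewed with root variable $z$) is acyclic, and moreover that proof builds, bottom-up, a set of acyclic subpatterns together with an edge relation recording which variables are ``localized'' in a concatenation tree. The key observation to exploit is \cref{lemma:cycledistance}: a decomposition $\decomp_{\tilde\alpha}$ is acyclic iff it is $x$-localized for every variable $x$. Membership of a pair $\{x,y\}\in C$ in a common atom of $\decomp$ is a local condition on the concatenation tree as well — it says there is a non-leaf node whose two children are labelled $x$ and $y$ (an ``$x$-parent and $y$-parent'' simultaneously). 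So the task becomes: decide whether there is a bracketing $\tilde\alpha\in\brac(\alpha)$ whose concatenation tree is (i) $x$-localized for all $x$ and (ii) for each $\{x,y\}\in C$, contains a node with children labelled $x$ and $y$.

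First I would set up the bottom-up dynamic program exactly as in the proof of \cref{polytime}: process factors $\alpha[i..j]$ of $\alpha$ in order of increasing length, maintaining for each the collection of ``profiles'' of acyclic bracketings of that factor, where a profile records the root label, the set of variables that occur as labels of non-leaf nodes (needed to test the localization/edge condition when we concatenate two subpatterns), and — this is the new ingredient — the subset of pairs from $C$ that are already ``realized'' by some node inside that sub-bracketing. When we combine a bracketing $\tilde\beta$ of $\alpha[i..k]$ with a bracketing $\tilde\gamma$ of $\alpha[k+1..j]$ to form $(\tilde\beta\cdot\tilde\gamma)$, we (a) check the localization condition for the new root variable against the two child profiles, as in \cref{polytime}; (b) take the union of the two realized-pair sets; and (c) if the (root label of $\tilde\beta$, root label of $\tilde\gamma$) pair lies in $C$, add it to the realized set. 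We accept iff some bracketing of the whole of $\alpha$ survives with root label $z$ (after identifying the root with $z$ as in the decomposition of $(z\logeq\alpha)$) and realized-pair set equal to all of $C$. A subtlety is that there may be $\{x,y\}\in C$ with, say, two occurrences of $x$ and one of $y$; realizing such a pair only requires \emph{one} node with the right children, so tracking the set of realized pairs (a subset of $C$) suffices, and $|C|\le \binom{|\alpha|}{2}=\bigO(|\alpha|^2)$.

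The main obstacle — and the place where one must be careful to stay polynomial — is bounding the number of distinct profiles per factor. Naively, a profile includes a subset of $\var(\alpha)$ (the non-leaf labels) and a subset of $C$, which is exponential. The fix, following the analysis behind \cref{polytime}, is that the information needed from a child sub-bracketing when we build its parent is not the full non-leaf label set but only a bounded-size ``interface'': by \cref{lemma:cycledistance} and the pruning of concatenation trees, the localization test for the new root variable $x$ depends only on whether $x$ already appears as a non-leaf label in each child and on the two root labels, so only $\bigO(1)$ bits of each child's label set are relevant at each merge. Propagating this observation down shows that the number of relevant profiles of a factor is polynomial (this is precisely why \cref{polytime} runs in $\bigO(|\alpha|^7)$), and the added realized-pairs component can be handled by noting that we never need to branch on it — realization is monotone, so we simply keep, for each surviving profile (in the \cref{polytime} sense), the \emph{inclusion-maximal} realizable subset of $C$, of which there is just one per profile to track. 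Hence the DP has the same $\bigO(|\alpha|^7)$ running time as \cref{polytime}, with only $\bigO(|C|)=\bigO(|\alpha|^2)$ extra work per table entry to update the realized-pairs set, which is absorbed into the bound. Finally, I would remark (as is done after \cref{polytime}) that when the answer is ``yes'' the DP also yields an explicit bracketing $\tilde\alpha$, hence a concrete acyclic $\decomp\in\conclog\noconstr$ witnessing all pairs of $C$, within the same time bound — this witnessing form is what is actually invoked in the proof of \cref{theorem:LVJoinTree}.
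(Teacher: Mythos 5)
Your overall instinct---run the bottom-up bracketing algorithm of \cref{polytime} and treat the pairs in $C$ as conditions on the concatenation tree---is the same as the paper's, but two steps of your argument do not go through. The first is the polynomiality/pruning claim. Tracking, per factor, the subset of $C$ realized by a sub-bracketing is potentially exponential, and your proposed fix (``realization is monotone, so keep the inclusion-maximal realizable subset of $C$, of which there is just one per profile'') is false: for the factor $x\,y\,w$ with $C=\{\{x,y\},\{y,w\}\}$, the bracketing $((x\cdot y)\cdot w)$ realizes only $\{x,y\}$ and $(x\cdot (y\cdot w))$ realizes only $\{y,w\}$, two incomparable maximal sets achieved by different bracketings, so pruning to one of them can discard exactly the witness needed later. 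Likewise, the ``$\bigO(1)$ bits of each child's label set'' interface claim is not how the merge test of \cref{polytime} actually works (it compares factors and recorded root splits), so it cannot be used to bound the number of profiles. What is missing is the structural lemma that the paper proves for precisely this purpose: by $x$-localization (\cref{lemma:cycledistance}), in an acyclic bracketing a variable occurring in a pair of $C$ can only be concatenated directly with its designated partner, or with a sub-bracketing whose variable set is exactly that pair; this rigidity lets the paper push the constraints into the \emph{initialization} of the algorithm of \cref{polytime} (only leaf-level merges that form a $C$-pair or avoid $C$-variables entirely, plus one extra check when a $C$-variable is merged with a longer factor), so that no realized-set bookkeeping is needed at all and the $\bigO(|\alpha|^7)$ bound is preserved. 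Without an argument of this kind your dynamic program is neither shown to be polynomial nor, with the pruning, shown to be correct.

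The second gap is the case of pairs containing the left-hand variable $z$, which the statement explicitly allows and which \cref{theorem:LVJoinTree} relies on. In a decomposition of $(z\logeq\alpha)$ the variable $z$ occurs only in the root atom, as the parent label; your realization rule (c) only inspects the pair of \emph{child} labels at a merge, so a pair $\{z,y\}\in C$ is never marked realized and your acceptance test always answers ``no'' for it. Concretely, for $\alpha= y\cdot w$ and $C=\{\{z,y\}\}$ the decomposition $(z\logeq y\cdot w)$ is acyclic and its single atom contains both $z$ and $y$, yet your algorithm rejects. The paper handles this case separately, showing that an acyclic decomposition with an atom containing $z$ and $y$ exists if and only if $\alpha=y^i\cdot\beta\cdot y^j$ with $|\beta|_y=0$ and $\beta$ acyclic (a consequence of $y$-localization forcing all $y$-parents to form a connected subtree containing the root), and only then runs the constrained-bracketing algorithm on the remaining pairs.
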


We prove~\cref{lemma:atomDecomp} using a variant of the algorithm given in the proof of~\cref{polytime}. 
The purposes of~\cref{lemma:atomDecomp} should become clearer after giving the following necessary and sufficient criteria for an $\cpfcreg$ to be acyclic:
Let $\varphi \df \bigwedge_{i=1}^m (x_i \logeq \alpha_i) \land \bigwedge_{j=1}^n (y_j \regconst \gamma_j)$ be a normalized $\cpfcreg$. 
Then, there exists an acyclic decomposition $\decomp \in \concreg$ of $\varphi$ if and only if the following conditions hold:
\begin{enumerate}
\item $\varphi$ is weakly acyclic,
\item for all $i \in [m]$ the pattern $\alpha_i$ is acyclic, and
\item for every $i \in [m]$, there is a decomposition $\decomp_i$ of $x_i \logeq \alpha_i$ such that for all $j \in [m] \setminus \{ i \}$ there is a decomposition $\decomp_j$ of $x_j \logeq \alpha_j$ where there exists an atom $\chi_i$ of $\decomp_i$ and an atom~$\chi_j$ of~$\decomp_j$ that satisfies $\var(\chi_i) \intersect \var(\chi_j) = \var(x_i \logeq \alpha_i) \intersect \var(x_j \logeq \alpha_j)$.
\end{enumerate}

We are now ready to give the main result of the paper. 
\begin{restatable}[]{theorem}{LVJoinTree}
\label{theorem:LVJoinTree}
Whether $\varphi \in \cpfcreg$ is acyclic can be decided in time $\bigO(|\varphi|^8)$.
\end{restatable}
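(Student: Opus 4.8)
The plan is to reduce to the three-part characterization of acyclicity stated immediately before this theorem, and then show that each part is testable within the stated budget. First I would apply \cref{lemma:normalization} to replace $\varphi$ by an equivalent normalized $\cpfcreg$ of size $\bigO(|\varphi|)$ in time $\bigO(|\varphi|^2)$; since regular constraints are unary predicates (and hence attachable to any join tree at no cost) and acyclicity is preserved, it suffices to decide acyclicity of the word-equation part $\bigwedge_{i=1}^{m}(x_i \logeq \alpha_i)$, and by the criteria above this holds iff $\varphi$ is weakly acyclic, every $\alpha_i$ is acyclic, and each $\eta_i \df (x_i \logeq \alpha_i)$ admits a single decomposition whose atoms realize, for every other $\eta_j$, the set $\var(\eta_i) \intersect \var(\eta_j)$.

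The algorithm proceeds in three stages. \emph{Stage 1:} treat the $\eta_i$ as relational atoms (with $\strucvar$ a constant) and run the GYO ear-removal procedure; this decides weak acyclicity in polynomial time and, when it holds, returns a weak join tree $T_w$, which by \cref{lemma:skeletonTree} we may fix once and for all as the skeleton. \emph{Stage 2:} for every pair $i \neq j$ compute $S_{ij} \df \var(\eta_i) \intersect \var(\eta_j)$; if $|S_{ij}| \ge 4$, or $|S_{ij}| = 3$ and $|\eta_i| > 3$ or $|\eta_j| > 3$, reject, which is sound by \cref{lemma:CyclicConditions}. \emph{Stage 3:} for each $i$ put $C_i \df \{\, S_{ij} \mid j \neq i,\ |S_{ij}| = 2 \,\}$ and call \cref{lemma:atomDecomp} on $(x_i \logeq \alpha_i)$ with the set $C_i$; reject if some call fails, otherwise accept. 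The reason $C_i$ is the right set is that two atoms taken from decompositions of distinct $\eta_i,\eta_j$ can share only original variables (introduced variables of distinct atoms are disjoint by definition of a decomposition), so the requirement $\var(\chi_i) \intersect \var(\chi_j) = S_{ij}$ reduces to $S_{ij} \subseteq \var(\chi_i)$ and $S_{ij} \subseteq \var(\chi_j)$; this is automatic when $|S_{ij}| \le 1$, and when $|S_{ij}| = 3$ Stage 2 has already forced $|\alpha_i| = |\alpha_j| = 2$, so $\eta_i$ and $\eta_j$ are themselves $\conclog$ atoms and the condition again holds trivially. Hence $C_i$ records exactly the nontrivial constraints, and a call of \cref{lemma:atomDecomp} with $C_i$ simultaneously checks that $x_i \logeq \alpha_i$ is acyclic and that $\eta_i$ contributes a usable decomposition.

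For correctness, necessity follows from \cref{lemma:subtree}: a join tree $T$ of an acyclic decomposition $\decomp_\varphi$ partitions into join trees $T^i$ of decompositions $\decomp_i$ of the $\eta_i$; contracting the $T^i$ gives a weak join tree, each $T^i$ witnesses acyclicity of $\alpha_i$, and — since the $\eta_i$-to-$\eta_j$ path in a weak join tree consists of nodes containing every variable of $S_{ij}$ — the endpoint in $V^i$ of the first bridge along the corresponding path in $T$ is an atom of $\decomp_i$ containing all of $S_{ij}$, so \cref{lemma:atomDecomp} answers ``yes''. Sufficiency is the converse construction: take the decompositions $\decomp_i$ returned in Stage 3 and glue their join trees $T^i$ along the edges of $T_w$, joining, across a skeleton edge $\{\eta_i, \eta_j\}$, an atom of $\decomp_i$ containing $S_{ij}$ to one of $\decomp_j$ containing $S_{ij}$ (and along the obvious node when $|S_{ij}| \le 1$ or $|S_{ij}| = 3$); the join-tree property survives because each $T^i$ has it and every bridge carries the full shared-variable set. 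For the running time, normalization costs $\bigO(|\varphi|^2)$, Stages 1 and 2 are polynomial of low degree, and Stage 3 makes at most $m \le |\varphi|$ calls to \cref{lemma:atomDecomp}, each costing $\bigO(|\alpha_i|^7) = \bigO(|\varphi|^7)$ (the set $C_i$ has size $\bigO(|\alpha_i|^2)$ and is absorbed); the total is $\bigO(|\varphi|^8)$.

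The main obstacle is the characterization itself — specifically, the fact that the global gluing question decouples into independent per-atom questions. This rests on two ingredients assembled earlier: that introduced variables of distinct atoms are disjoint, so local decompositions can interact only through shared original variables, and that \cref{lemma:CyclicConditions} pins those shared sets down tightly enough that each surviving interaction is a single ``force two variables into one atom'' constraint, which is exactly the input to \cref{lemma:atomDecomp}. Once this is in place, the theorem is the routine observation that every condition is checkable by subroutines already in hand, with the $\bigO(|\varphi|)$ calls to \cref{lemma:atomDecomp} dominating the cost.
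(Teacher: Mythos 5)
Your proposal is correct and follows essentially the same route as the paper's proof: normalize via \cref{lemma:normalization}, reject using the conditions of \cref{lemma:CyclicConditions}, fix a weak join tree as skeleton (\cref{lemma:skeletonTree}), decide each atom's constrained decomposition with \cref{lemma:atomDecomp}, and glue the per-atom join trees along skeleton edges, giving $\bigO(|\varphi|\cdot|\eta_{\max}|^7)=\bigO(|\varphi|^8)$. The only deviation — building $C_i$ from \emph{all} pairs $\eta_j$ with $|S_{ij}|=2$ rather than only from the weak-join-tree neighbours used in the paper's algorithm — is harmless, since your exit-node argument shows these constraints are necessary in any join tree of a decomposition (this is exactly condition~3 of the characterization stated before the theorem), so no acyclic query is wrongly rejected and the complexity is unaffected.
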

To prove~\cref{theorem:LVJoinTree}, we first check whether $\varphi \in \cpfc$ has any of the conditions from~\cref{lemma:CyclicConditions}. 
If so, then we know that $\varphi$ is cyclic. 
Then, we construct a weak join tree for $\varphi$. 
If there is an edge $\{\eta_i, \eta_j\}$ of the weak join tree such that $\eta_i$ and $\eta_j$ share exactly two variables, then we use~\cref{lemma:atomDecomp} to decompose $\eta_i$ and $\eta_j$ such that there is an atom of the decomposition (of $\eta_i$ and $\eta_j$), which contains the variables that $\eta_i$ and $\eta_j$ share. 
In the full proof, we show that if such decompositions do not exist, then $\varphi$ is cyclic. 
For all other atoms of $\varphi$ we can use any decomposition. 
The resulting acyclic decomposition is the conjunction of the decompositions of each atom. 
The proof of~\cref{theorem:LVJoinTree} also shows that if $\varphi$ is acyclic, an acyclic decomposition can be constructed in polynomial time.

\begin{example}
We revisit the $\cpfcreg$ that was given in the introduction:
\[
\varphi \df \cqhead{x, y} (z \logeq z_2 \cdot x \cdot z_3 \cdot x \cdot z_4) \land (z \logeq z_5\cdot y \cdot z_6)  \land  (z \regconst \gamma_{\mathsf{sen}}) \land (x \regconst \gamma_{\mathsf{prod}}) \land (y \regconst \gamma_{\mathsf{pos}}).
\]
We can see this is acyclic by considering the following decomposition:
\begin{multline*}
\decomp \df \cqhead{x,y} (y_1 \logeq x \cdot z_3) \land (y_2 \logeq y_1 \cdot x) \land (y_3 \logeq z_2 \cdot y_2) \land (z \logeq y_3 \cdot z_4) \\ \land (y_4 \logeq z_5 \cdot y) \land (z \logeq y_4 \cdot z_6) \land  (z \regconst \gamma_{\mathsf{sen}}) \land (x \regconst \gamma_{\mathsf{prod}}) \land (y \regconst \gamma_{\mathsf{pos}}). 
\end{multline*}
\end{example}

Due to the small width of the tables that each word equation of the form $(x \logeq y \cdot z)$ produces, we  conclude the following:

\begin{restatable}[]{proposition}{enumAndEval}
\label{corollary:enumerationAndEvaluation}
If $\decomp \in \concreg$ is acyclic, then:
\begin{enumerate}
\item Given $w \in \Sigma^*$, the model checking problem can be solved in time $\bigO(|\decomp|^2 |w|^3)$.
\item Given $w \in \Sigma^*$, we can enumerate $\fun{\decomp}\strucbra{w}$ with $\bigO(|\decomp|^2|w|^3)$ delay.
\end{enumerate}\leavevmode
\end{restatable}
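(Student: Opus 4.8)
The plan is to reduce both statements to known results on relational acyclic conjunctive queries, using \cref{lemma:datastructure} only to make the reduction efficient. The first step is to turn $\decomp$ on input $w$ into an ordinary relational database instance $D$: for each atom $(x \logeq y\cdot z)$ of $\decomp$ we take the relation $\fun{x \logeq y\cdot z}\strucbra{w}$, and for each regular-constraint atom $(y\regconst\gamma)$ we take the unary relation of all spans $\spn{i,j}$ of $w$ with $w_{\spn{i,j}}\in\lang(\gamma)$. A ternary atom's relation has at most $\bigO(|w|^3)$ tuples (a factor of $w$ is fixed by two positions, and a split of it by one more), so by \cref{lemma:datastructure} we materialize it with constant delay in $\bigO(|w|^3)$ time and space after an $\bigO(|w|)$-time preprocessing; a unary constraint relation is materialized in $\bigO(|\gamma|\cdot|w|^2)$ time by running an automaton for $\gamma$ from each start position of $w$. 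Summing over the at most $\bigO(|\decomp|)$ atoms (and using $\sum|\gamma|\le|\decomp|$), the instance $D$ has size $\|D\|=\bigO(|\decomp|\cdot|w|^3)$ and is built within this bound; each tuple occupies $\bigO(1)$ machine words, so cardinality and size agree up to constants. Because $\decomp$ is acyclic by hypothesis, $D$ paired with $\decomp$ read as a relational query --- with $\strucvar$ a constant symbol and each regular constraint a unary predicate --- is an acyclic conjunctive query instance, and a join tree is obtained from the GYO procedure in time polynomial in $|\decomp|$.

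For part~(1), $\decomp$ is Boolean, and we run the bottom-up semi-join pass of Yannakakis' full reducer over the join tree: afterwards the root relation is nonempty iff $\fun{\decomp}\strucbra{w}\neq\emptyset$. This is exactly the classical model-checking bound for Boolean acyclic conjunctive queries~\cite{YannakakisAlgorithm,gottlob2001complexity}, namely $\bigO(|\decomp|\cdot\|D\|)=\bigO(|\decomp|^2|w|^3)$, into which the preprocessing cost above is absorbed. For part~(2), observe that $\fun{\decomp}\strucbra{w}$ is the projection of the full join onto the free variables $\vec{x}$, so one cannot simply enumerate all full solutions and project (a single answer could then be produced exponentially often); instead we invoke the polynomial-delay enumeration algorithm for acyclic conjunctive queries with projection of Bagan, Durand, and Grandjean~\cite{bagan2007acyclic}, which --- after preprocessing that is polynomial in $|\decomp|$ and $\|D\|$ --- outputs the distinct answers with delay $\bigO(|\decomp|\cdot\|D\|)=\bigO(|\decomp|^2|w|^3)$.

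The routine parts are the RAM cost accounting and checking that adding the unary constraint atoms preserves acyclicity (they can always be hung as leaves off any node containing their variable). The one point that needs care --- and the main obstacle --- is the enumeration claim in part~(2): since $\decomp$ genuinely has existential variables, we must rely on an enumeration procedure that handles projection directly and verify that mere acyclicity (rather than the stronger free-connex acyclicity, which would be needed for constant delay) suffices for the stated polynomial delay; modulo this, both bounds are a direct instantiation of the cited relational results once \cref{lemma:datastructure} has supplied a succinct, constant-time-queryable encoding of each atom's relation.
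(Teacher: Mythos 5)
Your proposal is correct and follows essentially the same route as the paper: materialize the relation of each binary word-equation atom (size $\bigO(|w|^3)$, via \cref{lemma:datastructure}) and of each unary regular constraint, obtaining a database of size $\bigO(|\decomp|\cdot|w|^3)$, then invoke the standard relational results of~\cite{gottlob2001complexity} for model checking and~\cite{bagan2007acyclic} for $\bigO(|\psi||D|)$-delay enumeration, which indeed handles projections (free-connexity is only needed for constant delay). The cautionary point you raise about part~(2) is resolved exactly as you suggest, and the paper relies on the same theorem.
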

For $\cpfcreg$s, we first find an acyclic decomposition $\decomp_\varphi \in \concreg$ of $\varphi$ in $\bigO(|\varphi|^7)$. 
Then, the upper bound for model checking follows from~\cite{gottlob2001complexity}. 
Polynomial-delay enumeration follows from~\cite{bagan2007acyclic}, where it was proven that given an acyclic (relational) conjunctive query $\psi$ and a database $D$, we can enumerate $\psi(D)$ with $\bigO(|\psi||D|)$ delay. 
Our ``database'' is of size $\bigO(\formulaSize{\varphi} \cdot |w|^3)$ as each atom of the form $(z \logeq x \cdot y)$ defines a relation of size $\bigO(|w|^3)$.

Considering techniques from~\cite{bagan2007acyclic}, it may seem that the results of an acyclic $\cpfcreg$ without projections can be enumerated with constant-delay after polynomial time preprocessing. 
However this is not the case. 
New variables, that are not free, are introduced in the decomposition of $\varphi$ and therefore the resulting $\concreg$ may not be free-connex, which is required for the results of a $\cq$ to be enumerated with constant-delay~\cite{bagan2007acyclic}.

\subparagraph*{From FC[REG]-CQs to SERCQs} 
Combining~\cref{Prop:RGXtoPatCQ} and~\cref{corollary:enumerationAndEvaluation} gives us a class of $\sercq$s for which model checking can be solved in polynomial-time, and we can enumerate results with polynomial-delay. 
The hardness of deciding semantic acyclicity (whether a given $\sercq$ can be realized by an acyclic $\cpfcreg$) remains open. 
The authors believe that semantic acyclicity for $\sercq$s is undecidable, partly due to the fact that various minimization problems are undecidable for $\fc$~\cite{fre:doc, frey2019finite}. 
For now, all we have are sufficient critiera for a $\sercq$ to be realized by an acyclic $\cpfcreg$.

\begin{definition}\label{defn:PseudoAcyc}
We say that a query of the form $\query \df \pi_Y \bigl( \select^=_{x_1,y_1} \cdots \select^=_{x_k,y_k} ( \gamma_1 \join \cdots \join \gamma_n) \bigr)$ is \emph{pseudo-acyclic} if for every $i \in [n]$, we have that $\gamma_i \df \beta_{i_1} \cdot \bind{x_i}{\beta_{i_2}} \cdot \beta_{i_3}$ where $x_i \in \Xi$, and where $\beta_{i_1}$, $\beta_{i_2}$, and $\beta_{i_3}$ are regular expressions.
\end{definition}

We now show that~\cref{defn:PseudoAcyc} gives sufficient criteria for an $\sercq$ to be realized by an acyclic $\cpfcreg$.

\begin{restatable}[]{proposition}{QuasiAcyclicSpanners}
\label{prop:quasiAcyclic}
Given a pseudo-acyclic $\sercq$ $query$, we can construct in polynomial time an acyclic $\cpfcreg$ that realizes $\query$.
\end{restatable}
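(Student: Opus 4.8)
The plan is to give a direct construction tailored to the simple shape of pseudo-acyclic queries (\cref{defn:PseudoAcyc}) -- essentially a special case of the construction behind \cref{Prop:RGXtoPatCQ} -- and then to show acyclicity by writing down an explicit join tree for a carefully chosen decomposition. Write $\query \df \pi_Y\bigl(\select^=_{x_1,y_1}\cdots\select^=_{x_k,y_k}(\gamma_1\join\cdots\join\gamma_n)\bigr)$ with $\gamma_i \df \beta_{i_1}\cdot\bind{x_i}{\beta_{i_2}}\cdot\beta_{i_3}$. First I would assume, without loss of generality, that the $x_i$ are pairwise distinct: if several regex formulas capture the same $x$, replace their contributions by a single triple of regular constraints obtained by intersecting the corresponding languages (regular languages are closed under intersection), which is sound because the join already forces $x$ to receive the same span everywhere. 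For each $i\in[n]$ I then introduce a fresh variable $r_i$, the word equation $(\strucvar\logeq x_i^P\cdot x_i^C\cdot r_i)$, and the regular constraints $x_i^P\regconst\beta_{i_1}$, $x_i^C\regconst\beta_{i_2}$, $r_i\regconst\beta_{i_3}$; as in the building block behind \cref{Prop:RGXtoPatCQ}, this subformula realizes $\gamma_i$, and the conjunction of these subformulas realizes $\gamma_1\join\cdots\join\gamma_n$. For the string equalities, the key observation is that $\select^=_{x_1,y_1}\cdots\select^=_{x_k,y_k}$ constrains a tuple only up to the equivalence relation on content variables generated by the pairs $\{x_j^C,y_j^C\}$; so I would take a spanning forest $F$ of the equality graph (vertices: content variables occurring in some equality; edges: the pairs $\{x_j^C,y_j^C\}$), add one atom $(u^C\logeq v^C)$ per edge of $F$, and discard the remaining equalities. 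Finally I existentially quantify every variable other than $\{x^P,x^C\mid x\in Y\}$. The resulting $\cpfcreg$ $\varphi$ is clearly computable in polynomial time and, by the above, realizes $\query$.

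To show $\varphi$ is acyclic I would exhibit a join tree for the following decomposition $\decomp$: decompose each length-three equation $(\strucvar\logeq x_i^P\cdot x_i^C\cdot r_i)$, using the bracketing $(x_i^P\cdot(x_i^C\cdot r_i))$, into the atoms $A_i\df(\strucvar\logeq x_i^P\cdot t_i)$ and $B_i\df(t_i\logeq x_i^C\cdot r_i)$ with $t_i$ fresh; the equality atoms coming from $F$ and the regular constraints already have right-hand side of length at most two. The join tree is built by turning $F$ into a forest of atoms: replace each vertex $u^C$ by $B_u$ and subdivide each edge $(u^C\logeq v^C)$ of $F$ into the path $B_u - (u^C\logeq v^C) - B_v$ (since $F$ is a forest, this is again a forest); attach each $A_i$ as a leaf of $B_i$; attach each unary regular constraint to any node already containing its variable; and, since atoms in distinct connected components share no variable (the only common symbol $\strucvar$ is treated as a constant), connect the components arbitrarily into a single tree $T$. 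Checking the join-tree property is then routine, variable by variable: $x_i^P$ and $r_i$ each occur in a single atom; $t_i$ occurs only in the adjacent pair $A_i,B_i$; a content variable $x_i^C$ occurs in $B_i$ and in exactly those $F$-edges incident to it, which form the star around $B_i$ in $T$; and $\strucvar$ imposes no constraint because it is a constant. Hence $T$ is a join tree for $\decomp$, so $\decomp$ is an acyclic decomposition of $\varphi$, and $\varphi$ is acyclic.

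The hard part will be the handling of the string equalities. Using all $k$ of them verbatim can already produce a cyclic core: the three equalities $(x^C\logeq y^C)$, $(y^C\logeq z^C)$, $(z^C\logeq x^C)$ give the triangle query, which is not acyclic. The way around this is to exploit that $\select^=$ only filters tuples up to the generated equivalence relation, so one may pass to a spanning forest before building the formula. Everything after that point -- the claim that the construction realizes $\query$ (which mirrors \cref{Prop:RGXtoPatCQ}), polynomiality, and the verification of the join-tree condition -- is routine.
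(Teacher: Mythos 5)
Your construction is correct and essentially the paper's own proof: the same pair of binary word equations per captured variable ($(\strucvar \logeq x^P \cdot z)$ and $(z \logeq x^C \cdot x^S)$, i.e.\ the same bracketing/decomposition you call $A_i,B_i$), the same unary regular constraints, the same spanning-forest treatment of the string equalities, and acyclicity shown by an explicit join tree (you subdivide forest edges with the equality atoms where the paper threads them into line graphs around each vertex -- a cosmetic difference). One small caution: when several regex formulas capture the same variable, do not replace the constraints by a single regular expression for the intersection (such an expression can be exponentially large); just keep all the unary constraints on the same component variables, as the paper implicitly does, which affects neither correctness nor acyclicity.
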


Freydenberger et al.~\cite{freydenberger2018joining} proved that fixing the number of atoms and the number of string equalities in a $\sercq$ allows for polynomial-delay enumeration of results. 
In contrast to this, \cref{prop:quasiAcyclic} allows an unbounded number of joins and string equality selection operators.
However, in order to have this tractability result, the expressive power of each regex formula is restricted to only allow one variable.
While~\cref{prop:quasiAcyclic} gives sufficient criteria for a $\sercq$ to be represented by an acyclic $\cpfcreg$, many other such classes of $\sercq$s likely exist. Research into finding large classes of $\sercq$s that map to acyclic $\cpfcreg$s seems like a promising direction for future work.

\section{A Note on k-ary Decompositions}\label{sec:kfold}
We now generalize the notion of pattern decomposition so that the length of the right-hand side of the resulting formula is less than or equal to some $k \geq 2$. While the binary decompositions might be considered the natural case, we show that generalizing to higher arities increases the expressive power of acyclic patterns. By $\kconclog{k}\noconstr$ we denote the set of $\cpfc\noconstr$ formulas that have a right-hand side of at most length $k$. We write $\brac_k$ for the set of $k$-ary bracketed patterns over $\Xi$. We define $\brac_k$ formally using the following recursive definition: For all $x \in \Xi$ we have that $x \in \brac_k$, and if $\alpha_1, \alpha_2, \dots, \alpha_i \in \brac_k$ where $i\leq k$, then $(\tilde\alpha_1 \cdot \tilde\alpha_2 \cdots \tilde\alpha_i) \in \brac_k$. We write $\tilde\alpha \in \brac_k(\alpha)$ for some $\alpha \in \Xi^+$ if the underlying, unbracketed pattern of $\tilde\alpha$ is $\alpha$. We can convert $\tilde\alpha \in \brac_k$ into an equivalent $\kconclog{k}\noconstr$ analogously to the binary case, see~\cref{defn:conclogConversion}. 

\begin{example}
\label{example:kfold}
Consider the following $4$-ary bracketing: 
\[\tilde\alpha \df ( ( ( x_1 \cdot x_2 \cdot x_3) \cdot (x_4 \cdot x_2 \cdot x_4) \cdot (x_1 \cdot x_2) \cdot (x_5 \cdot x_5) ) \cdot x_2) .\]

As with the $2$-ary case, we decompose $\tilde\alpha$ to get the following $\kconclog{4}\noconstr$:
\begin{multline*} 
\decomp_{\tilde\alpha} \df \cqhead{}  (z_1 \logeq x_1 \cdot x_2) \land (z_2 \logeq x_5 \cdot x_5) \land (z_3 \logeq x_4 \cdot x_2 \cdot x_4) \\ \land (z_4 \logeq x_1 \cdot x_2 \cdot x_3) \land (z_5 \logeq z_4 \cdot z_3 \cdot z_1 \cdot z_2) \land (\strucvar \logeq z_5 \cdot x_2). 
\end{multline*}
\end{example}

The definition of $k$-ary concatenation tree for a decomposition $\decomp_{\tilde\alpha} \in \kconclog{k}\noconstr$ of $\tilde\alpha \in \brac_k$ follows analogously to the concatenation trees for $2$-ary decompositions, see~\cref{defn:concatenationTree}. The concatenation tree of the decomposition $\decomp_{\tilde\alpha} \in \kconclog{k}\noconstr$ is a rooted, labeled, undirected tree $\mathcal{T} \df (\mathcal{V}, \mathcal{E}, <, \Gamma, \labelFunction, v_r)$, where $\mathcal{V}$ is the set of nodes, the relation $\mathcal{E}$ is the edge relation, and $<$ is used to denote the order of children of a node (from left to right). We have that $\Gamma \df \var(\decomp_{\tilde\alpha})$ is the alphabet of labels and $\tau \colon \mathcal{V} \rightarrow \Gamma$ is the labeling function. The semantics of a $k$-ary concatenation tree are defined by considering the natural generalization of~\cref{defn:concatenationTree}. 
We say that $\decomp_{\tilde\alpha}$ is \emph{$x$-localized} if all nodes which exist on a path between two $x$-parents (of $\mathcal{T}$) are also $x$-parents.

\begin{restatable}[]{proposition}{3aryLocalized}\label{prop:3aryLocalized}
There exists $\tilde\alpha \in \brac_3$ such that the decomposition $\decomp \in \kconclog{3}$ of $\tilde\alpha$ is acyclic, but there exists $x \in \var(\decomp)$ such that $\decomp$ is not $x$-localized.
\end{restatable}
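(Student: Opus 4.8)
The plan is to exhibit one explicit ternary bracketing $\tilde\alpha \in \brac_3$, write down its decomposition $\decomp \in \kconclog{3}$, verify that $\decomp$ is acyclic by exhibiting a join tree for it, and verify that $\decomp$ fails to be $x$-localized for a suitable variable $x$ by pointing to two $x$-parents of the concatenation tree of $\decomp$ whose connecting path passes through a node that is not an $x$-parent. The non-$x$-localization part will be immediate from the shape of the (pruned) concatenation tree, so the real content is constructing $\tilde\alpha$ so that the decomposition is simultaneously acyclic.

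Before searching for $\tilde\alpha$, I would note why this is not possible in the binary setting, and what extra room ternary equations give. By \cref{lemma:cycledistance}, a binary decomposition is acyclic iff it is $x$-localized for \emph{every} variable; the reason the ``only if'' direction holds is that every binary word equation $z \logeq x \cdot y$ mentions only three variables, so a variable shared by two atoms that the concatenation tree ``pulls apart'' (via a gap node) forces a triangle in any join tree. A ternary word equation $z \logeq x \cdot y \cdot w$ mentions four variables, and this is exactly the slack I would exploit: the gap node witnessing non-localization will be a ternary node whose atom still shares a variable with each of the two relevant $x$-parents, so that a join tree can be made to route around it. Concretely, I would build $\tilde\alpha$ around a subpattern that occurs twice, at two different depths in two different branches: the variable $z$ introduced for that subpattern then has a ``hub'' atom (the one carrying $z$ on its left-hand side) which can be made adjacent, in the join tree, to every atom mentioning $z$; this is what keeps $\decomp$ acyclic even though the occurrences of $z$ are spread out in the concatenation tree.

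The main obstacle is finding a $\tilde\alpha$ in which all of these constraints hold at once, and then certifying acyclicity. The naive candidates — where some non-$x$-parent node $m$ has two $x$-parent children $c_1,c_2$ — are all cyclic: the labels $\labelFunction(c_1),\labelFunction(c_2)$ typically occur in only two atoms each, hence force the edges $\atom(c_1)\!-\!\atom(m)$ and $\atom(c_2)\!-\!\atom(m)$, which together with the edge forced by $x$ itself close a triangle. So the repeated subpattern has to be placed precisely so that one of those labels occurs in three atoms, removing the offending forced edge; balancing this against the need to still obtain a gap between two $x$-parents is the delicate step, and it is why the example is not tiny. I would finish by certifying acyclicity of the resulting $\kconclog{3}$ directly, either by running the GYO ear-removal procedure on $\decomp$ or, equivalently, by writing out the join tree; the failure of $x$-localization is then read off from the concatenation tree, completing the proof.
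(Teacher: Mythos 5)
There is a genuine gap: the statement is existential, so the proof \emph{is} the witness, and your proposal never produces one. You correctly identify the mechanism that makes a ternary counterexample possible (an atom with four variables gives a join tree enough slack to route around a non-$x$-parent on the path between two $x$-parents, and the natural way to create this situation is a repeated subpattern whose introduced variable occurs in several atoms), and this is indeed the mechanism behind the paper's example. But you explicitly defer the ``delicate step'' of actually constructing $\tilde\alpha$ and certifying it, saying only that you \emph{would} run GYO or write out a join tree on the resulting $\kconclog{3}\noconstr$. Without the concrete bracketing, its decomposition, a verification of acyclicity, and the identification of the variable and the two parents witnessing non-localization, nothing has been proved; everything after the first paragraph of your plan is a heuristic for why a witness should exist, not an argument that one does.

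For comparison, the paper's entire proof is one small explicit example: $\tilde\alpha \df ( (x_3 \cdot x_3) \cdot ((x_3 \cdot x_3) \cdot x_2) \cdot ( x_1 \cdot ((x_3 \cdot x_3) \cdot x_2)))$, whose decomposition is
\[ \decomp_{\tilde\alpha} \df \cqhead{} (z_1 \logeq x_3 \cdot x_3) \land (z_2 \logeq z_1 \cdot x_2) \land (z_3 \logeq x_1 \cdot z_2) \land (\strucvar \logeq z_1 \cdot z_2 \cdot z_3). \]
Here acyclicity is immediate (a join tree hangs $(z_1 \logeq x_3 \cdot x_3)$ under $(z_2 \logeq z_1 \cdot x_2)$, and both $(z_2 \logeq z_1 \cdot x_2)$ and $(z_3 \logeq x_1 \cdot z_2)$ under the root atom), while in the pruned concatenation tree the two $z_1$-parents are separated by the node for $(z_3 \logeq x_1 \cdot z_2)$, which is not a $z_1$-parent; note that the ``gap'' atom is binary, not ternary as your sketch anticipated — it suffices that it shares \emph{some} variable with each neighbour, while the width-four root atom is what lets the join tree avoid the triangle that forces cyclicity in the binary case. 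So your intuition about where to look is sound and your worry that the example ``is not tiny'' is unfounded, but as written the proposal stops exactly where the proof has to start.
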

\begin{proof}
Consider $\tilde\alpha \df ( (x_3 \cdot x_3) \cdot ((x_3 \cdot x_3) \cdot x_2) \cdot ( x_1 \cdot ((x_3 \cdot x_3) \cdot x_2)))$. The bracketing $\tilde\alpha$ is decomposed into $\decomp_{\tilde\alpha} \in \kconclog{3}\noconstr$, which is defined as
\[ \decomp_{\tilde\alpha} \df \cqhead{} (z_1 \logeq x_3 \cdot x_3) \land (z_2 \logeq z_1 \cdot x_2) \land (z_3 \logeq x_1 \cdot z_2) \land (\strucvar \logeq z_1 \cdot z_2 \cdot z_3) . \]

The formula $\decomp_{\tilde\alpha}$ can be verified to be acyclic. However, $\decomp_{\tilde\alpha}$ is not $z_1$-localized. 
\end{proof}

In this section, we have briefly examined $k$-ary decompositions, and have shown that there exists $\tilde\alpha \in \brac_3$ such that the decomposition $\decomp \in \kconclog{3}$ of $\tilde\alpha$ is acyclic, but $\decomp$ is not $x$-localized for some $x \in \var(\decomp)$. 
The authors note that the if-direction in the proof of~\cref{lemma:cycledistance} implies that $x$-locality for all variables is a sufficient criterion for a $k$-ary decomposition to be acyclic.
A systematic study into $k$-ary acyclic decompositions may yield more expressive spanners, and could be useful for pattern languages, which have been linked to $\fc$-formulas with bounded width~\cite{frey2019finite}. 
However, more general approaches such as bounded treewidth for binary decompositions appear to be a more promising direction for future work.
Furthermore, the membership problem for a pattern $\alpha$ parameterized by $|\alpha|$ is $\mathsf{W}[1]$-hard~\cite{fernau2016parameterised}. 
Since every pattern is trivially $|\alpha|$-ary acyclic, the authors believe it to be likely that the parameterized problem of model checking for $k$-ary acyclic decompositions is $\mathsf{W}[1]$-hard.

\section{Conclusions}\label{sec:conc}
Freydenberger and Peterfreund~\cite{frey2019finite} introduced  $\fcreg$ as a logic for querying and model checking words that behaves similar to relational $\fo$.
The present paper develops  this connection further by providing  a polynomial-time algorithm that either decomposes  an $\cpfcreg$ into an acyclic $\concreg$, or determines that this is not possible. 
These acyclic $\concreg$ formulas allow for polynomial-time model checking, and their results can be enumerated with polynomial-delay. 
Consequently, the present paper establishes a notion of tractable acyclicity for $\cpfc$s. 
Due to the close connections between $\fcreg$ and core spanners, this provides us with a large class of tractable $\sercq$s.

But this is only the first step in the study of tractable $\sercq$s and $\cpfcreg$s.
It seems likely that  more efficient algorithms for model checking and enumeration can be found by utilizing string algorithms rather than materializing the relations for each atom.
 
Another future direction for research is the consideration of other structural parameters, like treewidth.
A systematic study of the decomposition of $\cpfc$s into $\conclog$s of bounded treewidth would likely yield a large class of $\cpfc$s with polynomial-time model checking.
As a consequence, one could define a suitable notion of treewidth for core spanners. 
Determining the exact class of $\cpfc$s with polynomial-time model checking is likely a hard problem. 
This is because such a result would solve the open problem in formal languages of determining exactly what patterns have polynomial-time membership.

\bibliography{bibliography}

\begin{thebibliography}{10}

\bibitem{abiteboul1995foundations}
Serge Abiteboul, Richard Hull, and Victor Vianu.
\newblock {\em Foundations of databases}, volume~8.
\newblock Addison-Wesley Reading, 1995.

\bibitem{amarilli2020constant}
Antoine Amarilli, Pierre Bourhis, Stefan Mengel, and Matthias Niewerth.
\newblock Constant-delay enumeration for nondeterministic document spanners.
\newblock {\em ACM SIGMOD Record}, 49(1):25--32, 2020.

\bibitem{bagan2007acyclic}
Guillaume Bagan, Arnaud Durand, and Etienne Grandjean.
\newblock On acyclic conjunctive queries and constant delay enumeration.
\newblock In {\em Proceedings of CSL 2007}, pages 208--222, 2007.

\bibitem{bre:inc}
Joachim Bremer and Dominik~D. Freydenberger.
\newblock Inclusion problems for patterns with a bounded number of variables.
\newblock {\em Information and Computation}, 220:15--43, 2012.

\bibitem{karkkainen2006linear}
Stefan Burkhardt, Juha K{\"a}rkk{\"a}inen, and Peter Sanders.
\newblock Linear work suffix array construction.
\newblock {\em Journal of the ACM}, 53(6):918--936, 2006.

\bibitem{ehrenfreucht1979finding}
Andrzej Ehrenfreucht and Grzegorz Rozenberg.
\newblock Finding a homomorphism between two words is {NP}-complete.
\newblock {\em Information Processing Letters}, 9(2):86--88, 1979.

\bibitem{fag:spa}
Ronald Fagin, Benny Kimelfeld, Frederick Reiss, and Stijn Vansummeren.
\newblock Document spanners: {A} formal approach to information extraction.
\newblock {\em Journal of the {ACM}}, 62(2):12, 2015.

\bibitem{fernau2016parameterised}
Henning Fernau, Markus~L Schmid, and Yngve Villanger.
\newblock On the parameterised complexity of string morphism problems.
\newblock {\em Theory of Computing Systems}, 59:24--51, 2016.

\bibitem{florenzano2018constant}
Fernando Florenzano, Cristian Riveros, Mart{\'\i}n Ugarte, Stijn Vansummeren,
  and Domagoj Vrgoc.
\newblock Constant delay algorithms for regular document spanners.
\newblock In {\em Proceedings of {PODS} 2018}, pages 165--177, 2018.

\bibitem{fre:splog}
Dominik~D. Freydenberger.
\newblock A logic for document spanners.
\newblock {\em Theory of Computing Systems}, 63(7):1679--1754, 2019.

\bibitem{fre:doc}
Dominik~D. Freydenberger and Mario Holldack.
\newblock Document spanners: From expressive power to decision problems.
\newblock {\em Theory of Computing Systems}, 62(4):854--898, 2018.

\bibitem{pet:com}
Dominik~D. Freydenberger, Benny Kimelfeld, Markus Kr{\"{o}}ll, and Liat
  Peterfreund.
\newblock Complexity bounds for relational algebra over document spanners.
\newblock In {\em Proceedings of {PODS}~2019}, pages 320--334, 2019.

\bibitem{freydenberger2018joining}
Dominik~D. Freydenberger, Benny Kimelfeld, and Liat Peterfreund.
\newblock Joining extractions of regular expressions.
\newblock In {\em Proceedings of {PODS} 2018}, pages 137--149, 2018.

\bibitem{frey2019finite}
Dominik~D. Freydenberger and Liat Peterfreund.
\newblock The theory of concatenation over finite models.
\newblock In {\em Proceedings of {ICALP} 2021}, pages 130:1--130:17, 2021.

\bibitem{frey:split}
Dominik~D. Freydenberger and Sam~M. Thompson.
\newblock Splitting spanner atoms: A tool for acyclic core spanners.
\newblock In {\em Proceedings of {ICDT} 2022}, pages 6:1--6:18, 2022.

\bibitem{gottlob2001complexity}
Georg Gottlob, Nicola Leone, and Francesco Scarcello.
\newblock The complexity of acyclic conjunctive queries.
\newblock {\em Journal of the ACM}, 48(3):431--498, 2001.

\bibitem{gus:alg}
Dan Gusfield.
\newblock {\em Algorithms on Strings, Trees, and Sequences -- Computer Science
  and Computational Biology}.
\newblock Cambridge University Press, 1997.

\bibitem{gus:lin}
Dan Gusfield and Jens Stoye.
\newblock Linear time algorithms for finding and representing all the tandem
  repeats in a string.
\newblock {\em Journal of Computer and System Sciences}, 69(4):525--546, 2004.

\bibitem{jiang1991note}
Tao Jiang and Bala Ravikumar.
\newblock A note on the space complexity of some decision problems for finite
  automata.
\newblock {\em Information Processing Letters}, 40(1):25--31, 1991.

\bibitem{lohrey2012algorithmics}
Markus Lohrey.
\newblock Algorithmics on {SLP}-compressed strings: {A} survey.
\newblock {\em Groups-Complexity-Cryptology}, 4(2):241--299, 2012.

\bibitem{olive1985catalan}
Gloria Olive.
\newblock Catalan numbers revisited.
\newblock {\em Journal of mathematical analysis and applications},
  111(1):201--235, 1985.

\bibitem{peterfreund2021grammars}
Liat Peterfreund.
\newblock Grammars for document spanners.
\newblock In {\em Proceedings of {ICDT} 2021}, pages 7:1--7:18, 2021.

\bibitem{prestwich2009cnf}
Steven~David Prestwich.
\newblock {CNF} encodings.
\newblock {\em Handbook of satisfiability}, 185:75--97, 2009.

\bibitem{schmid2020purely}
Markus~L. Schmid and Nicole Schweikardt.
\newblock A purely regular approach to non-regular core spanners.
\newblock In {\em Proceedings of ICDT 2021}, pages 4:1--4:19, 2021.

\bibitem{YannakakisAlgorithm}
Mihalis Yannakakis.
\newblock Algorithms for acyclic database schemes.
\newblock In {\em Proceedings of VLDB 1981}, pages 82--94, 1981.

\end{thebibliography}
\newpage
\appendix
\section{Proof of~\autoref{CQlowerBounds}}\label{proof:CQlowerBounds}
\CQlowerBounds*
\begin{proof}
	The upper bound for evaluation follows immediately from the matching upper bound for the existential-positive fragment of $\fc$ with regular constraints (see~\cite{frey2019finite}).
	
	Lower bound for evaluation follows from the fact that, given $\alpha\in \Xi^*$ and $w\in \Sigma^*$,  deciding whether there is a morphism $\sigma\colon\Xi^*\to\Sigma^*$ with $\sigma(\alpha)=w$ is $\np$-complete (see Ehrenfeucht and Rozenberg~\cite{ehrenfreucht1979finding}). 
	Hence, even model-checking $\cpfc$s of the form $\cqhead{}(\strucvar\logeq \alpha)$ is $\np$-hard.
	
	The undecidability follows from the undecidability of the inclusion problem for pattern languages (see Bremer and Freydenberger~\cite{bre:inc}): Given $\alpha,\beta\in(\Xi\cup\Sigma)^*$, does every pattern substitution $\sigma$ have a pattern substitution $\tau$ with $\sigma(\alpha)=\tau(\beta)$?  
	Hence, containment is undecidable even if restricted to comparing $\cpfc$s of the form $\cqhead{}(\strucvar\logeq \alpha)$ and  $\cqhead{}(\strucvar\logeq \beta)$ with  $\alpha,\beta\in(\Xi\cup\Sigma)^*$.
\end{proof}

\section{Proof of~\autoref{Prop:RGXtoPatCQ}}

Before proving~\cref{Prop:RGXtoPatCQ}, we first define a \emph{parse trees} for $\gamma \in \synrgx$. Note that we assume $\gamma$ is well-bracketed. That is, each subexpression of $\gamma$ is of the form $a$, $\emptyset$, $\emptyword$, $(\gamma_1)^*$, $(\gamma_1 \cdot \gamma_2)$, $(\gamma_1 \lor \gamma_2)$, or $\bind{x}{\gamma_1}$ for $a \in \Sigma$ and $\gamma_1,\gamma_2, \in \synrgx$. If $\gamma \in \synrgx$ is not well-bracketed, then we can assume any valid bracketing for $\gamma$.

\begin{definition}
Let $\gamma \in \synrgx$. A \emph{parse tree} for $\gamma$ is a rooted, direct tree $T_\gamma$. Each node of $T_\gamma$ is a subexpression of $\gamma$. The root of $T_\gamma$ is $\gamma$. For each node $v$ of $T_\gamma$, the following rules must hold.
\begin{enumerate}
\item If $v$ is $(\gamma_1 \cdot \gamma_1)$ where $\SVars{\gamma_1} \neq \emptyset$ or $\SVars{\gamma_2} \neq \emptyset$, then $v$ has a left child $\gamma_1$, and a right child $\gamma_2$,
\item if $v$ is $x \{ \gamma' \}$, then $v$ has $\gamma'$ as a single child, and 
\item if $v$ is any other subexpression, then $v$ is a leaf node. 
\end{enumerate}
\end{definition}

The parse tree for $\gamma$ that we define is specific for our use, and is different to the standard definition of $\gamma$-parse trees which are used to define the semantics for regex-formulas, see~\cite{fag:spa}. The proof of the following proposition follows from~\cite{fre:doc, fre:splog, frey2019finite}, however we include this proof for completeness sake. 

\RGXtoPatCQ*
\begin{proof}
Let $\query \df \pi_Y \left( \select^=_{x_1, y_1} \cdots \select^=_{x_m, y_{m}} (\gamma_1 \join \cdots \join \gamma_k) \right)$ be a synchronized $\sercq$. We realize $\query$ using the following $\cpfcreg$:

\[ \varphi_\query \df \cqhead{\vec{x}} \bigwedge_{i=1}^m (x_i^C \logeq y_i^C) \land \bigwedge_{i=1}^k \varphi_{\gamma_i}, \]
where  $\vec{x}$ contains $x^P$ and $x^C$ for all $x \in Y$. Furthermore, for each $i \in [k]$, we define $\varphi_{\gamma_i}$  as follows: Take the parse tree $T_{\gamma_i}$ for $\gamma_i$ and associate every node $n$ of $T_{\gamma_i}$ with a variable $v_n$ as follows:
\begin{itemize}
	\item If $n$ is the root, let $v_n\df \strucvar$ and disregard the following cases.
	\item If $n$ is a variable binding  $x\{\cdot\}$, let $v_n\df x^C$.
	\item Otherwise -- that is, if $n$ is a concatenation or a regular expression -- let $v_n\df z_n$, where $z_n$ is a new variable that is unique to $n$.	
\end{itemize}
The construction shall ensure that, when matching $\gamma_i$ against a word $w$, each variable $v_n$ contains the part of $w$ that matches against the subexpression of the node $n$. 
To this end, for every node $n$, we also define an atom $A_n$ as follows:
\begin{itemize}
	\item If $n$ is a concatenation with left child $l$ and right child $r$, then $A_n$ is the word equation $(v_n\logeq v_l \cdot v_r)$.
	\item If $n$ is a variable binding, let $A_n$ be the word equation $(v_n \logeq v_c)$, where $c$ is the child of $n$.
	\item If $n$ is a regular expression $\gamma'$, then $A_n$ is the regular constraint $(v_n \regconst \gamma')$.
\end{itemize}
We add all these atoms $A_n$ to $\varphi_{\gamma_i}$. 
Up to this point, we have that every $\sigma\in\fun{\varphi_{\gamma_i}}\strucbra{w}$ encodes the contents of the spans of some  $\mu\in\fun{\gamma_i}(w)$.
The only part that is missing in the construction are the prefix variables.

\newcommand{\preffun}{\mathsf{p}}
Recall that for every node $n$ in the parse tree $T(\gamma_i)$, we defined a variable $v_n$ that represent the part of $w$ that matches against the subexpression of $n$. 
To obtain the corresponding prefix, we define a function $\preffun$ that maps each node $n$ to a pattern $\preffun(n)\in\Xi^*$ as follows.
Given a node $n$, we look for the lowest node above $n$ that is a concatenation and has $n$ as right child or descendant of its left child. 
If no such node exists -- that is, if no node above $n$ is a concatenation, or every concatenation above $n$ has $n$ as descendant on the left side -- define $\preffun(n)\df\emptyword$.
If such a node exists, we denote it by $m$ and its left child by $l$ and define $\preffun(n)\df \preffun(m)\cdot v_l$.
In other words, $\preffun(n)$ is the concatenation of all $v_l$ that belongs to nodes that refer a part of $w$ that is to the left of the part that belongs to $n$.

Hence, to get the values for prefix variables, we take each node $n$ that is a variable binding $x\{\cdot\}$ and add the word equation $(x^P\logeq \preffun(n))$ to $\varphi_{\gamma_i}$.

\subparagraph*{Complexity} First, we build the parse tree $T_{\gamma_i}$ which can be constructed in time polynomial in the size of $\gamma_i$. Then, we mark each node of $T_{\gamma_i}$ with a variable and add a word equation or regular constraint to $\varphi_{\gamma_i}$, which takes polynomial time. 
To ensure the spanner $\gamma_i$ represents is correctly realized, we add an extra word equation for the prefix variable -- this clearly takes polynomial time. 
There are linearly many regex formulas in $\query$, we can construct $\varphi_{\gamma_i}$ for all $i \in [k]$ in polynomial time. 
The final step of computing $\varphi_\query$ takes polynomial time -- we consider each string equality and add the corresponding word equation, and consider each variable in the projection and add the corresponding variables to the head of the query. 
Therefore, the overall complexity is polynomial in the size of $\query$.
\end{proof}

\section{Proof of~\autoref{lemma:datastructure}}
\Datastructure*
\newcommand{\LCP}{\mathsf{LCP}}
\begin{proof}
The two main concepts that are used for the data structures are the \emph{LCP data structure} (from ``least common prefix'', see \eg \cite{karkkainen2006linear}), and the \emph{suffix tree} (see \eg part~II of~\cite{gus:alg}), which can both be constructed from $w$ in time $\bigO(|w|)$. 

\proofsubparagraph{Evaluation} The LCP data structure (for $w$) takes two indices $1\leq i,j\leq w$ and returns in constant time $\LCP(i,j)$, the length of the longest common prefix of the two suffixes $w_{\spn{i,|w|+1}}$ and $w_{\spn{j,|w|+1}}$.
Recall that we mentioned in \cref{sec:prelim} (when clarifying the complexity assumptions) that we represent factors of $w$ as a pair of indices.
To be precise, we can express each $u\sqsubseteq w$ as a span $\spn{i,j}$ with $1 \leq i \leq j \leq |w|+1$. 
We use this to decide $\sigma\in\fun{x \logeq y \cdot z}\strucbra{w}$ in constant time as follows:
Let  $\spn{i,j}$, $\spn{i_1,j_1}$, and $\spn{i_2,j_2}$ be the representations of $\sigma(x)$, $\sigma(y)$, and $\sigma(z)$, respectively. 
In other words, $\sigma(x)=w_{\spn{i,j}}$, $\sigma(y)=w_{\spn{i_1,j_1}}$, and  $\sigma(z)=w_{\spn{i_2,j_2}}$.
For our convenience, let $\ell\df|\sigma(x)|$, $\ell_1\df|\sigma(y)|$, and $\ell_2\df|\sigma(z)|$.

We have $\sigma(x)=\sigma(y)\cdot\sigma(z)$ if and only if the following conditions are met:
\begin{itemize}
	\item $|\sigma(x)|=|\sigma(y)|+|\sigma(z)|$, that is, $\ell=\ell_1+\ell_2$,
	\item $\sigma(y)=\sigma(x)_{\spn{1,1+\ell_1}}$, and
	\item $\sigma(z)=\sigma(x)_{\spn{1+\ell_1,\ell+1}}$.
\end{itemize}
These are (respectively) equivalent to the following conditions:
\begin{itemize}
	\item $(j-i)=(j_1-i_1)+(j_2-i_2)$,
	\item $\LCP(i,i_1)\geq (j_1-i_1)$, and
	\item $\LCP(i+(j_1-i_1),i_2)\geq (j_2-i_2)$, 
\end{itemize}
due to $\ell=j-i$, $\ell_1= j_1-i_1$, and $\ell_2= j_2-i_2$.
The arithmetic operations can be performed in constant time due to our choice of computation model, and the LCP data structure can also be queried in constant time.
 
\proofsubparagraph{Enumeration of all factors}
Apart from some trivial special cases, the enumeration relies on enumerating all factors of $w$ with constant delay. 
This is a straightforward application of a \emph{suffix tree} (although the authors assume that this has been shown before, they were not able to locate a reference). 
We give a brief introduction to suffix trees, with just the level of detail that is required for our purposes. 
More information can be found (for example) in~\cite{gus:alg} (chapters~5 to~7).

The suffix tree $T(w)$ of $w$ is a rooted directed tree with $|w|$ leaves that are labeled with numbers from 1 to $n$. 
With the exception of the root, each internal node has at least two children, and each edge is labeled with a nonempty factor of $w$. 
No two edges from the same nodes are labeled with factors that start with the same letter.
Most importantly, for any leaf with label $i$, the word that is obtained by concatenating the edge labels along the path from the root to that leaf is exactly $w_{\spn{i,n+1}}$ -- that is, the suffix of $w$ that starts at position $i$. 

To ensure that a suffix tree for $w$ exists, we assume that the last letter of $w$ is a special character $\mathdollar$ that does not occur otherwise (that is, in a strict sense, we construct the suffix tree of $w\mathdollar$).
 \cref{fig:sufftreepapaya}  shows an example suffix tree, which we also use as a running example.
While storing the edge labels explicitly would take quadratic space, recall that we represent factors of $w$ as spans (this allows us to keep the size of $T(w)$ linear in $|w|$).
\begin{figure}[t]
	\centering
	\begin{tikzpicture}[->,>=stealth',shorten >=1pt,auto,on grid,,node distance=17mm]
  \tikzstyle{leaf}=[draw,shape=rectangle]
  \tikzstyle{inner}=[draw,shape=circle]	
   \tikzstyle{lab}=[pos=0.5,thin, align=center, fill=white, anchor=center]
  
		\node[leaf]  (ad) [] {6};
		\node[leaf]  (apayad) [right of=ad] {2};
		\node[leaf]  (ayad) [right of=apayad] {4};		
		\node[leaf]  (papayad) [right of=ayad,xshift=-10mm] {1};			
		\node[leaf]  (payad) [right of=papayad] {3};					
		\path (papayad) -- (payad) node[midway,anchor=center] (mid13) {};
				
		\node[inner]  (a) [above of=apayad] {};		
		\node[inner]  (pa) [above of=mid13] {};		
		
		\path (a) -- (pa) node[midway,anchor=center] (midapa) {};		
		
		\node[inner]  (root) [above of=midapa] {};		
		\node[leaf]  (yad) [right of=pa] {5};		
		\node[leaf]  (d) [left of=a] {7};		
		
        \draw[->]  (a) -- node[lab] {$\mathdollar$} (ad.north);
        \draw[->]  (a) -- node[lab] {$\mathtt{paya}\mathdollar$} (apayad.north);
        \draw[->]  (a) -- node[lab] {$\mathtt{ya}\mathdollar$} (ayad.north);

        \draw[->]  (pa) -- node[lab] {$\mathtt{paya}\mathdollar$} (papayad.north);
        \draw[->]  (pa) -- node[lab] {$\mathtt{ya}\mathdollar$} (payad.north);

        \draw[->]  (root) -- node[lab] {$\mathtt{a}$} (a);
        \draw[->]  (root) -- node[lab] {$\mathtt{pa}$} (pa);
        
        \draw[->]  (root) -- node[lab] {$\mathtt{ya}\mathdollar$} (yad.north west);
        \draw[->]  (root) -- node[lab] {$\mathdollar$} (d.north east);
	\end{tikzpicture}
	\caption{The suffix tree that we construct for the  word $w\df\mathtt{papaya}$. From left to right, the leaves correspond to the suffixes $\emptyword$, $\mathtt{a}$, $\mathtt{apaya}$, $\mathtt{aya}$, $\mathtt{papaya}$, $\mathtt{paya}$, and $\mathtt{ya}$. To enumerate the factors of $w$, we use the nodes 2, 4, 1, 3, 5.   
In the enumeration of factors, the leaf $2$ generates (in this order) $\emptyword$, $\mathtt{a}$, $\mathtt{ap}$, $\mathtt{apa}$, $\mathtt{apay}$, and $\mathtt{apaya}$; while $4$ only generates $\mathtt{ay}$ and $\mathtt{aya}$. Although this leaf corresponds to $\mathtt{aya}$, we skip the prefix $\mathtt{a}$, due to $\LCP(2,4)=1$. 
As $\LCP(4,1)=0$, we have that $1$ outputs $\mathtt{p}$, $\mathtt{pa}$,\ldots $\mathtt{papaya}$; and  $3$ only $\mathtt{pay}$ and $\mathtt{paya}$. Finally, from $5$, we get $\mathtt{y}$ and $\mathtt{ya}$.
 }\label{fig:sufftreepapaya}
\end{figure}
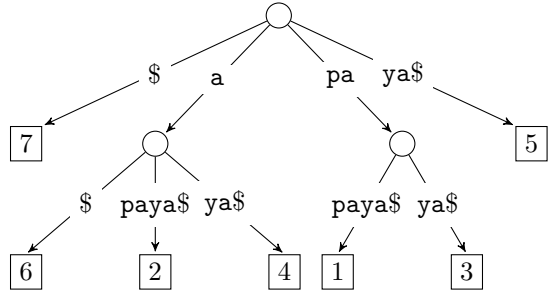

The suffix tree can be constructed in time $\bigO(|w|)$ (see \eg~\cite{gus:alg,karkkainen2006linear}). 
Note that we can ensure that the children of each node are ordered lexicographically.
To allow us enumerating all factors, we traverse the suffix tree depth-first during the preprocessing and create a list $L=i_1,\ldots,i_k$ of those leaves for which the incoming edge is labeled with more than just $\mathdollar$ (see  \cref{fig:sufftreepapaya}).
For the actual enumeration, we iterate over this list and use the leaves to output factors as follows:
\begin{itemize}
\item $i_1$ generates  $\emptyword$ and $w_{\spn{i_1,i_1+1}}$ to $w_{\spn{i_1,n+1}}$ (where we assume that $w_{\spn{n,n+1}}$ is the last letter of $w$, not $\mathdollar$), and
\item for $1\leq j < k$, every $i_{j+1}$ generates $w_{\spn{i_{j+1},i_{j+1}+\LCP(i_j,i_{j+1})+1}}$ to $w_{\spn{i_{j+1},n+1}}$.
\end{itemize}
That is, we use the suffix tree to enumerate all suffixes; and for each suffix, we enumerate all of its prefixes (apart from those that were already enumerated, which we spot skip by using $\LCP$, see ``Evaluation'' above).
As the list $L$ was derived directly from the tree, all leaves that have a common parent (which means that they longest common prefix is not $\emptyword$) are grouped together as a block. 
By using $\LCP$, we ensure that no factor is output twice. 
This is also why $L$ does not include leaves where the incoming edge is labeled~$\mathdollar$;
 factors that could be obtained from these words are handled by other leaves.
As the children of each inner node are ordered lexicographically, the construction also ensures that the factors of $w$ are output in lexicographic order. 
See \cref{fig:sufftreepapaya} for an example.

The list $L$ can be created in linear time during the preprocessing.
Each of the steps during the enumeration -- iterating over $L$, calling $\LCP$, and moving the indices -- takes only constant time. 
As the factors are returned as spans, we can conclude constant delay.

\proofsubparagraph{Enumeration of all solutions} 
To enumerate all $\sigma\in \fun{x \logeq y \cdot z}\strucbra{w}$, we need to consider various cases that depend on the three variables.
The ``standard'' case is that the variables $x,y,z$ are pairwise distinct, and none of them is~$\strucvar$. 
Then all we need to do is enumerate all $u\subword w$ (as described above). 
For each of these, we enumerate all ways of splitting $u$ into $v_1,v_2$ with $u=v_1\cdot v_2$, by enumerating the lengths of $v_1$ from 0 to $|u|$.
In each case, we define $\sigma(x)\df u$, $\sigma(y)\df u_1$, and $\sigma(z)\df u_2$ (and, of course, $\sigma(\strucvar)\df w$).

Regarding special cases, we first discuss those where at least one variable is $\strucvar$:
\begin{itemize}
	\item If $y=\strucvar$, the only solution is $\sigma(x)\df\sigma(y)=w$ and $\sigma(z)\df\emptyword$.
	This is well-defined -- unless  $z=\strucvar$ and $w\neq\emptyword$. In this case, we have  $\fun{x \logeq y \cdot z}\strucbra{w}=\emptyset$. This can be identified during the preprocessing.
	\item If $z=\strucvar$, we proceed as in the previous case.
	\item If $x=\strucvar$ and $y,z\neq \strucvar$, we distinguish two cases:
	\begin{itemize}
		\item If $y\neq z$, we set $\sigma(\strucvar)\df w$, and generate all possible $\sigma(y)$ and $\sigma(z)$ by enumerating all ways of splitting $w$ (as in the standard case).
		\item If $y=z$, we check if the first and second half of $w$ are identical (using $\LCP$ and arithmetic, we can perform this check in constant time during the preprocessing).  If this is the case, we can define the only $\sigma$ in  $\fun{\strucvar \logeq y \cdot y}\strucbra{w}$ accordingly. Otherwise, the set is empty.
	\end{itemize}
\end{itemize}
Now we can assume that none of the three variables is $\strucvar$, which leaves only cases where at least two are identical.
\begin{itemize}
	\item If $x=y=z$, the only $\sigma$ with  $\sigma\in\fun{x \logeq y \cdot z}\strucbra{w}$ has $\sigma(x)=\emptyword$.
	\item If $x=y\neq z$, we can assume $\sigma(z)=\emptyword$, and can choose any factor of $w$ for $\sigma(x)$. Hence, we enumerate all factors of $w$. The case for $x=z\neq y$ is analogous.
	\item If $x\neq y=z$, we enumerate all $u\subword w$ that are squares (\ie, that can be written as $u=vv$ for some $v\subword w$. Enumerating all these squares with constant delay is possible with additional preprocessing on the suffix tree, see Gusfield and Stoye\cite{gus:lin}.
\end{itemize}
Hence, we can set up the data structures for each of these cases during the preprocessing.
Given a word equation $x\logeq y\cdot z$, we can then pick the appropriate enumeration algorithm that allows us to enumerate $\fun{x \logeq y \cdot z}\strucbra{w}$ with constant delay.
\end{proof}
This construction also applies to equations of the form $x\logeq y_1\cdots y_k$ with $k>2$, assuming that $x$ and all $y_i$ are pairwise distinct (this proceeds as the ``standard case''). 

\section{Proof of~\autoref{cycPat}}

Before proving~\cref{cycPat}, we give the version of the GYO algorithm that we work with to decide the decomposition $\decomp_{\tilde\alpha} \df \cqhead{\vec{x}} \bigwedge_{i=1}^m \chi_i$ of $\tilde\alpha \in \brac(\alpha)$ is acyclic\footnote{We use variant of $\chi$ to denote atoms of some decomposition.} (see Chapter 6 of~\cite{abiteboul1995foundations} for more information on acyclic joins). We remind the reader that $\strucvar$ is considered a constant symbol (not a variable) since $\subs(\strucvar)$ is always our input document, $w \in \Sigma^*$.
\begin{enumerate}
\item Let $E \df \emptyset$ and $V \df \{ \chi_i \mid i \in [m]\}$.
\item Define all nodes of $V$ and all variables in $\var(\decomp_{\tilde\alpha})$ as \emph{unmarked}.
\item Repeat the following until nothing changes:
\begin{enumerate}
\item If there exists unmarked nodes $\chi_i$ and $\chi_j$ with $i \neq j$ such that $\var(\chi_i) \subseteq \var(\chi_j)$, then:
\begin{enumerate}
\item Mark $\chi_i$ and add the edge $\{ \chi_i, \chi_j \}$ to $E$.
\end{enumerate}
\item Mark all $x \in \var(\decomp_{\tilde\alpha})$ that occurs in exactly one unmarked node.
\end{enumerate}
\item If there exists exactly one unmarked node, then return $T \df (V,E)$. 
\item Otherwise, return ``$\tilde\alpha$ is cyclic''.
\end{enumerate}

\cycPat*
\begin{proof} We prove this Proposition in two parts.

\proofsubparagraph{Part 1. There exists a cyclic pattern:}
Let $\alpha \df x_1 x_2 x_1 x_3 x_1$. We prove that $\alpha$ is cyclic by enumerating every possible bracketing $\tilde\alpha \in \brac(\alpha)$, and then show that the decomposition of each bracketing is cyclic. To show a formula is cyclic, we can use the GYO algorithm. 

After the GYO algorithm has been executed on a $\conclog\noconstr$, we have a set of unmarked nodes, and each unmarked node contains unmarked variables. We represent each unmarked node as a set containing its unmarked variables. The set of unmarked nodes for $\decomp_{\tilde\alpha_i}$ after the GYO algorithm has been executed is denoted by $\mathcal{H}_i$. Therefore, the formula $\decomp_{\tilde\alpha_i}$ is acyclic if and only if $|\mathcal{H}_i| = 1$. We now consider all the bracketings, the corresponding decompositions, and the set $\mathcal{H}_i$ for each $\tilde\alpha_i \in \brac(\alpha)$:
\begin{itemize}
\item $\tilde\alpha_1 \df ((x_1 \cdot (x_2 \cdot (x_1 \cdot (x_3 \cdot x_1)))))$ which decomposes into 
\begin{align*}
\decomp_{\tilde\alpha_1} \df & \cqhead{} (z_1 \logeq x_3 \cdot x_1) \land (z_2 \logeq x_1 \cdot z_1) \land (z_3 \logeq x_2 \cdot z_2) \land (\strucvar \logeq x_1 \cdot z_3), \\
\mathcal{H}_1 \df & \{ \{ z_2, x_1\}, \{ z_3, z_2\}, \{ x_1, z_3\} \}.
\end{align*}
\item $\tilde\alpha_2 \df (x_1 \cdot ( x_2 \cdot ((x_1 \cdot x_3) \cdot x_1)))$ which decomposes into 
\begin{align*} 
\decomp_{\tilde\alpha_2} \df & \cqhead{} (z_1 \logeq x_1 \cdot x_3) \land (z_2 \logeq z_1 \cdot x_1) \land (z_3 \logeq x_2 \cdot z_2) \land (\strucvar \logeq x_1 \cdot z_3), \\
\mathcal{H}_2 \df & \{\{ z_2, x_1\}, \{ z_3, z_2\}, \{ x_1, z_3\} \}.
\end{align*}
\item $\tilde\alpha_3 \df ((x_1 \cdot x_2) \cdot (x_1 \cdot (x_3 \cdot x_1)))$ which decomposes into 
\begin{align*}
\decomp_{\tilde\alpha_3} \df & \cqhead{}(z_1 \logeq x_1 \cdot x_2) \land (z_2 \logeq x_3 \cdot x_1) \land (z_3 \logeq x_1 \cdot z_2) \land (\strucvar \logeq z_1 \cdot z_3), \\
\mathcal{H}_3 \df & \{ \{ z_1, x_1\}, \{ z_2, x_1\}, \{ z_3, z_1, z_2\}, \{ x_1, z_3\} \}.
\end{align*}
\item $\tilde\alpha_4 \df (x_1 \cdot ((x_2 \cdot x_1) \cdot (x_3 \cdot x_1)))$ which decomposes into 
\begin{align*}
\decomp_{\tilde\alpha_4} \df  &\cqhead{} (z_1 \logeq x_3 \cdot x_1) \land (z_2 \logeq x_2 \cdot x_1) \land (z_3 \logeq z_1 \cdot z_2) \land (\strucvar \logeq x_1 \cdot z_3), \\
\mathcal{H}_4 \df & \{ \{ z_1, x_1\}, \{ z_2, z_1\}, \{ z_2, x_1\} \}.
\end{align*}
\item $\tilde\alpha_5 \df (x_1 \cdot ((x_2 \cdot (x_1 \cdot x_3)) \cdot x_1 ))$ which decomposes into 
\begin{align*}
\decomp_{\tilde\alpha_5} \df  &\cqhead{} (z_1 \logeq x_1 \cdot x_3) \land (z_2 \logeq x_2 \cdot z_1) \land (z_3 \logeq z_2 \cdot x_1) \land (\strucvar \logeq x_1 \cdot z_3), \\
\mathcal{H}_5 \df & \{ \{ z_1, x_1\}, \{ z_2, z_1\}, \{ z_2, x_1\} \}.
\end{align*}
\item $\tilde\alpha_6 \df (x_1 \cdot (((x_2 \cdot x_1) \cdot x_3) \cdot x_1))$ which decomposes into 
\begin{align*} 
\decomp_{\tilde\alpha_6} \df  &\cqhead{} (z_1 \logeq x_2 \cdot x_1) \land (z_2 \logeq z_1 \cdot x_3) \land (z_3 \logeq z_2 \cdot x_1) \land (\strucvar \logeq x_1 \cdot z_3), \\
\mathcal{H}_6 \df & \{ \{ z_1, x_1\}, \{ z_3, z_2, x_1\}, \{ z_1, z_3\} \}.
\end{align*}
\item $\tilde\alpha_7 \df   ((x_1 \cdot x_2) \cdot (( x_1 \cdot x_3) \cdot x_1))$ which decomposes into 
\begin{align*} 
\decomp_{\tilde\alpha_7} \df & \cqhead{}(z_1 \logeq x_1 \cdot x_2) \land (z_2 \logeq x_1 \cdot x_3) \land (z_3 \logeq z_2 \cdot x_1) \land (\strucvar \logeq z_1 \cdot z_3), \\
\mathcal{H}_7 \df & \{ \{ z_2, x_1\}, \{ z_3, x_1\}, \{ z_3, z_2\}\}.
\end{align*}
\item $\tilde\alpha_8 \df (x_1 \cdot (x_2 \cdot x_1)) \cdot (x_3 \cdot x_1))$ which decomposes into 
\begin{align*} 
\decomp_{\tilde\alpha_8} \df & \cqhead{}(z_1 \logeq x_2 \cdot x_1) \land (z_2 \logeq x_3 \cdot x_1) \land (z_3 \logeq x_1 \cdot z_1) \land (\strucvar \logeq z_3 \cdot z_2), \\
\mathcal{H}_8 \df & \{ \{ z_1, x_1\}, \{ z_2, z_1\}, \{ z_2, x_1\} \}.
\end{align*}
\item $\tilde\alpha_9 \df (x_1 \cdot (x_2 \cdot (x_3 \cdot x_1))) \cdot x_1)$ which decomposes into 
\begin{align*} 
\decomp_{\tilde\alpha_9} \df & \cqhead{}(z_1 \logeq x_3 \cdot x_1) \land (z_2 \logeq x_2 \cdot z_1) \land (z_3 \logeq z_2 \cdot x_1) \land (\strucvar \logeq x_1 \cdot z_3), \\
\mathcal{H}_9 \df & \{ \{ z_1, x_1\}, \{ z_2, z_1\}, \{ x_1, z_2\} \}.
\end{align*}
\item $\tilde\alpha_{10} \df ((x_1 \cdot ((x_2 \cdot x_1) \cdot x_3)) \cdot x_1)$ which decomposes into 
\begin{align*} 
\decomp_{\tilde\alpha_{10}} \df & \cqhead{} (z_1 \logeq x_2 \cdot x_1) \land (z_2 \logeq z_1 \cdot x_3) \land (z_3 \logeq x_1 \cdot z_2) \land (\strucvar \logeq z_3\cdot x_1), \\
\mathcal{H}_{10} \df & \{ \{ z_1, x_1\}, \{ z_2, x_1\}, \{ z_3, z_1, z_2\}, \{ z_3, x_1\} \}.
\end{align*}
\item $\tilde\alpha_{11} \df (( (x_1 \cdot x_2) \cdot (x_1 \cdot x_3) ) \cdot x_1)$ which decomposes into 
\begin{align*} 
\decomp_{\tilde\alpha_{11}} \df & \cqhead{} (z_1 \logeq x_1 \cdot x_2) \land (z_2 \logeq x_1 \cdot x_3) \land (z_3 \logeq z_1 \cdot z_2) \land (\strucvar \logeq z_3 \cdot x_1), \\
\mathcal{H}_{11} \df & \{ \{ z_1, x_1\}, \{ z_2, x_1\}, \{ z_3, z_1, z_2\}, \{ z_3, x_1\} \}.
\end{align*}
\item $\tilde\alpha_{12} \df (((x_1 \cdot x_2) \cdot x_1) \cdot (x_3 \cdot x_1))$ which decomposes into
\begin{align*} 
\decomp_{\tilde\alpha_{12}} \df & \cqhead{} (z_1 \logeq x_1 \cdot x_2) \land (z_2 \logeq x_3 \cdot x_1) \land (z_3 \logeq z_1 \cdot x_1) \land (\strucvar \logeq z_3 \cdot z_2), \\
\mathcal{H}_{12} \df & \{ \{ z_2, x_1\}, \{ z_3, x_1\}, \{ z_3, z_2\} \}.
\end{align*}
\item $\tilde\alpha_{13} \df (((x_1 \cdot (x_2 \cdot x_1)) \cdot x_3) \cdot x_1)$ which decomposes into 
\begin{align*} 
\decomp_{\tilde\alpha_{13}} \df & \cqhead{} (z_1 \logeq x_2 \cdot x_1) \land (z_2 \logeq x_1 \cdot z_1) \land (z_3 \logeq z_2 \cdot x_3) \land (\strucvar \logeq z_3 \cdot x_1), \\
\mathcal{H}_{13} \df & \{ \{ z_2, x_1\}, \{ z_3, z_2\}, \{ z_3, x_1\} \}.
\end{align*}
\item $\tilde\alpha_{14} \df ((((x_1 \cdot x_2) \cdot x_1 ) \cdot x_3 ) \cdot x_1)$ which decomposes into
\begin{align*} 
\decomp_{\tilde\alpha_{14}} \df & \cqhead{} (z_1 \logeq x_1 \cdot x_2) \land (z_2 \logeq z_1 \cdot x_1) \land (z_3 \logeq z_2 \cdot x_3) \land (\strucvar \logeq z_3 \cdot x_1), \\
\mathcal{H}_{14} \df & \{ \{ z_2, x_1\}, \{ z_3, z_2\}, \{ z_3, x_1\} \}.
\end{align*}
\end{itemize}

For every $\tilde\alpha_i \in \brac(\alpha)$, we have that $|\mathcal{H}_i| > 1$. We can conclude that $\alpha$ is cyclic.

\proofsubparagraph{Part 2: There exists an acyclic pattern which has a cyclic bracketing.}
Let $\alpha \df x_1 x_2 x_3 x_1$, let $\tilde\alpha_1 \df ( (x_1 \cdot(x_2 \cdot x_3))\cdot x_1)$, and let $\tilde\alpha_2 \df ( (x_1 \cdot x_2) \cdot (x_3 \cdot x_1) )$. The decomposition of $\tilde\alpha_1$ is
$\decomp_{\tilde\alpha_1} \df \cqhead{} (z_1 \logeq x_2 \cdot x_3) \land  (z_2 \logeq x_1 \cdot z_1)\land (\strucvar \logeq z_2 \cdot x_1)$. Executing the GYO algorithm on $\decomp_{\tilde\alpha_1}$ shows it to be acyclic. 

The decomposition of $\tilde\alpha_2$ is $\decomp_{\tilde\alpha_2} \df \cqhead{} (\strucvar \logeq z_1 \cdot z_2) \land (z_1 \logeq x_1 \cdot x_2) \land (z_2 \logeq x_3 \cdot x_1)$. Performing the GYO algorithm on $\decomp_{\tilde\alpha_1}$ will show it to be cyclic. Therefore, we have proven that not all bracketings of an acyclic pattern is an acyclic bracketing.
\end{proof}

\section{Proof of~\autoref{lemma:cycledistance}}

We first prove a useful lemma that makes the actual proof of~\cref{lemma:cycledistance} more readable.

\begin{lemma}
\label{lemma:TreePath}
If $T \df (V, E)$ is an undirected tree where $V \df [n]$, then every node that lies on the path from $i$ to $j$, for $i,j \in [n]$ where $i < j$, must exist on a path from $k$ to $k+1$ for some $k \in \{i, i+1, \dots, j-1\}$.
\end{lemma}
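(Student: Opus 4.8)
The plan is to prove this directly via walks in trees rather than by induction. First I would fix notation: for each $k \in \{i, i+1, \dots, j-1\}$ let $p_k$ denote the (unique) path from $k$ to $k+1$ in $T$, and let $q$ denote the path from $i$ to $j$. Concatenating $p_i, p_{i+1}, \dots, p_{j-1}$ in this order yields a walk $W$ from $i$ to $j$ in $T$, since the endpoint $k+1$ of $p_k$ coincides with the startpoint of $p_{k+1}$. Consequently a node occurs somewhere in $W$ if and only if it occurs in some $p_k$, so it suffices to show that every node lying on $q$ occurs in $W$.

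The heart of the argument is the standard observation that in a tree every walk from $i$ to $j$ must traverse every edge of $q$. I would establish this by a cut argument: for an edge $e = \{u,v\}$ of $q$, deleting $e$ from $T$ leaves exactly two connected components (as $T$ is a tree), and because $e$ lies on $q$ the vertices $i$ and $j$ fall into different components; moreover $e$ is the unique edge of $T$ joining these two components, since a second such edge would close a cycle. Hence any walk from $i$ to $j$ must use $e$ at some step, and in particular every edge of $q$ occurs in $W$.

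To finish, take any node $y$ lying on $q$. Since $i < j$, the path $q$ has at least one edge, so $y$ is an endpoint of some edge $e$ of $q$; by the previous step $e$ occurs in $W$, hence $y$ occurs in $W$, and therefore $y$ occurs in $p_k$ for some $k \in \{i, i+1, \dots, j-1\}$ — that is, $y$ lies on the path from $k$ to $k+1$, which is exactly the claim.

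I do not expect a genuine obstacle; the only delicate point is the cut argument (that deleting $e$ yields exactly two components joined by no other edge, and that a walk from $i$ to $j$ must cross that cut), which is routine for trees. As an alternative I could run an induction on $j - i$: the base case $j-i=1$ is immediate, and for the step one invokes the three-point (median) property of trees to get $q \subseteq p_i \cup (\text{path from } i+1 \text{ to } j)$ and then applies the induction hypothesis to the pair $(i+1, j)$. I would nonetheless keep the walk-based proof as the primary argument, since it is fully self-contained and avoids citing the median property.
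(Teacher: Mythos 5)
Your proof is correct, and it starts from the same decomposition as the paper: you concatenate the paths $p_{k\rightarrow k+1}$ for $k \in \{i, i+1, \dots, j-1\}$ into a walk from $i$ to $j$ and then argue that this walk covers the $i$--$j$ path. Where the two arguments diverge is in the finishing step. The paper reduces the concatenated walk to a path by ``removing all edges which appear more than once'' and then invokes uniqueness of paths in a tree; this is short, but the reduction step (why deleting repeated edges from a walk yields a path at all) is left implicit. You instead prove, via an edge-cut argument, that any walk from $i$ to $j$ in a tree must traverse every edge of the unique $i$--$j$ path, and then observe that since $i<j$ this path has at least one edge, so every vertex on it is an endpoint of such an edge and hence occurs in some $p_{k\rightarrow k+1}$. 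Your version is slightly longer but fully self-contained and arguably tighter exactly where the paper is informal; both routes deliver the lemma with the same overall structure, so either could stand in the appendix.
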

\begin{proof}
Let $T \df (V,E)$ be an undirected tree where $V \df [n]$. For any $k,k' \in [n]$, let $p_{k \rightarrow k'}$ be the path from $k$ to $k'$ in $T$. The path $p_{i \rightarrow j}$ can be constructed by considering the sequence of edges $p_{i \rightarrow i+1} \cdot p_{i+1 \rightarrow i+2} \cdots p_{j-1 \rightarrow j}$, then removing all edges which appear more than once from this sequence. Since this defines a path from $i$ and $j$, and there can only be one path between any two nodes in a tree, the stated lemma holds.
\end{proof}

\cref{lemma:TreePath} can clearly be generalized to trees with any vertex set, $V$, by considering some bijection from the vertices of the tree to $[n]$ where $|V| = n$.

If $\eta \df (x \logeq y \cdot z)$ is an atom of the acyclic decomposition $\decomp_{\tilde\alpha} \in \conclog\noconstr$, then the right-hand side of $\eta$ can be reversed, \ie $\eta \df (x \logeq z \cdot y)$, and $\decomp_{\tilde\alpha}$ remains acyclic. Therefore, in the following proof, when the right-hand side of an atom is ambiguous, we can assume one without loss of generality.

\subsection*{Actual Proof of~\autoref{lemma:cycledistance}.}
\cycledistance*
\begin{proof}
Let $\decomp_{\tilde\alpha} \in \conclog\noconstr$ be a decomposition of $\tilde\alpha \in \brac$ and let $\mathcal{T} \df (\mathcal{V}, \mathcal{E}, <, \Gamma, \labelFunction, v_r)$ be the concatenation tree for $\decomp_{\tilde\alpha}$.

\proofsubparagraph{If-direction.} If $\decomp_{\tilde\alpha}$ is $x$-localized for all $x \in \var(\decomp_{\tilde\alpha})$, then we can construct a join tree for $\decomp_{\tilde\alpha}$ by augmenting the concatenation tree: First replace all non-leaf nodes $v \in \mathcal{V}$ with $\atom(v)$. Then remove all leaf nodes. By the definition of the concatenation tree, every atom of $\decomp_{\tilde\alpha}$ is a node in the supposed join tree. Also due to the definition of a concatenation tree, if $v$ is an $x$-parent, then $x$ occurs in $\atom(v)$. Because $\decomp_{\tilde\alpha}$ is $x$-localized for all $x \in \var(\decomp_{\tilde\alpha})$, it follows that if two nodes in the supposed join tree contain the variable $x$, then all nodes which exist on the path between these two nodes also contains an $x$. Hence, the resulting tree is a valid join tree for $\decomp_{\tilde\alpha}$.

\proofsubparagraph{Only if-direction.}
Let $v_0, v_n \in \mathcal{V}$ be two $x$-parents such that the distance between $v_0$ and $v_n$ in the concatenation tree $\mathcal{T}$ is $n > 1$. Let $v_1, v_2, \dots v_{n-1} \in \mathcal{V}$ be the nodes on the path between $v_0$ and $v_n$ in $\mathcal{T}$ where $v_i$ is not an $x$-parent for all $i \in [n-1]$, hence $\decomp_{\tilde\alpha}$ is not $x$-localized. For readability, we assume that $\labelFunction(v_i) = z_i$ for all $i \in \{ 0, 1, \dots, n \}$. Because the concatenation tree is pruned, $\atom(v_i) = \atom(v_j)$ if and only if $i = j$, for $i,j \in \{ 0, 1,\dots, n\}$. Furthermore, if $\labelFunction(v) = z_i$ where $v$ is a non-leaf node, then $v = v_i$ because two different non-leaf nodes cannot share a label.  \cref{fig:OnlyIf} illustrates a subtree of $\mathcal{T}$. The variable that labels each node is given next to the node in parentheses.

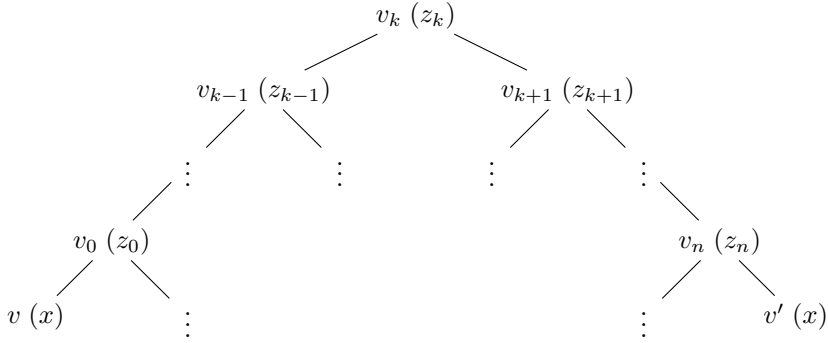
\begin{figure}
\begin{tikzpicture}[shorten >=1pt,->]
\tikzstyle{vertex}=[rectangle,fill=white!25,minimum size=12pt,inner sep=2pt]
\node[vertex] (1) at (0,0) {$v_k \; (z_k)$};
\node[vertex] (2) at (2,-1)   {$v_{k+1} \; (z_{k+1})$};
\node[vertex] (3) at (-2,-1)  {$v_{k-1} \; (z_{k-1})$};
\node[vertex] (4) at (3, -2) {$\vdots$};
\node[vertex] (5) at (-3, -2) {$\vdots$}; 
\node[vertex] (6) at (-4, -3) {$v_0 \; (z_0)$};
\node[vertex] (7) at (4, -3) {$v_n \; (z_n)$};
\node[vertex] (8) at (1, -2) {$\vdots$};
\node[vertex] (9) at (-1, -2) {$\vdots$};
\node[vertex] (10) at (-5,-4) {$v \; (x)$};
\node[vertex] (11) at (5,-4) {$v' \; (x)$};
\node[vertex] (12) at (-3,-4) {$\vdots$};
\node[vertex] (13) at (3,-4) {$\vdots$};

\path [-](1) edge node[left] {} (2);
\path [-](1) edge node[left] {} (3);
\path [-] (2) edge node[left] {} (4);
\path [-] (3) edge node[left] {} (5);
\path [-] (2) edge node[left] {} (8);
\path [-] (3) edge node[left] {} (9);
\path [-] (5) edge node [left] {} (6);
\path [-] (4) edge node [left] {} (7);
\path [-] (6) edge node [left] {} (10);
\path [-] (7) edge node [left] {} (11);
\path [-] (6) edge node [left] {} (12);
\path [-] (7) edge node [left] {} (13);
\end{tikzpicture}
\caption{\label{fig:OnlyIf} The concatenation tree, $\mathcal{T}$, we use for the only if-direction in the proof of~\cref{lemma:cycledistance}.}
\end{figure}

For sake of a contradiction, assume there exists a join tree $T \df (V, E)$ for $\decomp_{\tilde\alpha}$. Nodes in the join tree are the atoms of $\decomp_{\tilde\alpha}$ and therefore any element of $V$ can be uniquely determined by $\atom(v)$ where $v \in \mathcal{V}$ is a non-leaf node in the concatenation tree. We remind the reader that $\atom(v) = (z \logeq x \cdot x')$ if $v$ is labeled $z$ and the left and right children of $v$ are labeled $x$ and $x'$ respectively.  To improve readability, we use (variants of) $v$ for nodes of the concatenation tree, and we use $\atom(v)$ for nodes of the join tree where $v$ is some non-leaf node of the concatenation tree. 

We relax the factor notation to variables in $\var(\decomp_{\tilde\alpha})$. We write $z \sqsubset z'$, where $z,z' \in \var(\decomp_{\tilde\alpha})$, if there exists $v, v' \in \mathcal{V}$ where $v'$ which is an ancestor of $v$ in the concatenation tree, and $\labelFunction(v') = z'$ and $\labelFunction(v) = z$. We do this because the pattern that $z$ represents is a factor of the pattern $z'$ represents. 

Let $p_{i \rightarrow j}$ be the path in the join tree, $T$, from $\atom(v_i)$ to $\atom(v_j)$ for any $i,j \in \{ 0, 1, \dots, n\}$. The atom $\atom(v_1)$ cannot exist on the path $p_{0 \rightarrow n}$ because $\atom(v_0)$ and $\atom(v_n)$ contain the variable $x$, but $\atom(v_1)$ does not contain the variable $x$. We therefore consider some non-leaf node $v_1' \in \mathcal{V}$ of the concatenation tree such that $\atom(v_1')$ is the atom on the path $p_{0 \rightarrow n}$ which is closest (with regards to distance) to $\atom(v_1)$. See~\cref{fig:proofIdea} for a diagram to illustrate $\atom(v_1')$. We know that $\atom(v_1')$ has a variable $x$ since it lies on the path $p_{0 \rightarrow n}$. 

We now prove that $\atom(v_1')$ contains some variable $z_i$ where $i \in [n]$. Since $\atom(v_1')$ is the node closest to $\atom(v_1)$ on the path $p_{0 \rightarrow n}$, we have that $\atom(v_1')$ must also exist on the path $p_{1 \rightarrow n}$ (see~\cref{fig:proofIdea}). Therefore, because of~\cref{lemma:TreePath}, $\atom(v_1')$ must exist on some path $p_{j \rightarrow j+1}$ for some $j \in [n-1]$. Since $\atom(v_j)$ and $\atom(v_{j+1})$ share the variable $z_j$ or $z_{j+1}$ (depending on whether $v_j$ or $v_{j+1}$ is the parent) for all $j \in [n-1]$, it follows that $\atom(v_1')$ must contain the variable $z_i$ for some $i \in [n]$.

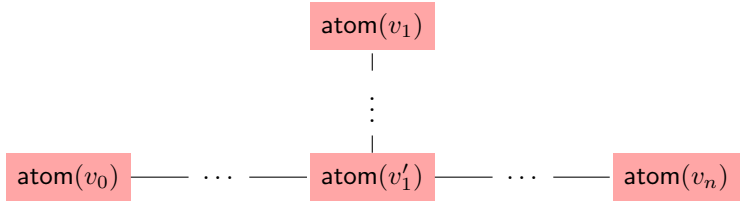
\begin{figure}
\begin{tikzpicture}[shorten >=1pt,->]
\tikzstyle{vertex}=[rectangle,fill=red!35,minimum size=12pt,inner sep=4pt]
\tikzstyle{vertex2}=[rectangle,fill=white!35,minimum size=12pt,inner sep=4pt]
\node[vertex] (1) at (0,0) {$\atom(v_0)$};
\node[vertex2] (2) at (2,0) {\dots};
\node[vertex] (3) at (4,0) {$\atom(v_1')$};
\node[vertex2] (4) at (4,1) {$\vdots$};
\node[vertex] (5) at (4,2) {$\atom(v_1)$};
\node[vertex2] (6) at (6,0) {\dots};
\node[vertex] (7) at (8,0) {$\atom(v_n)$};

\path [-](1) edge node[left] {} (2);
\path [-](2) edge node[left] {} (3);
\path [-](3) edge node[left] {} (4);
\path [-](4) edge node[left] {} (5);
\path [-](3) edge node[left] {} (6);
\path [-](6) edge node[left] {} (7);
\end{tikzpicture}\hspace{1cm}
\caption{\label{fig:proofIdea} A figure to illustrate paths $p_{0 \rightarrow 1}$ and $p_{0 \rightarrow n}$.}
\end{figure}

\proofsubparagraph{Case 1: $v_n$ is an ancestor of $v_0$ in $\mathcal{T}$.}

Since $v_n$ is an ancestor of $v_0$, we know that $v_i$ is an ancestor of $v_0$ (and hence $x \sqsubset z_0 \sqsubset z_i$) for all $i \in [n]$. Furthermore, it follows that $v_1$ is a $z_0$-parent and therefore $\atom(v_1) = (z_1 \logeq z_0 \cdot z')$ for some $z' \in \Xi$. Since $\atom(v_1')$ lies on the path $p_{0 \rightarrow 1}$, it follows that $\atom(v_1')$ contains the variable $z_0$. One of the variables of $\atom(v_1')$ must be the label of $v_1'$, we therefore consider all the possible labels for $v_1'$ and show a contradiction for each.

\begin{itemize}
\item $\labelFunction(v_1') = x$. This implies that, without loss of generality, $\atom(v_1') = (x \logeq z_0 \cdot z_i)$ and therefore $z_0 \sqsubset x$. We know that $x \sqsubset z_0$ since $v_0$ is an $x$-parent. Therefore, $z_0 \sqsubset z_0$ which we know cannot hold and hence $\labelFunction(v_1') = x$ cannot hold.
\item $\labelFunction(v_1') = z_i$ where $i \in [n]$. We split this case into two parts:
\begin{itemize}
\item $\labelFunction(v_1') = z_i$ where $i \in [n-1]$. This implies that $\atom(v_1') = \atom(v_i)$. The word equation $\atom(v_i)$ does not contain an $x$. Since we know that $\atom(v_1')$ contains the variable $x$, we can conclude that $\labelFunction(v_1') = z_i$ where $i \in [n-1]$ cannot hold.
\item $\labelFunction(v_1') = z_n$. This implies that $\atom(v_1') = \atom(v_n)$ and therefore, without loss of generality, $\atom(v_n) = (z_n \logeq x \cdot z_0)$. However, since we are in the case that $v_n$ is an ancestor of $v_0$, it follows that $v_n$ is a parent of $v_0$ (since a node labeled $x$ or $z_0$ cannot be an ancestor of $v_0$). Therefore $\decomp_{\tilde\alpha}$ is $x$-localized, and hence $\labelFunction(v_1') = z_n$ cannot hold.
\end{itemize}
\item $\labelFunction(v_1') = z_0$. This implies that $\atom(v_1') = \atom(v_0)$. Therefore, without loss of generality, $\atom(v_0) = (z_0 \logeq x \cdot z_i)$. We also know that $z_0 \sqsubset z_i$ since $v_i$ is an ancestor of $v_0$. Therefore, $z_0 \sqsubset z_i \sqsubset z_0$, which we know cannot hold. Thus, $\labelFunction(v_1') = z_0$ cannot hold.
\end{itemize}

We have proven that, for the case where $v_n$ is an ancestor of $v_0$ in the concatenation tree, there does not exist a valid label for $v_1'$. Hence we have reached a contradiction and therefore our assumption $\decomp_{\tilde\alpha}$ is acyclic cannot hold. 
  
\proofsubparagraph{Case 2: $v_0$ is an ancestor of $v_n$ in $\mathcal{T}$.}

The case where $v_0$ is an ancestor of $v_n$ is trivially identical to Case 1 by considering the closest node to $\atom(v_{n-1})$ on the path $p_{n \rightarrow 0}$. We have therefore omitted the proof. 
  
\proofsubparagraph{Case 3: $v_n$ is not an ancestor of $v_0$ in $\mathcal{T}$, and $v_0$ is not an ancestor of $v_n$ in $\mathcal{T}$.} 

Let $k \in [n-1]$ such that $v_k \in \mathcal{V}$ is the lowest common ancestor of $v_0$ and $v_n$ in $\mathcal{T}$. We remind the reader that $\atom(v_1')$ has the variables $x$, and $z_i$ for some $i \in [n]$ because $\atom(v_1')$ lies on the paths $p_{0 \rightarrow n}$ and $p_{1 \rightarrow n}$. We also have that $\atom(v_1) = (z_1 \logeq z_0 \cdot z')$ for some $z' \in \Xi$, because for this case, $v_1$ must be a parent of $v_0$, otherwise $v_0$ would be an ancestor of $v_n$. Therefore, since $\atom(v_0)$ and $\atom(v_1)$ share the variable $z_0$, we know that $\atom(v_1')$ also contains a $z_0$ -- because $\atom(v_1')$ lies on the path $p_{0 \rightarrow 1}$. We now consider each label for $v_1'$ and show a contradiction for each case.

\proofsubparagraph{Case 3.1: $\labelFunction(v_1') = x$.} 
Without loss of generality, $\atom(v_1') = (x \logeq z_0 \cdot z_i)$ which implies that $x \sqsubset z_0$ and $z_0 \sqsubset x$. This is a contradiction and hence $\labelFunction(v_1') = x$ cannot hold.

\proofsubparagraph{Case 3.2: $\labelFunction(v_1') = z_i$ where $i \in [n-1]$.}
This implies $\atom(v_1') = \atom(v_i)$, but $\atom(v_i)$ cannot have the variable $x$. This is a contradiction and hence $\labelFunction(v_1') =  z_i$ where $i \in [n-1]$ cannot hold.

\proofsubparagraph{Case 3.3: $\labelFunction(v_1') = z_n$.} 
This implies that $\atom(v_1') = \atom(v_n)$ and therefore without loss of generality, we know that $\atom(v_n) = (z_n \logeq z_0 \cdot x)$, because $\atom(v_1')$ must contain the variable $z_0$ and $x$. For this case, we first prove that $k \geq 2$ where $v_k$ is the lowest common ancestor of $v_0$ and $v_n$. For sake of contradiction, assume $k=1$. It follows that the distance from $v_k$ to $v_0$ is one and the distance from $v_k$ to $v_n$ is greater than or equal to one. Hence, the distance from $v_k$ to the children of $v_n$ is greater than or equal to two. Since $v_n$ is a $z_0$ parent, and the children of $v_n$ are further from the root than $v_0$, we know that $v_0$ must be redundant. If this is the case, $v_0$ would have no children due to the pruning procedure used when defining a concatenation tree. Therefore, $v_0$ would not be an $x$-parent which we know cannot hold (we have chosen $v_0$ \emph{because} it is an $x$-parent). Therefore, $k=1$ cannot hold and we can conclude $k \geq 2$. 

We now consider $\atom(v_k)$. We know that $\atom(v_k) = (z_k \logeq z_{k-1} \cdot z_{k+1})$ and since we have proven that $k \geq 2$, it follows that $z_{k-1} \neq z_0$. Since both $\atom(v_1)$ and $\atom(v_n)$ contain the variable $z_0$, we know that $\atom(v_k)$ cannot exist on the path $p_{1 \rightarrow n}$. Hence, we consider some non-leaf node $v_k' \in \mathcal{V}$ such that $\atom(v_k')$ lies on the path $p_{1 \rightarrow n}$ and $\atom(v_k')$ is the node on $p_{1 \rightarrow n}$ which is closest node (with regards to distance) to $\atom(v_k)$. We illustrate a subtree of such a join tree in~\cref{fig:vkProofIdea}. 

We now prove that $\atom(v_k')$ must contain some variable $z_j \in \Xi$, where $j \in [k-1]$. We know that $\atom(v_k')$ lies on the path $p_{1 \rightarrow k}$, therefore, because of~\cref{lemma:TreePath}, $\atom(v_k')$ must lies on the path $p_{i \rightarrow i+1}$ for some $i \in [k-1]$. Since each atom which lies on the path $p_{i \rightarrow i+1}$ must contain the variable $z_i$, it follows that $\atom(v_k')$ contains the variable $z_j$ for some $j \in [k-1]$.~\cref{fig:OnlyIf} illustrates why all nodes on the path $p_{i \rightarrow i+1}$ for $i \in [k-1]$ must contain the variable $z_i$ (because $v_{i+1}$ is a parent of $v_i$ for $i \in [k-1]$).

We now show that $\atom(v_k')$ must also contain the variable $z_l \in \Xi$ for some $l \in \{k+1, \dots, n\}$. We know that $\atom(v_k')$ lies on the path $p_{k \rightarrow n}$, therefore, because of~\cref{lemma:TreePath}, $\atom(v_k')$ must lies on the path $p_{i \rightarrow i+1}$ for some $i \in \{k, \dots, n-1 \}$. Since each atom which lies on the path $p_{i \rightarrow i+1}$ must contain the variable $z_{i+1}$ for $i \in \{k, \dots, n-1\}$, it follows that $\atom(v_k')$ contains the variable $z_l$ for some $l \in \{k+1, \dots, n\}$. Next, we consider the possible labels of $v_k'$.

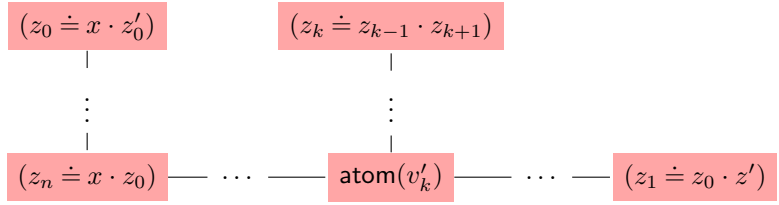
\begin{figure}
\begin{tikzpicture}[shorten >=1pt,->]
\tikzstyle{vertex}=[rectangle,fill=red!35,minimum size=12pt,inner sep=4pt]
\tikzstyle{vertex2}=[rectangle,fill=white!35,minimum size=12pt,inner sep=4pt]
\node[vertex] (1) at (4,2) {$(z_0 \logeq x \cdot z_0')$};
\node[vertex2] (2) at (4,1) {$\vdots$};
\node[vertex] (3) at (4,0) {$(z_n \logeq x \cdot z_0)$};
\node[vertex2] (4) at (6,0) {\dots};
\node[vertex] (5) at (8,0) {$\atom(v_k')$};
\node[vertex2] (6) at (10,0) {\dots};
\node[vertex] (7) at (12,0) {$(z_1 \logeq z_0 \cdot z')$};
\node[vertex2] (8) at (8,1) {$\vdots$};
\node[vertex] (9) at (8,2) {$(z_k \logeq z_{k-1} \cdot z_{k+1})$};

\path [-](1) edge node[left] {} (2);
\path [-](2) edge node[left] {} (3);
\path [-](3) edge node[left] {} (4);
\path [-](4) edge node[left] {} (5);
\path [-](5) edge node[left] {} (6);
\path [-](6) edge node[left] {} (7);
\path [-](5) edge node[left] {} (8);
\path [-](8) edge node[left] {} (9);
\end{tikzpicture}\hspace{1cm}
\caption{\label{fig:vkProofIdea} A subtree of a join tree with nodes $\atom(v_0)$, $\atom(v_n)$, $\atom(v_k)$, $\atom(v_k')$ and $\atom(v_1)$. This figure is used to illustrate Case 3.3.}
\end{figure}

\begin{itemize}
\item $\labelFunction(v_k') = z_0$. This implies $\atom(v_k') = \atom(v_0)$. We can therefore state, without loss of generality, that $\atom(v_0) = (z_0 \logeq z_j \cdot z_l)$. However, if this is the case then $x$ is not a variable of $\atom(v_0)$. Hence, $\labelFunction(v_k')=z_0$ cannot hold.
\item $\labelFunction(v_k') = z_j$ where $j \in [k-1]$. This implies that, without loss of generality, $\atom(v_k') = (z_j \logeq z_0 \cdot z_l)$. If this is the case then $j=1$ must hold, since this is the only value for $j$ such that $(z_j \logeq z_0 \cdot z_l)$ can hold. We can therefore say that $\atom(v_k') = \atom(v_1)$. For all nodes $v \in \mathcal{V}$, let $D(v)$ be the distance from the root of $\mathcal{T}$ to $v$. Since $v_l$ cannot be a redundant node, it follows that $D(v_1) + 1 \leq D(v_l)$. This implies that $D(v_k) + k - 1 + 1 \leq D(v_k) + l - k$ and hence, $k \leq \frac{l}{2}$. Because $\atom(v_n) = (z_n \logeq z_0 \cdot x)$ and $v_0$ is not redundant, we can also say that $D(v_n) + 1 \leq D(v_0)$ and hence, $D(v_k) + n - k + 1 \leq D(v_k) + k$ and therefore $n+1 \leq 2k$. Consequently, $\frac{n+1}{2} \leq k \leq \frac{l}{2}$ and hence, $n+1 \leq l$. This is a contradiction since $l \in \{k+1, \dots, n\}$. This proves that $\labelFunction(v_k) = z_j$ cannot hold.
\item $\labelFunction(v_k') = z_l$ for $l \in \{k+1, \dots, n \}$. We split this case into two parts:
\begin{itemize}
\item $\labelFunction(v_k') = z_n$. This implies that $\atom(v_k') = \atom(v_n)$. We remind the reader that $\atom(v_n) = (z_n \logeq z_0 \cdot x)$. Therefore, $\atom(v_n)$ does not contain the variable $z_j$ for $j \in [k-1]$, yet we know that $\atom(v_k')$ does contain the variable $z_j$. Consequently, $\labelFunction(v_k') = z_n$ cannot hold.
\item $\labelFunction(v_k') = z_l$ where $l \in \{k+1, \dots, n-1\}$. This implies that, without loss of generality, $\atom(v_k') = (z_l \logeq z_0 \cdot z_j)$. This cannot hold since if $k < l < n$, then $\atom(v_l)$ contains the variable $z_{l+1}$. However, $\atom(v_k')$ does not contain the variable $z_{l+1}$. 
\end{itemize}
\end{itemize}

Consequently, we have proven that if $\tau(v_1') = z_n$, then there does not exist a valid label for the non-leaf node $v_k'$, where $\atom(v_k')$ is the closest node to $\atom(v_k)$ on the path $p_{1 \rightarrow n}$. Therefore $\labelFunction(v_1') = z_n$ cannot hold.

\proofsubparagraph{Case 3.4: $\labelFunction(v_1') = z_0$.} 
This implies that $\atom(v_1') = \atom(v_0)$. Without loss of generality, $\atom(v_0) = (z_0 \logeq x \cdot z_i)$. We can see that $k < i \leq n$, since if $1 \leq i \leq k$ then $z_i \sqsubset z_0 \sqsubset z_i$ which cannot hold. We now claim that $n>2$ must hold. For sake of contradiction, assume $n=2$. Since we know that $v_k$ is the lowest common ancestor of $v_0$ and $v_n$, it follows that $k=1$. It also follows that $i=n$ since $k < i \leq n$. The distance from $v_k$ to $v_n$ is one and the distance from $v_k$ to the children of $v_0$ is two. Since $v_0$ has a child with the label $z_n$, it follows that $v_n$ is a redundant node, and hence it is not an $x$-parent. We know this cannot hold and hence $n=2$ cannot hold. Therefore, we have proven that $n>2$.

We now consider $\atom(v_{n-1}')$ which is the atom of the path $p_{n \rightarrow 0}$ which is closest to $\atom(v_{n-1})$. Since $n>2$, it follows that $v_{n-1} \neq v_1$. The nodes $v_0$ and $v_n$ are arbitrary and therefore $v_0$ and $v_n$ can be thought of as being symmetric. Thus, it must hold that $\atom(v_{n-1}') = (z_n \logeq x \cdot z_j)$ where $0 \leq j < k$ (in the same way that $\atom(v_1') = (z_0 \logeq x \cdot z_i)$ where $k<i \leq n$). We therefore have that $z_0 \sqsubset z_j$ and $z_n \sqsubset z_i$ since $0 < j < k$ and $k < i < n$. Here lies our contradiction, since $z_i \sqsubset z_0 \sqsubset z_j$ and $z_j \sqsubset z_n \sqsubset z_i$ cannot hold simultaneously.

Since we have considered all cases for $\atom(v_1')$ and have shown a contradiction for each, we know that if $\decomp_{\tilde\alpha}$ is not $x$-localized for some $x \in \var(\decomp_{\tilde\alpha})$, then $\decomp_{\tilde\alpha}$ is cyclic. 
\end{proof}

\section{Proof of~\autoref{polytime}}
\polytime*
\begin{proof}
Let $\alpha \df \alpha_1 \cdot \alpha_2 \cdots \alpha_n$ where $\alpha_i \in \Xi$ for $i \in [n]$. 
For any $i,j \in \mathbb{N}$ such that $1 \leq i \leq j \leq n$, we use $\alpha[i,j]$ to denote $\alpha_i \cdot \alpha_{i+1} \cdots \alpha_j$. 
We now give an algorithm to determine whether $\alpha$ is acyclic. 
This algorithm is essentially a bottom-up implementation of~\cref{lemma:cycledistance}. 
\cref{algorithm:acycPat} is the main algorithm and~\cref{algorithm:IsAcyclic} is a ``helper procedure''.

\begin{algorithm}
\SetAlgoLined
\SetKwInOut{Input}{Input}
\SetKwInOut{Output}{Output}
\Input{$\alpha \in \Xi^+$, where $|\alpha| = n$.}
\Output{True if $\alpha$ is acyclic, and False otherwise.}

$V \leftarrow \{ (i,i), (i+1,i+1), (i,i+1) \mid i \in [n-1] \} $;

$E' \leftarrow \{ ((i,i+1),(i,i),(i+1,i+1)) \mathrel{|} i \in [n-1] \}$;

$E \leftarrow \emptyset$;
 
\While{$E' \neq E$}{
	$E \leftarrow E'$;
 	
 	\For{$i,k \in [n]$ where $i < k$}{
 		\For{$j \in \{i, i+1, \dots, k-1\}$ where $((i,k),(i,j),(j+1,k)) \notin E'$} {\
 			\If{$(i,j),(j+1, k) \in V$ and $\mathsf{IsAcyclic}(i,j,k, \alpha, E')$}{
	        		Add $((i,k),(i,j),(j+1,k))$ to $E'$; 
	        
	        		Add $(i,k)$ to $V$;
    			}	
 		}
 	}

}

Return $\mathsf{True}$ if $(1,n) \in V$, and $\mathsf{False}$ otherwise;

\caption{Acyclic Pattern Algorithm.\label{algorithm:acycPat}}
\end{algorithm}

\begin{algorithm}
\SetAlgoLined
\SetKwInOut{Input}{Input}
\SetKwInOut{Output}{Output}
\Input{$i,j,k \in [|\alpha|]$, $\alpha \in \Xi^+$, $E'$}
\Output{True if $\alpha[i,j]$ is acyclic, and False otherwise}

\uIf{$\alpha[i,j] = \alpha[j+1,k]$}{
	Return $\mathsf{True}$;
}
 
\uElseIf{$\mathsf{var}(\alpha[i,j]) \mathrel{\cap} \mathsf{var}(\alpha[j+1,k])= \emptyset$}{
	Return $\mathsf{True}$;
}
 
\uElseIf{ $((i,j),(i,x),(x+1,j)) \in E'$ such that $\alpha[j+1,k] = \alpha[i,x]$}{
	Return $\mathsf{True}$;
} 

\uElseIf{ $((i,j),(i,x),(x+1,j)) \in E'$ such that $\alpha[j+1,k] = \alpha[x+1,j]$}{
	Return $\mathsf{True}$;
}

\uElseIf{ $((j+1,k),(j+1,x),(x+1,k)) \in E'$ such that $\alpha[i,j] = \alpha[j+1,x]$}{
	Return $\mathsf{True}$;
} 

\uElseIf{$((j+1,k),(j+1,x),(x+1,k)) \in E'$ such that $\alpha[i,j] = \alpha[x+1,k]$}{
	Return $\mathsf{True}$;
} 

\Else{  
	Return $\mathsf{False}$ \;
}
\caption{$\mathsf{IsAcyclic}$.\label{algorithm:IsAcyclic}}
\end{algorithm}

\proofsubparagraph{Correctness.} We first give a high-level overview. 
The algorithm works using a bottom-up approach, continuously adding larger acyclic subpatterns of $\alpha$ to the set $V$. 
Each subpattern is stored in $V$ as two indices for the start and end positions of the subpattern. 
To ensure that the subpatterns we are adding are acyclic, we also store an edge relation, $E$. 
The subroutine $\mathsf{IsAcyclic}$ is given two acyclic subpatterns ($\alpha[i, j]$ and $\alpha[j+1, k]$), and uses $E$ to determine whether there exists $\tilde\beta \in \brac(\alpha[i, j] \cdot \alpha[j+1, k])$ such that $\tilde\beta$ is acyclic. 
That is, the decomposition of $\tilde\beta$ is $x$-localized for all variables, see~\cref{lemma:cycledistance}. 
$\mathsf{IsAcyclic}$ is given in~\cref{algorithm:IsAcyclic} and terminates when $E$ has reached a fixed-point. 

First, assume that $\mathsf{IsAcyclic}$ returns true (given $i,j,k, \alpha$, and $E$) if and only if there exists $\tilde\alpha_1 \in \brac(\alpha[i,j])$ and $\tilde\alpha_2 \in \brac(\alpha[j+1, k])$ such that $(\tilde\alpha_1 \cdot \tilde\alpha_2)$ is acyclic. 
We now consider the while loop given on line 4. 
This loop continuously adds $(i,k)$ to $V$ if and only if there exists $(i,j), (j+1,k) \in V$ such that there exists $\tilde\alpha_1 \in \brac(\alpha[i,j])$ and $\tilde\alpha_2 \in \brac(\alpha[j+1, k])$ where $(\tilde\alpha_1 \cdot \tilde\alpha_2)$ is acyclic. 
We also add the edge $((i,k),(i,j),(j+1,k))$ to $E$ to denote that $(i,j)$ and $(j+1,k)$ are the left and right children of $(i,k)$ respectively.

This while loop terminates when $E$ reaches a fixed-point (hence, no more acyclic subpatterns of the input pattern can be derived from $E$). 
Then, either $(1, n) \in V$ and therefore $\alpha$ is acyclic, or $(1,n) \notin V$ and $\alpha$ is cyclic. 
Therefore, as long as the subroutine $\mathsf{IsAcyclic}$ is correct, our algorithm is correct.

\begin{figure}
\begin{tikzpicture}[shorten >=1pt,->]
\tikzstyle{vertex}=[rectangle,fill=white!25,minimum size=12pt,inner sep=2pt]
\node[vertex] (1) at (0,0) {$\alpha[i,k] \; (z')$};
\node[vertex] (2) at (2,-1)   {$\alpha[j+1,k] \; (x_1)$};
\node[vertex] (3) at (-2,-1)  {$\alpha[i,j] \; (z)$};
\node[vertex] (4) at (-4,-2) {$\alpha[i,x] \; (x_1)$};
\node[vertex] (5) at (0,-2) {$\alpha[x+1, j] \; (x_2)$};
\node[vertex] (6) at (-4,-2.5) {$\vdots$};
\node[vertex] (7) at (0,-2.5) {$\vdots$};

\path [-](1) edge node[left] {} (2);
\path [-](1) edge node[left] {} (3);
\path [-](3) edge node[left] {} (4);
\path [-](3) edge node[left] {} (5);
\end{tikzpicture}
\caption{\label{fig:AlgoconcatTree} Illustrating Case 3 for the correctness of the $\mathsf{IsAcyclic}$ subroutine.}
\end{figure}

\begin{figure}
\begin{tikzpicture}[shorten >=1pt,->]
\tikzstyle{vertex}=[rectangle,fill=white!25,minimum size=12pt,inner sep=2pt]
\node[vertex] (1) at (0,0) {$v_1 \; (z)$};
\node[vertex] (2) at (-2,-1)   {$v_2 \; (z_1)$};
\node[vertex] (3) at (2,-1)  {$v_3 \; (z_2)$};

\node[vertex] (4) at (-3, -2) {$v_4 \; (x_1)$};
\node[vertex] (5) at (-1, -2) {$v_5 \; (y_1)$};

\node[vertex] (6) at (1, -2) {$v_6 \; (x_2)$};
\node[vertex] (7) at (3, -2) {$v_7 \; (y_2)$};

\node[vertex] (8) at (-3,-2.5) {$\vdots$};
\node[vertex] (9) at (-1,-2.5) {$\vdots$};
\node[vertex] (10) at (1,-2.5) {$\vdots$};
\node[vertex] (11) at (3,-2.5) {$\vdots$};

\path [-](1) edge node[left] {} (2);
\path [-](1) edge node[left] {} (3);
\path [-](2) edge node[left] {} (4);
\path [-](2) edge node[left] {} (5);
\path [-](3) edge node[left] {} (6);
\path [-](3) edge node[left] {} (7);
\end{tikzpicture}
\caption{\label{fig:AlgoconcatTree2} Illustrating the only-if direction for the correctness of the $\mathsf{IsAcyclic}$ subroutine. Note that $z_1 \neq z_2$, $z_2 \notin \{x_1, y_1\}$, and $z_1 \notin \{ x_2, y_2 \}$.}
\end{figure}
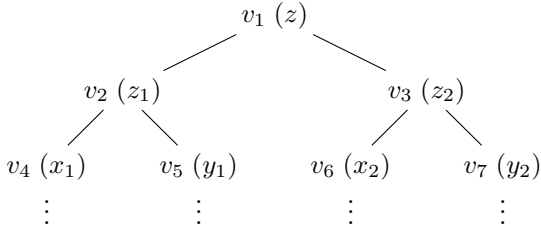

We now show that the subroutine $\mathsf{IsAcyclic}$ is correct. 
Assume $\mathsf{IsAcyclic}$ is passed $i,j,k$ (where $1 \leq i \leq j \leq k \leq n$), the pattern $\alpha \in \Xi^+$, and the edge relation $E'$. 
Since $\mathsf{IsAcyclic}$ has been passed $i$, $j$, and $k$, it follows that $(i,j), (j+1,k) \in V$ and therefore there exists $\tilde\alpha_1 \in \brac(\alpha[i,j])$ and $\tilde\alpha_2 \in \brac(\alpha[j+1,k])$ such that $\tilde\alpha_1$ and $\tilde\alpha_2$ are acyclic. 
We now prove that $\tilde\alpha \in \brac(\alpha[i,j] \cdot \alpha[j+1, k])$ is acyclic if and only if one of the following cases hold:
\begin{description}
\item[Case 1.] $\alpha[i,j] = \alpha[j+1,k]$. Since there exists an acyclic decomposition $\decomp \in\conclog$ for some $\tilde\alpha \in \brac(\alpha[i,j])$, it follows immediately from~\cref{lemma:cycledistance} that $(\tilde\alpha \cdot \tilde\alpha)$ is acyclic. Hence, $\alpha[i,k]$ is acyclic and we can add $(i,k)$ to $\mathcal{V}$.
\item[Case 2.] $\var(\alpha[i,j]) \intersect \var(\alpha[j+1,k]) = \emptyset$. Because $\alpha[i,j]$ and $\alpha[j+1,k]$ are acyclic, there exists acyclic decompositions $\decomp_1, \decomp_2 \in \conclog$ where $\decomp_1$ is the decomposition for some bracketing of $\alpha[i,j]$, $\decomp_2$ is the decomposition of some bracketing of $\alpha[j+1,k]$, and $\var(\decomp_1) \intersect \var(\decomp_2) = \emptyset$. Therefore, $\decomp \df \decomp_1 \land \decomp_2 \land (z \logeq z' \cdot z'')$ is an acyclic decomposition for some $\tilde\alpha \in \brac(\alpha[i,k])$, where $z \in \Xi$ is a new variable, and $z'$ and $z''$ are the root variables for $\decomp_1$ and $\decomp_2$ respectively. It follows from~\cref{lemma:cycledistance} that $\decomp$ is acyclic.
\item[Case 3.] $\tilde\alpha \df ((\tilde\alpha_2 \cdot \tilde\beta) \cdot \tilde\alpha_2)$ for some $\tilde \beta \in \brac$. This implies that $\tilde\alpha_1 \df (\tilde\alpha_2 \cdot \tilde\beta)$. Let $\decomp_1 \in \conclog$ be an acyclic decomposition of $\tilde\alpha_1$. Let $(z \logeq x_1 \cdot x_2)$ be the root atom of $\decomp_1$, where $x_1$ represents the bracketing $\tilde\alpha_2$. Therefore, the decomposition of $\tilde\alpha$ can be obtained from adding the atom $(z' \logeq z \cdot x_1)$ to $\decomp$ where $z' \in \Xi$ is a new variable. We illustrate a concatenation tree for this case in~\cref{fig:AlgoconcatTree} where nodes of the concatenation tree are denoted by factors of $\alpha$. Assuming $\alpha[i,x]$ and $\alpha[x+1,j]$ are acyclic, it is clear that $\decomp$ is $x$-localized for all $x \in \var(\decomp)$. Hence, $(\tilde\alpha_1 \cdot \tilde\alpha_2)$ is acyclic. 
\item[Case 4.] $\tilde\alpha \df ((\tilde\beta \cdot \tilde\alpha_2) \cdot \tilde\alpha_2)$ for some $\tilde \beta \in \brac$. Follows analogously to Case 3 because it is a simple permutation of the bracketings.
\item[Case 5.] $\tilde\alpha \df (\tilde\alpha_1 \cdot (\tilde\alpha_1 \cdot \tilde\beta))$ for some $\tilde \beta \in \brac$. Follows analogously to Case 3 because it is a simple permutation of the bracketings.
\item[Case 6.] $\tilde\alpha \df (\tilde\alpha_1 \cdot (\tilde\beta \cdot \tilde\alpha_1))$ for some $\tilde \beta \in \brac$. Follows analogously to Case 3 because it is a simple permutation of the bracketings.
\end{description}

Each condition has a corresponding if-condition in the subroutine $\mathsf{IsAcyclic}$. 
Therefore, we know that if $\mathsf{IsAcyclic}$ returns true, given $i,j,k$ (where $1 \leq i \leq j \leq k \leq n$), the pattern $\alpha \in \Xi^+$, and the relation $E'$, then $\alpha[i,k]$ is acyclic.

Now assume non of the above conditions hold. 
Let $\decomp_1$ be the acyclic decomposition of $\tilde\alpha_1$ and let $\decomp_2$ be the acyclic decomposition of $\tilde\alpha_2$. 
Let $(z_1 \logeq x_1 \cdot y_1)$ be the root atom of $\decomp_1$, and let $(z_2 \logeq x_2 \cdot y_2)$ be the root atom of $\decomp_2$. 
The decomposition of $\tilde\alpha \df (\tilde\alpha_1 \cdot \tilde\alpha_2)$ would be $\decomp \df \decomp_1 \land \decomp_2 \land (z \logeq z_1 \cdot z_2)$, where $z \in \Xi$ is a new variable. 
We illustrate part of the concatenation tree for $\decomp$ in~\cref{fig:AlgoconcatTree2}. 
Due to the fact that $\tilde\alpha_1 \neq \tilde\alpha_2$ it follows that $z_1 \neq z_2$. 
Furthermore, because Cases 3 to 6 do not hold, we know that $z_1 \notin \{x_2,y_2\}$ and $z_2 \notin \{x_1,y_1 \}$. 
However, since $\var(\decomp_1) \intersect \var(\decomp_2) \neq \emptyset$ it follows that there exists some $x \in \var(\decomp_1) \intersect \var(\decomp_2)$ such that $\decomp$ is not $x$-localized. 
Hence $\decomp$ is cyclic. 
Notice that $x_1$ (or $y_1$) \emph{could} be in the set $\{x_2, y_2 \}$. 
But if this is the case, then $z_1$ and $z_2$ are both $x_1$-parents (or $y_1$-parents) and $z$ is not an $x_1$-parent ($y_1$-parent), hence $\decomp$ is not $x_1$-localized.

\proofsubparagraph{Deriving the concatenation tree.} 
If $(1,n) \in \mathcal{V}$, then we know that $\alpha$ is acyclic. 
We can then use $V$ and $E$ to derive a concatenation tree, $\mathcal{T}\df (\mathcal{V}, \mathcal{E}, <, \Gamma, \labelFunction, v_r)$, for some acyclic decomposition $\decomp_{\tilde\alpha} \in \conclog$ of $\tilde\alpha \in \brac(\alpha)$. 
This procedure is given in the following construction: 
\begin{enumerate}
\item Let $v_r = (1,n)$. 
\item While there exists some leaf node of $\mathcal{T}$ of the form $(i,k)$ where $i \neq k$, do:
\begin{enumerate}
\item Find some $j \in \{ i, i+1, \dots, k\}$ such that one of the following conditions holds:
\begin{enumerate}
\item $\alpha[i,j] = \alpha[j+1,k]$, or $\var(\alpha[i,j]) \intersect \var(\alpha[j+1,k]) = \emptyset$, then 
\begin{enumerate}
\item Add $\{(i,k), (i,j) \}$ and $\{(i,j), (j+1, k) \}$ to $\mathcal{E}$, and let $(i,j) < (j+1,k)$.
\end{enumerate}
\item There exists $x$ such that $((i,j),(i,x),(x+1,j)) \in E$ and, $\alpha[i,x]= \alpha[j+1,k]$ or $\alpha[x+1,j] = \alpha[j+1,k]$, then:
\begin{enumerate}
\item Add $\{(i,k), (i,j) \}$ and $\{(i,j), (j+1, k) \}$ to $\mathcal{E}$, and let $(i,j) < (j+1,k)$. 
\item Add $\{(i,j), (i,x) \}$ and $\{(i,j), (x+1, j) \}$ to $\mathcal{E}$, and let $(i,x) < (x+1,j)$.
\end{enumerate}
\item There exists $x$ such that $((j+1,k),(j+1,x),(x+1,k)) \in E$, and $\alpha[i,j] = \alpha[j+1,x]$ or $\alpha[i,j] = \alpha[x+1,k]$, then:
\begin{enumerate}
\item Add $\{(i,k), (i,j) \}$ and $\{(i,j), (j+1, k) \}$ to $\mathcal{E}$, and let $(i,j) < (j+1,k)$. 
\item Add $\{(j+1,k), (j+1,x) \}$ and $\{(j+1,k), (x+1, j) \}$ to $\mathcal{E}$, and let $(j+1,x) < (x+1,j)$.
\end{enumerate}
\end{enumerate} 
\end{enumerate}
\end{enumerate}

During the construction, we assume that $\mathcal{V}$ is always updated to be the set of nodes that the edge relation $\mathcal{E}$ uses. 
For intuition, we are essentially taking the relation $E$, which have been computed by~\cref{algorithm:acycPat}, and choosing one binary tree from this set of edges. 
Some care is needed to ensure that the binary tree we choose will result in a concatenation tree for an acyclic decomposition. 
This is why we cannot choose any edge from $E$ recursively.

Once the tree has been computed, we mark each node with a variable, such that: $(1,n)$ is marked with $\strucvar$, $(i,i)$ is marked with $x$ where $\alpha[i,i] = x$, and each $(i,j)$, where $i \neq j$ and either $i \neq 1$ or $j \neq n$, is marked with $x_\beta$ where $\beta = \alpha[i,j]$. 
We then prune the tree, as defined in~\cref{defn:concatenationTree}. 
The resulting tree is the concatenation tree for some decomposition of some acyclic $\tilde\alpha \in \brac(\alpha)$.

\proofsubparagraph{Complexity.} 
We first consider the subroutine $\mathsf{IsAcyclic}$. 
The first two if-statements (lines 11 and 13), run in $\bigO(n)$ time. 
The if-statements on lines 15, 17, 19, and 21 run in time $\bigO(n)$ due to the fact that there are $\bigO(n)$ such values for $x$, and it times $\bigO(1)$ time to check whether the two factors of $\alpha$ are equal (after linear time preprocessing, see~\cref{compModel}). 
Therefore, $\mathsf{IsAcyclic}$ runs in time $\bigO(n)$.

The set $\mathcal{V}$ holds substrings of $\alpha$, and therefore $|\mathcal{E}| \leq n^3$, since each $(i,k) \in \mathcal{V}$ has $\bigO(n)$ outgoing edges. 
It follows that the while loop from line 4 to line 14 is iterated $\bigO(n^3)$ times. 
The for loop on line 6 is iterated $\bigO(n^2)$ times. 
The for loop on line 7 is clearly iterated $\bigO(n)$ times. 
Therefore, the whole algorithm runs in time $\bigO(n^7)$.

We now consider the complexity of deriving the concatenation tree. 
There are $\bigO(n)$ nodes in a concatenation tree, and given a node $(i,j)$, where $i \neq j$, finding an edge $((i,k), (i,j), (j+1,j)) \in E$ takes at most $\bigO(n^3)$ time, since there are at most $n$ such values for $j$ and making sure the relative conditions hold (in the above construction) takes $\bigO(n^2)$ time, as we have previously discussed when discussing the time complexity for~\cref{algorithm:acycPat}.
Therefore, deriving the concatenation tree, without pruning, takes $\bigO(n^4)$ time. 
Finally, pruning the concatenation tree takes $\bigO(n^2)$ time, since we consider each variable that labels a node, traverse the tree to find the $\ll$-maximum (see~\cref{defn:concatenationTree}), and prune accordingly. 
Therefore, we can derive the concatenation tree from $V$ and $E$ in time $\bigO(n^4)$.
\end{proof}

\section{Proof of~\autoref{lemma:normalization}}

Before proving~\cref{lemma:normalization}, we restate the definition of normalized $\cpfcreg$s. We call an $\cpfc$ with body $\bigwedge_{i=1}^n (x_i \logeq \alpha_i)$ \emph{normalized} if for all $i,j \in [n]$, the following conditions hold:
\begin{description}
\item[Condition 1.] $\alpha_i \in \Xi^+$,
\item[Condition 2.] $x_i \notin \var(\alpha_i)$ and $\strucvar \notin \var(\alpha_i)$, and 
\item[Condition 3.] $\alpha_i = \alpha_j$ if and only if $i = j$.
\end{description}

If an $\cpfcreg$ has body $\bigwedge_{i=1}^n (x_i \logeq \alpha_i) \land \bigwedge_{j=1}^m (y_j \regconst \gamma)$, then it is \emph{normalized} if the subformula $\bigwedge_{i=1}^n (x_i \logeq \alpha_i)$ is normalized. 

\normalization*
\begin{proof}
Let $\varphi \df \cqhead{\vec{x}} \bigwedge_{i=1}^n \eta_i$ be an $\cpfc\noconstr$. We give a way to construct a normalized formula $\varphi' \in \cpfcreg$ where $\varphi'$ is equivalent to $\varphi$. 

\proofsubparagraph{Condition 1.} For all $i \in [n]$ assume that $\eta_i = (x \logeq \alpha)$ where $\alpha \in (\Sigma \union \Xi)^*$. We now consider the unique factorization for $\alpha \df \beta_1 \cdot \beta_2 \cdots \beta_k$ for some $k \in \mathbb{N}$, where for all $\beta_j$ where $j \in [k]$, either $\beta_j \in \Xi^+$ or $\beta_j \in \Sigma^+$. Furthermore, if $\beta_{j} \in \Xi^+$ then $\beta_{j+1} \in \Sigma^+$, and if $\beta_j \in \Sigma^+$ then $\beta_{j+1} \in \Xi^+$ for all $j \in [k-1]$. We then replace each $\beta_i$ where $\beta_i \in \Sigma^+$ with a new variable $z_i \in \Xi$ and add the regular constraint $(x_i \regconst \beta_i)$ to $\varphi$. This takes linear time by scanning each $\eta_i$ from left to right, and replacing each $\beta_i \in \Sigma^+$ with a new variable.

\proofsubparagraph{Condition 2.} While there exists an atom of $\varphi$ of the form $(x \logeq \alpha_1 \cdot x \cdot \alpha_2)$, we define an $\cpfc\noconstr$-formula $\psi$ with the following body:
\[(x \logeq z) \land \bigwedge_{y \in \var(\alpha_1 \cdot \alpha_2)} (y \logeq \emptyword),\]
where $z \in \Xi$ is a new variable. We then replace $(x \logeq \alpha_1 \cdot x \cdot \alpha_2)$ in $\varphi$ with $\psi$. We can show the $\psi$ is equivalent to $(x \logeq \alpha_1 \cdot x \cdot \alpha_2)$ by a simply counting argument. Given any $\sigma$ which satisfies $(x \logeq \alpha_1 \cdot x \cdot \alpha_2)$, we have that $|\sigma(x)| \mathrel{=} |\sigma(\alpha_1)| \mathrel{+} |\sigma(x)| \mathrel{+} |\sigma(\alpha_2)|$ and hence, $|\sigma(\alpha_1)| + |\sigma(\alpha_2)| = 0$, which implies that $\sigma(\alpha_1) = \sigma(\alpha_2) = \emptyword$. 

While there exists an atom of $\varphi$ of the form $\eta_i = (x_i \logeq \alpha_1 \cdot \strucvar \cdot \alpha_2)$, we can replace $\eta_i$ with the subformula $\psi$ with body:
\[ (\strucvar \logeq x_i) \land \bigwedge_{y \in \var(\alpha_1 \cdot \alpha_2)} (y \logeq \emptyword). \]

We show that replacing $\eta_i$ with $\psi$ results in an equivalent formula using a counting argument. It follows that $|\sigma(x_i)| \mathrel{=} |\sigma(\alpha_1)| \mathrel{+} |\sigma(\strucvar)| \mathrel{+} |\sigma(\alpha_2)|$. Furthermore, we know that $|\sigma(x_i)| \leq |\sigma(\strucvar)|$ and therefore it must hold that $|\sigma(x_i)| = |\sigma(\strucvar)|$, which implies that $\sigma(x_i) = \sigma(\strucvar)$. Therefore, $|\sigma(\alpha_1)| \mathrel{+} |\sigma(\alpha_2)| = 0$ which can only hold if $\sigma(\alpha_1) \cdot \sigma(\alpha_2) = \emptyword$.

The process defined takes polynomial time, since for each atom, we linearly scan the right-hand side. If it does, then we replace a word equation with $\psi$, as described above. Since we perform a linear scan, this takes $\bigO(|\varphi|)$ time.

\proofsubparagraph{Condition 3.} If two atoms are identical, then one can be removed. If $\eta_i = (x_i \logeq \alpha)$ and $\eta_j = (x_j \logeq \alpha)$ where $x_i \neq x_j$, then we can replace $\eta_j$ in $\varphi$ with $(x_j \logeq x_i)$. This takes $\bigO(|\varphi|^2)$ time by considering every pair of atoms.

Since we are always replacing a subformula of $\varphi$ with an equivalent subformula, it follows that the result of the above construction is equivalent and it is normalized. Furthermore, we have shown that the re-writing procedure defined takes $\bigO(|\varphi|^2)$ time.
\end{proof}

\section{Proof of~\autoref{lemma:subtree}}

If $T \df (V,E)$ is a tree and $V' \subset V$, then the induced subgraph of $T$ on $V'$ is the graph $G \df (V', E')$ where $e \in E'$ if and only if $e \in E$ and the two endpoints of $e$ are in the set $V'$. Notice that $G$ is not necessarily a tree, because $G$ may not be connected.

\subtree*
\begin{proof}
Let $\varphi \in \cpfc\noconstr$ be an acyclic formula defined as $\varphi \df \cqhead{\vec{x}} \bigwedge_{i=1}^{n} \eta_i$. 
Let $\decomp_\varphi \in \conclog\noconstr$ be an acyclic decomposition of $\varphi$ and let $T \df (V, E)$ be a join tree of $\decomp_\varphi$. 
By definition, $\decomp_\varphi \df \cqhead{\vec{x}} \bigwedge_{i=1}^{n} \decomp_i $ where $\decomp_i$ is a decomposition of $\eta_i$ for each $i \in [n]$. 
Since $V$ contains all atoms of $\decomp$, it follows that all atoms of $\decomp_i$ are in $V$.

Let $T^i \df (V^i, E^i)$ be the induced subgraph of $T$ on the atoms of $\decomp_i$. 
We now prove that $T^i$ is a join tree for $\decomp_i$. 
By definition, we know that all atom of $\decomp_i$ are present in $T^i$ and that no cycles exist in $T^i$ (since it is a subgraph of $T$). 
Therefore, to show that the resulting structure is a join tree, it is sufficient to show that this structure is connected. 

We prove that $T^i$ is connected by a contradiction. 
Let $(z_1 \logeq z_2 \cdot z_3), (z_4 \logeq z_5 \cdot z_6) \in V^i$ be two nodes of $T^i$ which we assume are not connected. 
Let $\mathcal{T} \df (\mathcal{V}, \mathcal{E}, <, \Gamma, \labelFunction, v_r)$ be the concatenation tree for $\decomp_i$. 
Let $v_1,v_n \in \mathcal{V}$ be non-leaf nodes of $\mathcal{T}$ such that $\atom(v_1) = (z_1 \logeq z_2 \cdot z_3)$ and $\atom(v_n) = (z_4 \logeq z_5 \cdot z_6)$. Let $(v_1, v_2, \dots, v_n)$ be the sequence of nodes which exist on the path in the concatenation tree from $v_1$ to $v_n$. 
Let $k \in [n]$ such that $v_k \in \mathcal{V}$ is the lowest common ancestor of $v_1$ and $v_n$. 
Notice that $\atom(v_i)$ and $\atom(v_{i+1})$ for all $i \in [k-1]$ share the variable that labels $v_i$. 
Therefore, since $T$ is a join tree, these nodes are connected via a path where each node that lies on that path contains the variable that labels $v_i$. 
We know that no nodes removed in the manipulation contain the variable that labels $v_i$ since this is an introduced variable for $\decomp_i$ and therefore the variable that labels $v_i$ is not present in any atom of $\decomp_j$ for any $j \in [n] \setminus \{ i \}$. 
Hence, $\atom(v_i)$ and $\atom(v_{i+1})$ must be connected for all $i \in [k-1]$ in the structure resulting from the above manipulating the join tree. 
Thus, $\atom(v_1)$ and $\atom(v_k)$ are connected in this structure, by transitivity. 
The analogous reason means that $\atom(v_n)$ and $\atom(v_k)$ are connected in $T^i$. 
Hence, $\atom(v_1)$ and $\atom(v_n)$ is connected in the resulting structure and we have reached the desired contradiction. 
If $v_1$ is an ancestor of $v_n$ (or $v_n$ is an ancestor of $v_1$), then connectivity follows trivially. 

Therefore, there is a subtree of $T \df (V,E)$ that is a join tree for the decomposition of $\eta_i$. 
Due to the fact that the body of $\decomp_\varphi$ is $\bigwedge_{i=1}^n \decomp_i$ where $\decomp_i$ is a decomposition of $\eta_i$ such that the set of introduced variables for $\decomp_i$ is disjoint from the introduced variables for $\decomp_j$, where $i \neq j$, it follows that $V^i \intersect V^j = \emptyset$ for $T^i \df (V^i, E^i)$ and $T^j \df (V^j, E^j)$. 
\end{proof}

\section{Proof of~\autoref{lemma:CyclicConditions}}
\CyclicConditions*
\begin{proof}
Let $\varphi \df \cqhead{\vec{x}} \bigwedge_{i=1}^{n} \eta_i$ be a normalized $\cpfc\noconstr$, and let $\decomp_\varphi \df \cqhead{\vec{x}} \bigwedge_{i=1}^n \decomp_i$ be the decomposition of $\varphi$ where $\decomp_i$ is a decomposition of $\eta_i$ for all $i \in [n]$. We will now prove that if any of the conditions hold, then $\varphi$ is cyclic.

\proofsubparagraph{Condition 1.} For sake of a contradiction, assume $\varphi$ is an acyclic, normalized $\cpfc\noconstr$ which is weakly cyclic. Let $T \df (V, E)$ be a join tree for $\decomp_\varphi$. From~\cref{lemma:subtree}, it follows that for each $i \in [n]$ there exists a subtree $T^i$ of $T_\varphi$ which is a join tree for a decomposition of $\eta_i$. We now construct a weak join tree for $\varphi$. Let $T_w \df (V_w,E_w)$ where $V_w \df \{ \eta_i \mid i \in [n] \}$, and $\{ \eta_i, \eta_j \} \in E_w$ if and only if there is an edge $\{v_i, v_j\} \in E$ where $v_i \in V^i$ and $v_j \in V^j$ for each $i,j \in [n]$ where $i \neq j$. We now prove that this is a weak join tree for $\varphi$. 

For sake of contradiction, assume that $T_w$ is not a weak join tree for $\varphi$. By the procedure used to compute $T_w$ we know that $V_w = \{ \eta_i \mid i \in [n] \}$, and we know that this structure is a tree (we know this because if $T_w$ is not a tree, then $T$ is not a tree). Therefore, if $T_w$ is not a join tree, it follows that there exists $\eta_i \in V_w$ and $\eta_j \in V_w$ such that there is some variable $x \in \var(\eta_i) \intersect \var(\eta_j)$ where some node $\eta_k \in V_w$ exists on the path between $\eta_i$ and $\eta_j$ in $T_w$, and $x \notin \var(\eta_k)$. If this is the case, then $x \in \var(\decomp_i) \intersect \var(\decomp_j)$, and $x \notin \var(\decomp_k)$. Hence there is a path between two nodes in $T$ which contain the variable $x \in \var(\decomp_\varphi)$, which are atoms of $\decomp_i$ and $\decomp_j$, yet there is a node on the path between these nodes which does not contain the variable $x$, which is some atom of $\decomp_k$. Therefore, $T$ is not a join tree and we have reached a contradiction. Since we have reached a contradiction, $T_w \df (V_w, E_w)$ is a weak join tree for $\varphi$ and hence if $\varphi$ is weakly cyclic, we can conclude that $\varphi$ is cyclic.

\proofsubparagraph{Condition 2.} This follows directly from~\cref{lemma:subtree}. Since for any join tree $T \df (V,E)$ of a decomposition of $\varphi$, there exists a subtree which is a join tree for some decomposition of $\eta_i$, we can conclude that if $\eta_i$ is cyclic, then $\varphi$ is cyclic.

\proofsubparagraph{Condition 3.} For sake of contradiction, assume that $\varphi$ is acyclic, and assume that $|\var(\eta_i) \intersect \var(\eta_j)| > 3$ for some $i, j \in [n]$ where $i \neq j$. Let $T \df (V, E)$ be a join tree for $\decomp_\varphi$. Let $T^i$ and $T^j$ be subtrees of $T$ which are join trees for the decompositions of $\eta_i$ and $\eta_j$ respectively. Note that these trees are disjoint. Let $(z_1 \logeq x_1 \cdot y_1)$ and $(z_2 \logeq x_2 \cdot y_2)$ be nodes of $T^i$ and $T^j$ respectively, such that  $(z_1 \logeq x_1 \cdot y_1)$ is the closest node (with regards to distance) to any node in $T^j$, and $(z_2 \logeq x_2 \cdot y_2)$ is the closest node to any node in $T^i$, these atoms are well defined because $T$ is a tree. Notice that $|\var(z_1 \logeq x_1 \cdot y_1) \intersect \var(z_2 \logeq x_2 \cdot y_2)| \leq 3$. Therefore, there is a node of $T^i$ which shares a variable with some node of $T^j$, yet this variable does not exist on the path between these nodes, since $(z_1 \logeq x_1 \cdot y_1)$ must exist on such a path.

\proofsubparagraph{Condition 4.} Towards a contradiction. Assume that $\varphi$ is acyclic and there exists $i, j \in [n]$, where $i \neq j$, such that $|\var(\eta_i ) \intersect \var(\eta_j)| = 3$ and $|\eta_i|>3$ (the other case is symmetric). Let $T \df (V, E)$ be a join tree for $\decomp_\varphi \in \conclog\noconstr$. Let $T^i$ be the subtree of $T$ which is a join tree for $\eta_i$ and let $T^j$ be the subtree of $T$ which is a join tree for $\eta_j$. Since we have that $|\eta_i| > 3$, we decompose $\eta_i$ into $\decomp_i \in \conclog\noconstr$. Note that for each atom of $\decomp_i$, there is a variable $z \in \var(\decomp_i) \setminus \var(\decomp_j)$. This holds due to the fact that the set of introduced variables for $\decomp_i$ is disjoint from the set of introduced variables for $\decomp_j$ where $i,j \in [n]$ and $i \neq j$. Therefore the maximum shared variable between an atom of $\decomp_i$ and an atom of $\decomp_j$ is $2$. Using the same argument used in Condition 3, this results in a contradiction and therefore our assumption that $\varphi$ is acyclic cannot hold. 
\end{proof}

\section{Proof of~\autoref{lemma:skeletonTree}}
\skeletonTree*
\begin{proof}
Let $\varphi \df \cqhead{\vec{x}} \bigwedge_{i=1}^{n} \eta_i$ be a normalized $\cpfc\noconstr$ and let $\decomp_\varphi \df \cqhead{\vec{x}} \bigwedge_{i=1}^{n} \decomp_i$ be an acyclic decomposition of $\varphi$ such that $\decomp_i \in \conclog\noconstr$ is the decomposition of $\eta_i$ for each $i \in [n]$. Let $T\df (V, E)$ be a join tree of $\decomp_\varphi$ and let $T_s \df (V_s, E_s)$ be the skeleton tree of $T$. We work towards a contradiction, assume $T_w \df (V_w, E_w)$ is a weak join tree for $\varphi$, but there does not exist a join tree $T' \df (V', E')$ of $\decomp_\varphi$ such that $T_w$ is the skeleton tree of $T'$. We now transform $T$ to obtain the join tree $T'$, and thus reach our contradiction. 

For each $i \in [n]$, let $T^i \df (V^i, E^i)$ be the subtree of $T$ such that $T^i$ is a join tree for $\decomp_i$. We know that these subtrees are disjoint. Let $F \df (V_f, E_f)$ be a forest where $V_f \df \bigcup_{i=1}^{n} V^i$ and $E_f \df \bigcup_{i=1}^{n} E^i$. Then, for each edge $\{ \eta_i, \eta_j \} \in E_w$, let $\chi_{i,j}$ be the atom of $\decomp_i$ and $\chi_{j,i}$ the atom of $\decomp_j$ such that these are the end nodes in the shortest path from any atom of $\decomp_i$ to any atom of $\decomp_j$ in $T$. Then, add the edge $\{ \chi_{i,j}, \chi_{j,i} \}$ to $E_f$ for each $\{ \eta_i, \eta_j \} \in E_w$. Let $T' \df (V', E')$ be the result of the above augmentation of $T$.

We now prove that $T' \df (V', E')$ is a join tree for $\decomp_\varphi$. We can see that $T'$ is a tree, every atom of $\decomp_\varphi$ is a node of $T'$, and that $\var(\chi_{i,j}) \intersect \var(\chi_{j,i}) = \var(\eta_i) \intersect \var(\eta_j)$ which holds because otherwise $T$ would not be a join tree (see Conditions 3 and 4 of~\cref{lemma:CyclicConditions}). We use this last fact to show that every node that lies on the path between any $\chi, \chi' \in V'$ where $x \in \var(\chi) \intersect \var(\chi')$, also contains the variable $x$. Without loss of generality, assume that $\chi \in V^1$ and $\chi' \in V^k$ where $V^1$ and $V^k$ are the set of vertices for the join tree for the decomposition of $\eta_1$ and $\eta_k$ respectively. Further assume that the path from $\eta_1$ to $\eta_k$ in $T_w$ consists of $\{\eta_i, \eta_{i+1} \}$ for $i \in [k-1]$. Since $T_w$ is a weak join tree, and that $\eta_1$ and $\eta_k$ both contain the variable $x$, it follows that for all $i \in [k-1]$, the word equation $\eta_i$ contains the variable $x$. Furthermore, we know that for any any edge $\{ \chi_i, \chi_{i+1} \} \in E'$, where $\chi_i \in V^i$ and $\chi_{i+1} \in V^{i+1}$, that $\var(\chi_i) \intersect \var(\chi_{i+1}) =  \var(\eta_i) \intersect \var(\eta_{i+1})$, therefore $x \in \var(\chi_i) \intersect \var(\chi_{i+1})$. Because $T^i \df (V^i, E^i)$ is a join tree for $\decomp_i$, every node that lies on the path between two nodes of $V^i$ which have the variable $x$, also has the variable $x$. Furthermore, for any edge $\{ \chi_i, \chi_{i+1} \} \in E'$, where $\chi_i \in V^i$ and $\chi_{i+1} \in V^{i+1}$, we know that $x \in \var(\chi_i) \intersect \var(\chi_{i+1})$. Hence, all nodes on the path between $\chi$ and $\chi'$ contain~$x$. 
\end{proof}

\section{Proof of~\autoref{lemma:atomDecomp}}

The following is a lemma for the ``main case'' of~\cref{lemma:atomDecomp}.

\begin{lemma}
\label{constrainedBracketings}
Given a pattern $\alpha \in \Xi^+$ and a set $ C \subseteq \{ \{ x, y \} \mid x,y \in \var(\alpha) \text{ and } x \neq y \}$. We can decide in polynomial time whether there exists an acyclic $\tilde\alpha \in \brac(\alpha)$ such that for each $\{x,y \} \in C$, either $(x \cdot y) \sqsubseteq \tilde\alpha$ or $(y \cdot x) \sqsubseteq \tilde\alpha$.
\end{lemma}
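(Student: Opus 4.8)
The plan is to adapt the bottom‑up procedure of \cref{polytime}. The starting observation is that, for $x,y\in\Xi$, a sub‑bracketing $(x\cdot y)\sqsubseteq\tilde\alpha$ (or $(y\cdot x)\sqsubseteq\tilde\alpha$) corresponds exactly to a pair of adjacent positions $p,p{+}1$ of $\alpha$ with $\{\alpha_p,\alpha_{p+1}\}=\{x,y\}$ that are combined at the very bottom of $\tilde\alpha$, i.e.\ both are leaves sharing a parent. So satisfying one constraint $\{x,y\}\in C$ means choosing one such ``forced bottom bracket'', and the whole question reduces to: is there a set $M$ of adjacent position‑pairs of $\alpha$ that (i) \emph{covers} $C$ (every $\{x,y\}\in C$ is the content of some pair in $M$), (ii) is pairwise \emph{compatible} — no two pairs of $M$ share a position, since a position lies in at most one bottom bracket, so $M$ is a matching in the path $1\text{--}2\text{--}\cdots\text{--}n$ — and (iii) \emph{extends} to an acyclic bracketing of $\alpha$ in which every pair of $M$ is a bottom bracket. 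If some constraint has no occurrence in $\alpha$ at all, we reject immediately; the preprocessing for this is trivial.

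For condition (iii) I would prove the key reformulation: an acyclic bracketing of $\alpha$ with $M$ as forced bottom brackets exists if and only if the \emph{contracted} pattern $\alpha_M$ — replace each pair $(p,p{+}1)\in M$ by a single fresh symbol, using the \emph{same} symbol for pairs of equal content — is acyclic. One direction glues a join tree for (the decomposition of) $\alpha_M$ to the extra leaf atoms $z\logeq x\cdot y$, one per distinct content in $M$, each attachable at the unique node already containing the contracted symbol; the other direction deletes those leaves from a join tree, which preserves being a join tree. The equal‑content clause matches \cref{defn:conclogConversion} (equal sub‑bracketings share a variable), and acyclicity of $\alpha_M$ is decided via \cref{polytime}/\cref{lemma:cycledistance}. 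So the algorithm is a variant of \cref{algorithm:acycPat} that, when it forms a length‑two subpattern $\alpha[p,p{+}1]$, may optionally mark it as contracted (covering the constraint $\{\alpha_p,\alpha_{p+1}\}$ if present); the $\mathsf{IsAcyclic}$‑style $x$‑localization tests then treat a contracted pair as a single leaf, exactly as a contracted symbol would behave in $\alpha_M$.

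The main obstacle is keeping this polynomial when $C$ is large: naively recording ``which subset of $C$ is covered'' in the bottom‑up set $V$ is exponential. I would resolve this by exploiting the path‑matching structure of the forced pairs: coverage and compatibility can be threaded through a left‑to‑right sweep rather than stored as a subset, interleaving the fixed‑point computation on the edge relation of \cref{algorithm:acycPat} with an interval‑by‑interval bookkeeping that, over each interval, certifies that every constraint whose content occurs there has received a representative. The genuinely delicate point — and where the real work lies — is that a position‑pair witnesses \emph{exactly} one constraint (its content is fixed), so there is no ``free'' coverage and the choices interact; the correctness argument must show that an acyclic‑plus‑covering bracketing decomposes along its concatenation tree into independently checkable pieces in precisely the way the proof of \cref{polytime} does for plain acyclicity, now with the extra invariant that each forced leaf pair is placed on a distinct position pair. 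Once this is in place, each step of the procedure only adds constant‑overhead bookkeeping, and with $|C|\le|\alpha|^2$ the running time is polynomial in $|\alpha|$, of essentially the same degree as in \cref{polytime}.
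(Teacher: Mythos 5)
Your opening reduction (``some occurrence of the constrained pair must be a bottom bracket'') is a fine restatement, but the key lemma you then build everything on --- that an acyclic bracketing of $\alpha$ with the pairs of $M$ as forced bottom brackets exists if and only if the contracted pattern $\alpha_M$ is acyclic --- is false, and it fails in the direction that makes your algorithm accept ``no''-instances. Take $\alpha \df x_1x_2x_3x_1$ and $C \df \{\{x_1,x_2\}\}$. The only adjacent occurrence of the constrained pair is at positions $(1,2)$, so the only candidate is $M=\{(1,2)\}$, and $\alpha_M = v\,x_3\,x_1$ consists of pairwise distinct variables, hence is acyclic; your test therefore answers ``yes''. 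But the only bracketings of $\alpha$ containing $(x_1\cdot x_2)$ are $(((x_1\cdot x_2)\cdot x_3)\cdot x_1)$ and $((x_1\cdot x_2)\cdot(x_3\cdot x_1))$, and both are cyclic: the second is exactly the cyclic bracketing exhibited in the proof of \cref{cycPat}, and in the first the root and the node for $(x_1\cdot x_2)$ are both $x_1$-parents whose connecting path passes through a node that is not an $x_1$-parent, so \cref{lemma:cycledistance} gives cyclicity (GYO on $(z_1 \logeq x_1\cdot x_2)\land(z_2\logeq z_1\cdot x_3)\land(\strucvar\logeq z_2\cdot x_1)$ confirms this). So the true answer is ``no''. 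The gluing direction of your equivalence is what breaks: the attached atom $z\logeq x\cdot y$ still mentions the original variables $x$ and $y$, and their \emph{other} occurrences in $\alpha$ sit inside $\alpha_M$; the join-tree condition for $x$ couples the attached leaf to those occurrences, which is precisely the information the contraction erases. The paper's proof instead establishes a much stronger global fact: in any acyclic bracketing realizing a constraint $\{x,y\}$, whenever a leaf labelled $x$ is concatenated with a sub-bracketing $\tilde\beta$ one must have $\var(\tilde\beta)=\{x,y\}$ (in particular $x$ is never bottom-paired with any $z\neq y$ anywhere in $\tilde\alpha$), and it enforces this in the bottom-up algorithm of \cref{polytime} through the restricted initialization of the edge relation and an extra check, rather than by contracting chosen occurrences.

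Separately, even if the equivalence held, the step that is supposed to keep the search over $M$ polynomial --- the left-to-right ``bookkeeping'' sweep certifying coverage of $C$ while running the fixed-point computation --- is only sketched, and you yourself flag it as where the real work lies; as written there is no argument that coverage can be threaded through without tracking subsets of $C$. In the paper this difficulty never arises, because the constraints are imposed per \emph{variable} (any admissible neighbour of a constrained variable is forced) rather than per chosen occurrence, so no coverage state has to be maintained at all.
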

\begin{proof}
We assume that every variable that appears in $C$ also appears in the input pattern, since if this does not hold, we can immediately return $\false$.
This initial check can clearly be done in polynomial time.
The algorithm used to solve the problem stated in the lemma is given in~\cref{algorithm:acycPatTwo}.
This is a variation of the algorithm given in~\cref{polytime}, but $V$ and $E'$ are initialized differently.
There is also an extra subroutine given in~\cref{algorithm:extraCheck} to deal with a special case. 
It follows from the proof of~\cref{polytime} that if then~\cref{algorithm:acycPatTwo} returns $\true$, then an acyclic concatenation tree can be derived from $E$ and $V$ in polynomial time. 
We first look at the correctness.

\begin{algorithm}
\SetAlgoLined
\SetKwInOut{Input}{Input}
\SetKwInOut{Output}{Output}
\Input{$\alpha \in \Xi^+$, where $|\alpha| = n$.}
\Output{True if $\alpha$ is acyclic, and False otherwise.}

$E' \leftarrow \{ ((i,i+1),(i,i),(i+1,i+1)) \mathrel{|} \text{ for all } c \in C \text{ we have } (i,i), (i+1,i+1) \notin c \}$;

$E' \leftarrow E' \union \{ ((i,i+1),(i,i),(i+1,i+1)) \mathrel{|} \{ (i,i), (i+1,i+1)\} \in C \}$;

$V$ is the set of nodes in $E'$;

Add $(i,i)$ to $V$ for all $i \in [n]$;

$E \leftarrow \emptyset$;
 
\While{$E' \neq E$}{
	$E \leftarrow E'$;
 	
 	\For{$i,k \in [n]$ where $i < k-1$}{
 		\For{$j \in \{i, i+1, \dots, k-1\}$ where $((i,k),(i,j),(j+1,k)) \notin E'$} {\
 			\If{$(i,j),(j+1, k) \in V$ and $\mathsf{IsAcyclic}(i,j,k, \alpha, E')$ and $\mathsf{extraCheck}(i,j,k, \alpha, C)$ }{
	        		Add $((i,k),(i,j),(j+1,k))$ to $E'$; 
	        
	        		Add $(i,k)$ to $V$;
    			}	
 		}
 	}

}

Return $\mathsf{True}$ if $(1,n) \in V$, and $\mathsf{False}$ otherwise;

\caption{A variant of the Acyclic Pattern Algorithm. The subroutine $\mathsf{IsAcyclic}$ is identical to how it was given in the proof of~\cref{polytime}. \label{algorithm:acycPatTwo}}
\end{algorithm}

\begin{algorithm}
\SetAlgoLined
\SetKwInOut{Input}{Input}
\SetKwInOut{Output}{Output}
\Input{$i,j,k,\alpha, C$}
\Output{False, if $\{x,y \} \in C$ and $x$ is concatenated to $\tilde\beta$ where $\var(\tilde\beta) \neq \{x,y\}$. \\ True, otherwise.}

\uIf{$i = j$ and there exists $\{x,y\} \in C$ where $\alpha[i,j] \in \{x,y\}$}{
	\uIf{$\var(\alpha[j+1,k]) = \{ x, y\}$ }{
		Return $\mathsf{True}$;	
	}
	\Else{
		Return $\mathsf{False}$;
	}
	
}
 
\uElseIf{$j = k$ and there exists $\{x,y\} \in C$ where $\alpha[j+1,k] \in \{x,y\}$}{
	\uIf{$\var(\alpha[i,j]) = \{ x, y\}$ }{
		Return $\mathsf{True}$;	
	}
	\Else{
		Return $\mathsf{False}$;
	}
}

\Else{  
	Return $\mathsf{True}$ \;
}
\caption{$\mathsf{extraCheck}$.\label{algorithm:extraCheck}}
\end{algorithm}

\proofsubparagraph{Correctness.}
\cref{algorithm:acycPatTwo} initializes $E'$ such that one of the following conditions must hold:
\begin{enumerate}
\item $\{((i,i+1),(i,i),(i+1,i+1))\} \in E'$ where $\{ (i,i), (i+1,i+1)\} \in C$, or
\item $\{((i,i+1),(i,i),(i+1,i+1))\} \in E'$ where for all $c \in C$ we have that $(i,i) \notin c$ and $(i+1,i+1) \notin c$.
\end{enumerate}

Furthermore, line 37 now ensures that $i < k-1$. This avoids the case where $(i,i+1)$ is added to $V$ where $(i,i+1)$ does not satisfy one of the above conditions. 

The subroutine $\mathsf{extraCase}$ ensures that if some $x \in \Xi$, where $\{x,y\} \in C$ for some $y \in \Xi$, is concatenated to some $\tilde\beta \in \brac$, then the set of variables in $\tilde\beta$ is $\{x , y \}$. 
We now consider two cases. We note that we use the shorthand $\var(\tilde\alpha)$ for any $\tilde\alpha \in \brac$ to denote the set variables that appears in $\tilde\alpha$.

\proofsubparagraph{Case 1: If $\tilde\alpha$ exists, then $\mathsf{IsAcyclic}$ returns true.}
This direction follows from the proof of~\cref{polytime}. 
However, we need to prove that the new restrictions added to the $\mathsf{IsAcyclic}$ ensures that if such an $\tilde\alpha$ (that satisfies the conditions given in the lemma statement) exists, then $\mathsf{IsAcyclic}$ still returns true.
 
Let $\alpha \in \Xi^+$ and $ C \subseteq \{ \{ x, y \} \mid x,y \in \var(\alpha) \text{ and } x \neq y \}$. Let $\tilde\alpha \in \brac(\alpha)$ such that for each $\{x,y \} \in C$, either $(x \cdot y) \sqsubseteq \tilde\alpha$ or $(y \cdot x) \sqsubseteq \tilde\alpha$. 
 
Due to the initialization of $E'$ and $V$, we know that if $(x \cdot y) \sqsubseteq \tilde\alpha$, then either there exist some $\{x,y\} \in C$, or for all $\{x',y' \} \in C$ we have that $x \notin \{x',y'\}$ and $y \notin \{x',y'\}$. 
To show that this is the correct behavior, we prove the claim that if, without loss of generality, $(x \cdot y) \sqsubseteq \tilde\alpha$ for all $\{x,y\} \in C$, and $(x \cdot z) \sqsubseteq \tilde\alpha$ where $z$ is not an element of any $\{x',y'\} \in C$, then $\tilde\alpha$ is cyclic.
To prove this claim, we work towards a contradiction.
Let $\alpha \in \Xi^+$ and assume that $\tilde\alpha \in \brac(\alpha)$ is acyclic where, without loss of generality, $(x \cdot y), (x \cdot z) \sqsubset \tilde\alpha$ and $z$ is not an element of any $\{x',y'\} \in C$ (it follows that $x \neq y$).
Let $\decomp_{\tilde\alpha} \in \conclog\noconstr$ be the decomposition of $\tilde\alpha$.
We can see that both $(z \logeq x \cdot y)$ and $(z' \logeq x \cdot z)$ are atoms of $\decomp_{\tilde\alpha}$ where $z \neq z'$.
Let $\mathcal{T} \df (\mathcal{V}, \mathcal{E}, <, \Gamma, \labelFunction, v_r)$ be the concatenation tree for $\decomp_{\tilde\alpha}$.
It follows that, there exists two nodes $v,v' \in \mathcal{V}$ where $\tau(v) = z$ and $\tau(v') = z'$ where $z$ and $z'$ are $x$-parents.
Consider the lowest common ancestor of $z$ and $z'$.
This lowest common ancestor is not an $x$-parent, since it must be a parent of two nodes labeled with an introduced variable, yet it lies on the path between $z$ and $z'$.
Hence, $\decomp_{\tilde\alpha}$ is not $x$-localized and hence $\tilde\alpha$ is cyclic.
Therefore, the initialization of $E'$ and $V$ is the correct behavior.	

Next, we look at the $\mathsf{extraCheck}$ subroutine. 
Assume that without loss of generality $(x \cdot y) \sqsubseteq \tilde\alpha$ for all $\{x, y\}$, and $\tilde\alpha$ is acyclic. 
It follows that there exists a node $v_1$ with two children $v_2$ and $v_3$ such that $\labelFunction(v_2) = x$ and $\labelFunction(v_3) = y$.
Let $\decomp_{\tilde\alpha} \in \conclog$ be the decomposition of $\tilde\alpha$, and let $\mathcal{T}$ be the concatenation tree for $\decomp_{\tilde\alpha}$.
Since $\decomp_{\tilde\alpha}$ is acyclic, it must be both $x$ and $y$ localized. 
Therefore, since $v_1$ is itself an $x$-parent, all $x$ parents form a subtree of $\mathcal{T}$ which is connected to $v_1$.
Hence, if $x$ is concatenated to $\tilde\beta$ in $\tilde\alpha$, it follows that $\var(\tilde\beta) = \{x, y\}$ must hold.  
This concludes the correctness proof for this direction.

\proofsubparagraph{Case 2: If $\mathsf{IsAcyclic}$ returns true, then $\tilde\alpha$ exists.}
If~\cref{algorithm:acycPatTwo} terminate and $(1,n) \in V$, then $\alpha$ is acyclic and we can derive a concatenation tree for some acyclic decomposition $\decomp_{\tilde\alpha}$ of $\tilde\alpha \in \brac(\alpha)$, see the proof of~\cref{polytime}. 
The derivation procedure adds edges from $E$ to the concatenation tree until the leaf nodes are all $(i,i)$ for $i \in [n]$. 
Hence, if a node has the children $(i,i)$ and $(i+1,i+1)$, it follows that these nodes must satisfy the conditions defined in the initialization of $E$.
We now show that $\{x,y\} \in C$, either $(x \cdot y) \sqsubseteq \tilde\alpha$ or $(y \cdot x) \sqsubseteq \tilde\alpha$. 
For sake of a contradiction, assume that there exists some $\{ x, y \} \in C$ such that, without loss of generality, $(x \cdot y) \sqsubseteq \tilde\alpha$ does not hold.
Due to the initialization of $E'$, it follows that there cannot exist some $(x \cdot z) \sqsubseteq \tilde\alpha$ such that $z \neq y$. 
Furthermore, if $x \in \Xi$ is concatenated to some $\tilde\beta \sqsubset \tilde\alpha$, then it follows that $\var(\tilde\beta) = \{x, y \}$. 
Hence, without loss of generality, $(x \cdot y) \sqsubseteq \tilde\beta$ holds.
We also do a preprocessing step to make sure that all the variables that appear in $C$, also appear in $\alpha$.
Therefore, the resulting concatenation tree represents an acyclic bracketing $\tilde\alpha$ of the input pattern $\alpha$, where $(x \cdot y)$ or $(y \cdot x)$ is a subbracketing of $\tilde\alpha$ for all $\{x, y\}\in C$.

\proofsubparagraph{Complexity.}
Due to the fact that~\cref{algorithm:acycPatTwo} is almost identical to the algorithm given in the proof of~\cref{polytime}, it is sufficient to prove that it takes polynomial time to initialize $V$ and $E$, and that the subroutine $\mathsf{extraCheck}$ can be executed in polynomial time. We can assume that we precompute the set $\bar{C} \df \bigcup_{s \in C} s$.

We first consider the initialization of $V$ and $E'$.
For each $i \in [n-1]$, we check whether $\{ \alpha[i], \alpha[i+1] \} \in C$, and if that is false, we check whether $\alpha[i], \alpha[i+1] \notin \bar{C}$. 
Therefore, the initialization of $E'$ takes $\bigO(n)$, since the checks for each $i \in [n-1]$ takes constant time, and adding to $E'$ takes constant time. 
Furthermore, adding all nodes of $E'$ to $V$ takes $\bigO(|E'|)$ time, and since $|E'| \in \bigO(n)$, this also takes $\bigO(n)$ time. 
Now, we consider the time complexity of the $\mathsf{extraCheck}$ subroutine. 
Deciding whether $i=j$ and $\alpha[i] \in \bar{C}$ takes constant time (line 47), and deciding whether $\var(\alpha[j+1,k]) =  \{x, y\}$ takes $\bigO(n)$ time. 
Since the other case is symmetric, the total running time of $\mathsf{extraCheck}$ is $\bigO(n)$.
Therefore, it follows form the proof of~\cref{polytime} that~\cref{algorithm:acycPatTwo} runs in time $\bigO(n^7)$.
\end{proof}

\subsection*{Actual proof of~\autoref{lemma:atomDecomp}.}
\atomDecomp*
\begin{proof}
If for all $\{ x,y \} \in C$, we have that $x,y \in \var(\alpha)$, then we know that this problem can be decided in time $\bigO(n^7)$. 
We use~\cref{constrainedBracketings} and since we can decide whether there exists an acyclic bracketing $\tilde\alpha \in \brac(\alpha)$ such that $(x \cdot y) \sqsubset \tilde\alpha$ or $(y \cdot x) \sqsubseteq \tilde\alpha$.
If such a decomposition exists, it follows that $(z_1 \logeq x \cdot y)$ or $(z_1  \logeq y \cdot x)$, for some $z_1 \in \Xi$, is an atom in the decomposition of $(z \logeq \alpha)$, where $\tilde\alpha \in \brac(\alpha)$ is the bracketing used for the decomposition. 

If for some $\{x,y\} \in C$, we have that $x = z$, then we know $y \in \var(\alpha) \setminus \{ z \}$ since $x \neq y$.  
We now claim that the acyclic decomposition $\decomp \in \conclog\noconstr$ exists, in the case where $x=z$, if and only if there exists $i,j \in \mathbb{N}$ such that $\alpha = y^i \cdot \beta \cdot  y^j$ where $\beta$ is acyclic and $|\beta|_y = 0$.

For the if direction, we give the following bracketing of $\alpha$:
\[  \tilde\alpha \df ( ( ( y \cdot (\cdots ( y \cdot (y \cdot \tilde\beta ) ) ) \cdot y ) \cdots ) \cdot y),  \]

where $\tilde\beta \in \brac(\beta)$ and the decomposition, $\decomp_{\tilde\beta}$, of $\tilde\beta$ is acyclic. We can see that $\tilde\alpha$ is decomposed $\decomp_{\tilde\alpha} \in \conclog\noconstr$ which is acyclic since $\tilde\beta$ is acyclic, and we are repeatedly prepending $y$ symbols, before repeatedly appending $y$ symbols. Therefore, $\decomp_{\tilde\alpha}$ is $y$-localized and $x'$-localized for all $x' \in \var(\decomp_{\tilde\beta})$. Furthermore, we have that $(z \logeq z' \cdot y)$, for some $z' \in \Xi$, is an atom of the decomposition. 

We now prove the only if direction. Let $\decomp_{\tilde\alpha} \in \conclog\noconstr$ be an acyclic decomposition of $(z \logeq \alpha)$ such that some atom of $\decomp_{\tilde\alpha}$ contains the variables $z$ and $y$. Let  $\mathcal{T} \df (\mathcal{V}, \mathcal{E}, <, \Gamma, \labelFunction, v_r)$ be the concatenation tree for $\tilde\alpha \in \brac(\alpha)$, where $\decomp_{\tilde\alpha}$ is the decomposition of $\tilde\alpha$. Since $z$ only appears in the root atom of $\decomp_{\tilde\alpha}$, we know that for $y$ and $z$ to appear in the same atom, the root atom of $\decomp_{\tilde\alpha}$ must contain the variable $y$ (\ie, the root atom is either $(z \logeq y \cdot z')$ or $(z \logeq z' \cdot y)$ for some $z' \in \var(\decomp_{\tilde\alpha})$). It therefore follows that there exists $\{v_1, v_2\}, \{v_1, v_3\} \in \mathcal{E}$, where $v_2 < v_3$, such that $\tau(v_1) = z$ and either $\tau(v_2) = y$ or $\tau(v_3) = y$ and where $v_1 \in \mathcal{V}$ is the root of the concatenation tree. Let $\mathcal{T}_y$ be the induced sub-tree of $\mathcal{T}$ which contains only $y$-parents along with their children. We know that $\mathcal{T}_y$ is a connected since $\decomp$ is $y$-localized since $\decomp_{\tilde\alpha}$ to be acyclic. We also know that the root of the tree is a $y$-parent. Thus, each $y$ can only contribute to the prefix or suffix of $\alpha$ and hence $\alpha = y^i \cdot \beta \cdot y^j$ where $|\beta|_y = 0$ must hold. See~\cref{fig:subConcTree} for an example of $\mathcal{T}_y$. 

\begin{figure}
\begin{tikzpicture}[shorten >=1pt,->]
\tikzstyle{vertex}=[rectangle,fill=white!25,minimum size=12pt,inner sep=2pt]
\node[vertex] (1) at (0,0) {$v_1 \; (z)$};
\node[vertex] (2) at (2,-1)   {$v_2 \; (y)$};
\node[vertex] (3) at (-2,-1)  {$v_3 \; (z_1)$};
\node[vertex] (4) at (-3,-2) {$v_4 \; (z_2)$};
\node[vertex] (5) at (-1,-2) {$v_5 \; (y)$};
\node[vertex] (6) at (-4,-3) {$v_6 \; (y)$};
\node[vertex] (7) at (-2,-3) {$v_7 \; (z_2)$};
\node[vertex] (8) at (-1,-4){$\vdots$};
\node[vertex] (9) at (-3,-4){$\vdots$};

\path [-](1) edge node[left] {} (2);
\path [-](1) edge node[left] {} (3);
\path [-](3) edge node[left] {} (4);
\path [-](3) edge node[left] {} (5);
\path [-](4) edge node[left] {} (6);
\path [-](4) edge node[left] {} (7);
\path [-](7) edge node[left] {} (8);
\path [-](7) edge node[left] {} (9);
\end{tikzpicture}
\caption{\label{fig:subConcTree}A diagram of $\mathcal{T}_y$ used to illustrate the proof of~\cref{lemma:atomDecomp}.}
\end{figure}
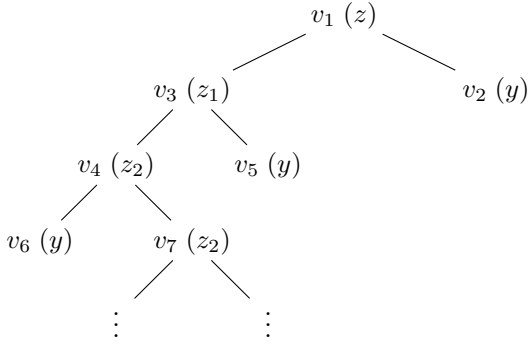

Therefore, to decide whether $(z \logeq \alpha)$ can be decomposed into an acyclic formula $\decomp \in \conclog\noconstr$ such that there exists an atom of $\decomp$ which has the variables $z$ and $y$, it is sufficient to decide whether $\alpha = y^i \cdot \beta \cdot y^j$ where $\beta$ is acyclic and $|\beta|_y =0$. 
This can obviously be decided in $\bigO(n^7)$ time by removing the prefix $y^i$ and the suffix $y^j$ in linear time, then checking whether $\beta$ is acyclic. 
Note that there can exist exactly one element of $C$ which contains the variable $z$, due to the fact that if two elements of $C$ are not disjoint, then we can decide that $\decomp$ does not exist. 
Therefore, after we have dealt with this case, we can continue with the procedure defined in~\cref{constrainedBracketings} to determine whether whether there is an acyclic decomposition $\decomp \in \conclog\noconstr$ of $\varphi$ such that for every $\{ x,y \} \in C$, there exists an atom $(z_1 \logeq z_2 \cdot z_3)$ of $\decomp$ where $\{ x,y \} \subseteq \var(z_1 \logeq z_2 \cdot z_3)$ in $\bigO(n^7)$.
\end{proof}

\section{Proof of~\autoref{theorem:LVJoinTree}}
\LVJoinTree*
\begin{proof}
Let $\varphi \df \cqhead{\vec{x}} \bigwedge_{i=1}^{m} \eta_i$ be a normalized $\cpfc\noconstr$, where $\eta_i \df (x_i \logeq \alpha_i)$ for all $i \in [m]$. We first rule out some cases where $\varphi$ must be cyclic (see~\cref{lemma:CyclicConditions}):
\begin{enumerate}
\item If $\varphi$ is weakly cyclic, then return ``$\varphi$ is cyclic'', otherwise let $T_w \df (V_w, E_w)$ be a weak join tree for $\varphi$.
\item If there exists $\{ \eta_i , \eta_j \} \in E_w$ such that $|\var(\eta_i) \intersect \var(\eta_j)| > 3$ then return ``$\varphi$ is cyclic''.
\item If there exists an edge $\{ \eta_i, \eta_j\} \in E_w$ where $|\var(\eta_i) \intersect \var(\eta_j)| = 3$ and $|\eta_i| > 3$ or $|\eta_j| > 3$, then return ``$\varphi$ is cyclic''.
\end{enumerate}

We then label every edge, $e \in E_w$, with the set of variables that the two endpoints share. 
For every atom $\eta_i$ of $\varphi$, we create the set $C_i \in \mathcal{P}(\Xi)$. 
We define $C_i$ by considering every outgoing edge of $\eta_i$ in $T_w$, and taking a union of the sets that label of those edges. We now give a construction to find an acyclic decomposition, $\decomp_\varphi \in \conclog\noconstr$, of $\varphi$, if one exists.

If $|C_i|=0$, then let $\decomp_i$ be any acyclic decomposition of $\eta_i$. 
If $\mathsf{max}_{k \in C_i} (|k|) = 1$ then let $\decomp_i$ be any acyclic decomposition of $\eta_i$. 
If $\mathsf{max}_{k \in C_i} (|k|) = 2$ then we can use~\cref{lemma:atomDecomp} to obtain the acyclic decomposition $\decomp_i$ of $\eta_i$ such that for all $k \in C_i$ where $|k| = 2$, there is an atom of $\decomp_i$ which contains the variables of $k$. 
If $\mathsf{max}_{k \in C_i} (|k|) = 3$ then we know that $|\eta_i| \leq 3$, and therefore $\decomp_i = \eta_i$ (see~\cref{lemma:CyclicConditions}). 
\begin{claim}
If there does not exist an acyclic decomposition $\decomp_i \in \conclog\noconstr$ of $\eta_i$ such that for all $k \in C_i$ where $|k| = 2$, there is an atom of $\decomp_i$ which contains all the variables of $k$, then $\varphi$ is cyclic. 
\end{claim}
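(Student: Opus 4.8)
\emph{Plan.} I would prove the contrapositive: assuming $\varphi$ is acyclic, I will produce, for the fixed index $i$ and the set $C_i$ read off from the chosen weak join tree $T_w$, an acyclic decomposition $\decomp_i\in\conclog\noconstr$ of $\eta_i$ in which every $k\in C_i$ with $|k|=2$ lies inside a single atom. The starting point is that an acyclic $\varphi$ has an acyclic decomposition, so by \cref{lemma:skeletonTree} there is an acyclic decomposition $\decomp_\varphi$ of $\varphi$ together with a join tree $T\df(V,E)$ whose skeleton tree is precisely $T_w$ (this alignment with the very $T_w$ that was used to compute $C_i$ is why I invoke \cref{lemma:skeletonTree} rather than an arbitrary join tree). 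Applying \cref{lemma:subtree}, I then partition $T$ into subtrees $T^1,\dots,T^m$, where each $T^j\df(V^j,E^j)$ is a join tree for a decomposition $\decomp_j$ of $\eta_j$; in particular $\decomp_i$ is acyclic, and it will be the witness.

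The remaining work is to verify the containment property for $\decomp_i$. Fix $k\in C_i$ with $|k|=2$; by the construction of $C_i$ it is the label of an edge $\{\eta_i,\eta_j\}\in E_w$, i.e.\ $k=\var(\eta_i)\intersect\var(\eta_j)$. Since $T_w$ is the skeleton tree of $T$, there is an edge of $T$ between $V^i$ and $V^j$, and I would first argue it is unique: collapsing each $T^l$ to a point turns the tree $T$ into a connected graph on $m$ vertices with $m-1$ edges (a standard edge count, since the $V^l$ partition the atoms of $\decomp_\varphi$), hence a tree, which must be $T_w$; so the edge $\{\eta_i,\eta_j\}$ of $T_w$ lifts to exactly one crossing edge $\{\chi_i,\chi_j\}$ of $T$ with $\chi_i\in V^i$ and $\chi_j\in V^j$. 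Now take any $x\in k$. Since $x\in\var(\eta_i)$ and a decomposition of an $\fc$ atom retains all variables of that atom (new variables are introduced, but the original ones reappear on right-hand sides of the new equations), some node of $T^i$ contains $x$, and likewise some node of $T^j$; the simple path in $T$ joining them projects to the single edge $\{\eta_i,\eta_j\}$ of $T_w$, hence crosses from $V^i$ to $V^j$ through $\{\chi_i,\chi_j\}$, so the join-tree condition forces $x\in\var(\chi_i)$. As this holds for both elements of $k$, the atom $\chi_i$ of $\decomp_i$ contains all of $k$. Different $k\in C_i$ may be absorbed by different atoms of $\decomp_i$, which is all the statement requires, so $\decomp_i$ is a decomposition of the claimed form — contradicting the hypothesis and proving that $\varphi$ is cyclic.

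The crux is the structural bookkeeping around the skeleton tree that I sketched inside the second paragraph: that collapsing the pieces $T^l$ of a join tree $T$ recovers a tree, so that two $T_w$-adjacent blocks are joined in $T$ by a single crossing edge, and that any simple $T$-path from one block to an adjacent block must run through that edge. Once this is in place, the variable-propagation argument is a direct combination of \cref{lemma:subtree}, \cref{lemma:skeletonTree}, the definition of the skeleton tree, and the join-tree property; the only other point to be careful about is the remark that a decomposition of an $\fc$ atom preserves all of the atom's variables, which is exactly what lets us locate occurrences of each shared variable in the relevant subtrees.
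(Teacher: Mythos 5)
Your proof is correct and takes essentially the same route as the paper's: both rest on \cref{lemma:subtree} together with the observation that the atom of $T^i$ that bridges towards $T^j$ in a join tree of an acyclic decomposition of $\varphi$ must, by the join-tree condition, contain every variable shared by $\eta_i$ and $\eta_j$, and hence serves as the required atom of $\decomp_i$. The only cosmetic difference is that you locate this bridge as the unique crossing edge after aligning the skeleton with the given weak join tree $T_w$ via \cref{lemma:skeletonTree}, whereas the paper argues by contradiction and simply takes the endpoint in $T^i$ of the unique shortest path between the subtrees $T^i$ and $T^j$, which makes the appeal to \cref{lemma:skeletonTree} unnecessary.
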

\begin{claimproof}
We prove this claim by working towards a contradiction. Assume that there exists $\decomp_\varphi \in \conclog\noconstr$ which is an acyclic decomposition of $\varphi$, and that there exists two atoms $\eta_i$ and $\eta_j$ such that there does not exist an acyclic decomposition $\decomp_i$ of $\eta_i$ where some atom of $\decomp_i$ is of the form $(z \logeq x \cdot y)$, where $\var(\eta_i) \intersect \var(\eta_j)  = \var(z \logeq x \cdot y) \intersect \var(\eta_j) $.

Let $T \df (V,E)$ be the join-tree for $\decomp_\varphi$. We know from~\cref{lemma:subtree} that there exists a sub-tree of $T$ which is a join tree for the decompositions of $\eta_i$ and $\eta_j$. Let $T^i$ be the sub-tree of $T$ which represents a join-tree for $\decomp_i$ (the decomposition of $\eta_i$), and let $T^j$ be the sub-tree of $T$ which is a join-tree for $\decomp_j$ (the decomposition of $\eta_j$). Let $p$ be the shortest in path in $T$ from some node in $T^i$ to some node in $T^j$. Because $T$ is a tree, this path is uniquely defined. However, there does not exist a node $(z \logeq x \cdot y)$ of $T^j$ such that $\var(\eta_i) \intersect \var(\eta_j) = \var(z \logeq x \cdot y) \intersect \var(\eta_j) $.
Therefore, there is some variable $z' \in \var(\eta_i) \intersect \var(\eta_j)$ where $z'$ is not a variable of every atom on the path $p$. Therefore $T$ is not a join tree.
\end{claimproof}	

Once we have an acyclic formula $\decomp_i \in \conclog\noconstr$ for all $i \in [m]$, we can define $\decomp_\varphi \in \conclog\noconstr$ as an acyclic decomposition of $\varphi$ as $\decomp_\varphi \df \cqhead{\vec{x}} \bigwedge_{i=1}^{m}  \decomp_i $. 

\proofsubparagraph{Complexity.} We now prove that given the normalized formula, $\varphi \in \cpfc\noconstr$, we can decide whether $\varphi$ is acyclic in polynomial time. 
First, construct a weak join tree for $\varphi$, which takes polynomial time using the GYO algorithm, and we label each edge with the variables that the two end points of that edge share (which takes $\bigO(|\varphi|^2)$ time). 
We then find an acyclic decomposition of each $\eta_i$ in polynomial time using~\cref{polytime}, and if $\eta_i$ shares two variables with another atom we use~\cref{lemma:atomDecomp} to find an acyclic decomposition in polynomial time. 
Since there are $\bigO(\formulaSize{\varphi})$ atoms of $\varphi$, constructing the decomposition, $\decomp_i$, for all atoms, $\eta_i$, of $\varphi$ takes $\bigO(|\varphi||\eta_{\mathsf{max}}|^7)$ time, where $\eta_{\mathsf{max}}$ is the largest $|\eta_i|$ of any $i \in [m]$. Then, let $\decomp_\varphi$ have the body $\bigwedge_{i=1}^{m} \decomp_i$. This last step can be done in time $\bigO(|\varphi|)$. Therefore, in time $\bigO(|\varphi| |\eta_{\mathsf{max}}|^7)$, we can construct the acyclic formula $\decomp_\varphi$.
Since $|\eta_{\mathsf{max}}| = |\varphi|$ when $m=1$, we get the final running time of $\bigO(|\varphi|^8)$.
While $\varphi$ is not necessarily normalized, we know from~\cref{lemma:normalization} that normalizing $\varphi$ can be done in $\bigO(|\varphi|^2)$. 
Therefore, this does not affect the complexity claims of this lemma.

\proofsubparagraph{Correctness.} To prove that $\decomp_\varphi$ is acyclic, we construct a join tree for $\decomp_\varphi$ using the a weak join tree $T_w \df (V_w, E_w)$ as the skeleton tree. Let $T^i \df (V^i, E^i)$ be a join tree for $\decomp_i$ for each $i \in [m]$. We now construct a join tree for $\decomp_\varphi$. Let $T \df (V,E)$ be a forest where $V \df \bigcup_{i=1}^{n} V^i$ and let $E \df \bigcup_{i=1}^{n} E^i$. We add an edge $\{ \chi_i, \chi_j \} \in E$ between some $\chi_i \in V^i$ and $\chi_j \in V^j$ if and only if $\{ \eta_i, \eta_j \} \in E_w$ and $\var(\chi_i) \intersect\var(\chi_j) = \var(\eta_i) \intersect \var(\eta_j)$. Since all atoms of $\decomp_\varphi$ are nodes of $V$, and $T$ is a tree, to show that $T \df (V, E)$ is a join tree, it is sufficient to prove that for any $\chi, \chi' \in V$ where there exists some $x \in \var(\chi) \intersect \var(\chi')$, every node that lies on the path between $\chi$ and $\chi'$ in $T$, contains the variable $x$. The proof of this is analogous to the proof of~\cref{lemma:skeletonTree}, however we include the proof here for completeness sake.

Without loss of generality, assume that $\chi \in V^1$ and $\chi' \in V^k$ where $V^1$ and $V^k$ are the set of vertices for the join tree for the decompositions of $\eta_1$ and $\eta_k$ respectively. Further assume that the path from $\eta_1$ to $\eta_k$ in $T_w$ consists of $\{\eta_i, \eta_{i+1} \}$ for all $i \in [k-1]$. Since we know that $T_w$ is a weak join tree, and that $\eta_1$ and $\eta_k$ both contain the variable $x$, it follows that for all $i \in [k]$, the word equation $\eta_i$ contains the variable $x$. We know that for any any edge $\{ \chi_i, \chi_{i+1} \} \in E$, where $\chi_i \in V^i$ and $\chi_{i+1} \in V^{i+1}$, that $\var(\chi_i) \intersect \var(\chi_{i+1}) = \var(\eta_i) \intersect \var(\eta_{i+1})$, and hence $x \in \var(\chi_i) \intersect \var(\chi_{i+1})$. Since any path between any two nodes of $V^i$ which share the variable $x$, for some $i \in [m]$, all the nodes on the path between them also contain the variable $x$ (due to the fact that $T^i \df (V^i, E^i)$ is a join tree for $\decomp_i$), and that for any edge $\{ \chi_i, \chi_{i+1} \} \in E$, where $\chi_i \in V^i$ and $\chi_{i+1} \in V^{i+1}$, we know that $x \in \var(\chi_i) \intersect \var(\chi_{i+1})$, it follows that all nodes on the path between $\chi$ and $\chi'$ contain the variable $x$. Therefore, $T \df (V, E)$ is a join tree for the decomposition $\decomp_\varphi$ of $\varphi$.
\end{proof}

\section{Proof of~\autoref{corollary:enumerationAndEvaluation}}
\enumAndEval*
\begin{proof}
For each word equation $\chi$ of $\decomp_{\varphi'}$, we can enumerate $\fun{\chi}\strucbra{w}$ in time $\bigO(|w|^3)$, since $\chi = (x_1 \logeq x_2 \cdot x_3)$, or $\chi = (x_1 \logeq x_2)$ for some $x_1, x_2, x_3 \in \Xi$. For every regular constraint $(x \regconst \gamma)$ of $\decomp_{\varphi'}$, we can enumerate $\fun{(x \regconst \gamma)}\strucbra{w}$ in polynomial time, since there are $\bigO(|w|^2)$
factors of $w$, and for each factor, the membership problem for regular expressions can be solved in polynomial time~(see Theorem 2.2 of~\cite{jiang1991note}). Since there are $\bigO(|\varphi|)$ atoms of $\decomp_{\varphi'}$, computing $\fun{\chi}\strucbra{w}$ for each atom of $\decomp_{\varphi'}$ takes time $\bigO(|\varphi| \cdot |w|^3)$. Then, we can proceed with the model checking problem and enumeration of results identically to relational $\cq$s.

The problem of model checking and enumeration reduces, in polynomial time, to the equivalent problems for standard relational acyclic conjunctive queries using the procedure we have just described. Therefore, since the model checking problem for relational acyclic conjunctive queries can be solved in polynomial time~\cite{gottlob2001complexity}, we can decide the model checking problem for acyclic $\cpfcreg$s in polynomial time. Furthermore, because we can enumerate the results of relational acyclic $\cq$s with polynomial delay, see~\cite{bagan2007acyclic}, we can enumerate $\fun{\varphi}\strucbra{w}$ with polynomial delay. We note that our database is of size $\bigO(\formulaSize{\varphi} \cdot |w|^3)$. 
\end{proof}

Note that this approach to model checking leaves room for a small optimization: Assume we are dealing with a word equation $\chi$ and regular constraint $(x \regconst \gamma)$ for some variable $x\in\var(\chi)$. 
Instead of computing $\fun{\chi}\strucbra{w}$ and $\fun{(x \regconst \gamma)}\strucbra{w}$ separately and then joining them, we include the check if $\subs(x)\in \lang(\gamma)$ in the enumeration of $\fun{\chi}\strucbra{w}$. 

That is, instead of constructing a relation with $\bigO(|w|^3)$ and a relation with $\bigO(|w|^2)$ elements and then combining them, we construct $\fun{\chi\land (x \regconst \gamma)}\strucbra{w}$ directly.
While this does not lower the time complexity -- as we still need to iterate over $\bigO(|w|^3)$ factors of~$w$ -- we can avoid construction unnecessary intermediate tables.

\section{Proof of~\autoref{prop:quasiAcyclic}}
\QuasiAcyclicSpanners*
\begin{proof}
Let $\query \df \pi_Y \left( \select^=_{x_1,y_1} \select^=_{x_2,y_2} \cdots \select^=_{x_m, y_m} \left( \gamma_1 \join \gamma_2 \cdots \join \gamma_k \right) \right)$ where each $i \in [k]$, we have that $\gamma_i \df \beta_{i_1} \cdot \bind{x_i}{\beta_{i_2}} \cdot \beta_{i_3}$ for $x_i \in \Xi$ and where $\beta_{i_1}$, $\beta_{i_2}$, and $\beta_{i_3}$ are regular expressions. 
We know define $\varphi_\query \in \cpfcreg$ such that $\varphi_\query$ is acyclic. 
\begin{itemize}
\item For every variable $x_i \in \SVars{\query}$, we add $(\strucvar \logeq x_i^P \cdot z_i)$ and $(z_i \logeq x_i^C \cdot x_i^S)$ to $\varphi_\query$.
\item For every $\gamma_i$ for $i \in [k]$, we add $(x_i^P \regconst \beta_1)$, $(x_i^C \regconst \beta_2)$ and $(x_i^S \regconst \beta_3)$ to $\varphi_\query$.
\end{itemize}

Since for any $\gamma_i$ and $\gamma_j$ for $1 \leq i, j \leq k$ where $i \neq j$ the word equations we add to $\varphi_\query$ are disjoint, it follows that $\varphi_\query$ is (so far) acyclic. 
Furthermore, $\varphi_\query$ remains acyclic after adding the regular constraints since they are unary. Next, we deal with string equality.

Let $G_\select \df (V_\select, E_\select)$ be a graph where $V_\select \df \{ x_i, y_i \mid i \in [m] \}$ and $E_\select \df \{ \{x_i, y_i \} \mid i \in [m] \}$.
Let $F_s \df (V_s, E_s)$ be a spanning forest of $G_\select$. 
For every $\{ x_i, y_i \} \in E_s$, we consider the directed edge $(x_i, y_i)$, and add the word equation $(x_i^C \logeq y_i^C)$ to $\varphi_\query$. 
Finally, for every $x \in Y$, where $Y$ is the set of variable in the projection, we add $x^P$ and $x^C$ to the head of $\varphi_\query$.

\subparagraph*{Complexity} 
First, we add two word equations to $\varphi_\query$ for every $x \in \SVars{\query}$, and for each $i \in [k]$, we add three regular constraints to $\varphi_\query$. 
Then, we create a string equality graph, and find a spanning forest of this graph. 
Finally, for every edge we add a word equation to $\varphi_\query$. 
Since it takes polynomial time to execute each of these steps, it follows that we can construct $\varphi_\query$ in polynomial time.

\subparagraph*{Correctness} 
To show that $\varphi_\query$ is acyclic, we construct a join tree. 
For each tree of $F_\select$, let an arbitrary node be the root and assume all edges are directed away from the root. 
Then, for each node $n$ we create an undirected line graph $L_n$ containing nodes $(n \logeq n')$ for all $n' \in F_\select$ where $( n, n' ) \in E_\select$, where $E_\select$ is the set of edges of $F_\select$. 
If $(n, n') \in E_\select$, then we find a node of $L_n$ containing the variable $n'$ and a node in $L_{n'}$ containing $n$ and add an edge between them -- since $(n,n') \in E_\select$, such an edge must exist. 
This results is a new forest, $F \df (G, E)$. 
Pick one node in each tree in $F$, and add edges between these nodes so that no cycles are introduced. 
It follows that $F$ is now a join tree for $\bigwedge_{i=1}^k (x_i^C \logeq y_i^C)$. 
For each variable $x_i \in \SVars{\query}$, we add the nodes $(\strucvar \logeq x_i^P \cdot z_i)$ and $(z_i \logeq x_i^C \cdot x_i^S)$ to $F$, and add an edge between $(\strucvar \logeq x_i^P \cdot z_i)$ and $(z_i \logeq x_i^C \cdot x_i^S)$, and an edge between any node of some $L_n$ that contains $x_i^C$ and $(z_i \logeq x_i^C \cdot x_i^S)$. 
Finally, we incorporate every regular constraint into the tree -- which can easily be done since a regular constraint is unary. 
Therefore, we have a join tree for $\varphi_\query$, and hence $\varphi_\query$ is acyclic.
\end{proof}

\end{document}